\pgfplotsset{compat=1.10}
\newtheorem{theorem}{Theorem}
\newtheorem{lemma}{Lemma}
\theoremstyle{definition}
\begin{document}

\title{A Framework for Fractional Matrix Programming Problems with Applications in FBL MU-MIMO
}
\author{Mohammad Soleymani, \emph{Member, IEEE},  
Eduard Jorswieck, \emph{Fellow, IEEE}, 
Robert Schober, \emph{Fellow, IEEE}, 
and
Lajos Hanzo, \emph{Life Fellow, IEEE}
 \\ \thanks{ 
Mohammad Soleymani is with the Signal and System Theory Group, University of Paderborn, 33098 Paderborn, Germany (email: \protect\url{mohammad.soleymani@uni-paderborn.de}).

Eduard Jorswieck is with the Institute for Communications Technology, Technische Universit\"at Braunschweig, 38106 Braunschweig, Germany (email: \protect\url{jorswieck@ifn.ing.tu-bs.de}).

Robert Schober is with the Institute for Digital Communications, Friedrich Alexander University of Erlangen-Nuremberg, 91058 Erlangen, Germany (email: \protect\url{robert.schober@fau.de}).

Lajos Hanzo is with the Department of Electronics and Computer Science, University of Southampton,, SO17 1BJ Southampton, United Kingdom (email:\protect\url{lh@ecs.soton.ac.uk}).

The work of Eduard A. Jorswieck was supported in part by the Federal Ministry for Research, Technology and Space (BMFTR) in Germany in the programme of “Souver\"an. Digital. Vernetzt.” Joint project 6G-RIC, project identification number: 16KISK031, and by European Union's (EU's) Horizon Europe project 6G-SENSES under Grant 101139282. Robert Schober’s work was supported by the Federal Ministry for Research, Technology and Space (BMFTR) in Germany in the program of ``Souver\"an. Digital. Vernetzt.'' joint project 6G-RIC (Project-ID 16KISK023) and the Deutsche Forschungsgemeinschaft (DFG, German Research Foundation) under projects SFB 1483 (Project-ID 442419336, EmpkinS). Lajos Hanzo would like to acknowledge the financial support of the Engineering and Physical Sciences Research Council (EPSRC) projects under grant EP/X01228X/1, EP/Y026721/1, EP/W032635/1 and EP/X04047X/1.
}}

\maketitle
\begin{abstract}
An efficient framework is conceived for fractional matrix programming (FMP) optimization problems (OPs) namely for minimization and maximization. In each generic OP, either the objective or the constraints are functions of multiple arbitrary continuous-domain fractional functions (FFs). This ensures the framework's versatility, enabling it to solve a broader range of OPs than classical FMP solvers, like Dinkelbach-based algorithms. Specifically, the generalized Dinkelbach algorithm can only solve multiple-ratio FMP problems. By contrast, our framework solves OPs associated with a sum or product of multiple FFs as the objective or constraint functions. Additionally, our framework provides a single-loop solution, while most FMP solvers require twin-loop algorithms. 

Many popular performance metrics of wireless communications are FFs. For instance, latency has a fractional structure, and minimizing the sum delay leads to an FMP problem. Moreover, the mean square error (MSE) and energy efficiency (EE) metrics have fractional structures. Thus, optimizing EE-related metrics such as the sum or geometric mean of EEs and enhancing the metrics related to spectral-versus-energy-efficiency tradeoff yield FMP problems. Furthermore, both the signal-to-interference-plus-noise ratio and the channel dispersion are FFs. In this paper, we also develop resource allocation schemes for multi-user multiple-input multiple-output (MU-MIMO) systems, using finite block length (FBL) coding, demonstrating attractive practical applications of FMP by optimizing the aforementioned metrics.

\end{abstract}
\begin{IEEEkeywords}
Finite block length coding, fractional matrix programming, latency minimization, mean square error, multi-user MIMO systems,  reconfigurable intelligent surface, spectral-energy efficiency tradeoff.
\end{IEEEkeywords}
\section{Introduction}
Fractional matrix programming (FMP) problems encompass a wide range of practical optimization problems (OPs) in wireless communications, including latency minimization, energy efficiency (EE) maximization, and mean square error (MSE) minimization. These OPs are critical, since 6G is expected to substantially improve latency, EE, and reliability compared to current wireless systems \cite{wang2023road, gong2022holographic, CIT-129}. FMP OPs are typically non-convex and very challenging to solve, even when the optimization variables are scalar or in vector format. Modern wireless communication systems mainly utilize multiple-antenna devices for enhancing both the spectral efficiency (SE) and EE. To effectively allocate resources in multiple-antenna systems, typically matrices have to be optimized, which further complicates the corresponding OPs. This issue can be even more difficult to address, when low-latency requirements should be fulfilled. In this case, finite block length (FBL) coding may have to be employed, which makes the classical Shannon rate an inaccurate metric \cite{vaezi2022cellular, polyanskiy2010channel}. In this paper, we propose a general framework for solving FMP problems, focusing on applications in multi-user (MU) multiple-input multiple-output (MIMO) systems with FBL coding.

There are several bespoke solvers for FMP problems, including the family of Dinkelbach-based algorithms \cite{dinkelbach1967nonlinear, crouzeix1991algorithms}, the Charnes-Cooper transform \cite{cooper1962programming, schaible1974parameter}, and the quadratic transform proposed by Shen and Yu in \cite{shen2018fractional}. The Dinkelbach algorithm and the Charnes-Cooper transform can solve OPs, having a single-ratio fractional function (FF) as the objective function (OF). The Dinkelabch algorithm can obtain the globally optimal solution of single-ratio fractional maximization problems when the numerator and denominator of the OF are, respectively, convex and concave. However, when the numerator is non-concave and/or the denominator is non-convex, the Dinkelabch algorithm converges to a stationary point (SP) of the OP. In this case, one usually has to employ a twin-loop approach, when using the Dinkelbach algorithm. Moreover, this algorithm cannot solve more complicated FMP OPs in the family of multiple-ratio FMPs.

To solve max-min (or min-max) multiple-ratio FMP OPs, one can employ the generalized Dinkelbach algorithm (GDA), which converges to an SP of the original OP \cite{crouzeix1991algorithms}. Indeed, unlike the Dinkelbach algorithm, the GDA does not converge to the globally optimal solution. Moreover, the GDA is a twin-loop algorithm when optimizing, for instance, the minimum of EE among multiple users. More precisely, if the numerator (and/or denominator) of even one of the FFs is non-concave (and/or non-convex), another numerical solver such as majorization minimization (MM) has to be employed in addition to the GDA to solve the FMP OP. This yields a two-loop algorithm, which might be challenging to implement and has poor convergence.   

Even though Dinkelbach-based algorithms lead themselves to the solution of both single- and multiple-ratio FMP OPs, they cannot solve more complicated FMP OPs. More specifically, when the OF and/or the constraint functions (CFs) of an FMP OP are consisted by a sum or a product of multiple FFs, Dinkelbach-based algorithms cannot solve the OP. Unfortunately, many practical FMP OPs fall into this category. This type of FMP OPs includes, for instance, the sum delay minimization \cite{peng2023non, buzzi2016survey}, the product or geometric mean of EE maximization \cite{buzzi2016survey, zappone2015energy}, and the SEE tradeoff enhancement \cite{zappone2023rate, aydin2017energy, you2020energy, you2020spectral}. Additionally, Dinkelbach-based algorithms cannot solve FMP OPs in which either the OF or the CFs are in matrix format, such as the MSE metric of MU-MIMO systems, as depicted in Table \ref{tab:com-gda}.

\begin{table}
\caption{Comparison of our framework and Dinkelbach-based algorithms, based on the problems they can solve.}
\scriptsize
    \centering
    \begin{tabular}{|l||c|c|c|c|c|c|c|c|c|c|c|}
    \hline
         &  Our framework & Dinlekbach algorithm & GDA 
        \\       
        \hline
         Single-ratio FFs &$\surd$ &$\surd$&-
         \\       
        \hline
         Multi-ratio FFs &$\surd$ &-&$\surd$
         \\       
        \hline
         Sum of FFs &$\surd$ &-&-
         \\       
        \hline
        Product of FFs&$\surd$ &-&-
        \\       
        \hline
        MSE Optimization&$\surd$ &-&-
        \\       
        \hline
        Single-loop Implementation &$\surd$ &-&-\\
        \hline
    \end{tabular}  
    \label{tab:com-gda}
    \normalsize
\end{table}

 The authors of \cite{shen2018fractional} proposed a quadratic transform for solving continuous fractional programming (FP) maximization problems. The algorithm proposed in \cite{shen2018fractional} solves a wider range of OPs than classical FP solvers, such as Dinkelbach-based algorithms. The authors of \cite{shen2018fractional} developed both power allocation and beamforming vector designs for various wireless systems, such as point-to-point communications and broadcast channels (BCs). The quadratic transform proposed in \cite{shen2018fractional} introduces a new optimization variable for each FF to convert the FF to a non-fractional format. Then, alternating optimization (AO) is employed to separate the optimization of the powers (or beamforming vectors) and the optimization of the variables introduced by the transformation. 

 By contrast, we propose a novel unified framework for solving practical FMP OPs arising in wireless communication system design, with an emphasis on low-latency applications in MU-MIMO systems. To this end, we solve a generic minimization and a generic maximization problem. In these OPs, the optimization variables are matrices, in contrast to the framework in \cite{shen2018fractional}, which focuses on optimizing scalar and vector variables. Additionally, the authors of \cite{shen2018fractional} developed a solver suitable only for maximization problems, while we propose frameworks for both maximization and minimization FMP OPs. Moreover, the OF or CFs of these FMP OPs can be arbitrary continuous functions of FFs. Furthermore, the numerator and denominator of each FF can be any arbitrary continuous function.

In addition to developing a general framework for FMP, we propose resource allocation schemes for MU-MIMO systems relying on FBL coding. There is a parcity of literature on resource allocation for MU-MIMO systems with FBL coding and multi-stream data transmission per user \cite{soleymani2024optimization, soleymani2024rate}. The paper in \cite{soleymani2024optimization} obtained a closed-form expression for the FBL rate of MU-MIMO systems in a matrix format and optimized the SE and EE of MU-MIMO ultra-reliable low-latency communication (URLLC) systems. Additionally, in \cite{soleymani2024rate}, it was shown that rate splitting multiple access (RSMA) enhances the maximum of the minimum rate and the minimum EE among multiple users in a multi-cell MU-MIMO URLLC BC. Even though the EE metrics considered in  \cite{soleymani2024optimization, soleymani2024rate} are in fractional form, the OF and CFs of the OPs studied there do not contain summations or products of multiple FFs, which are the main focus of this treatise. 

 In this paper, we first solve the considered generic OPs and then provide several practical examples for each generic FMP OP, specifically emphasizing  performance metrics that are not studied in \cite{soleymani2024optimization, soleymani2024rate, shen2018fractional}, as illustrated in Table \ref{tab:sec-i}. These metrics include the sum delay, weighted sum EE, geometric mean of delays, geometric mean of EEs, and SEE tradeoff metrics. Additionally, we consider the MSE of MU-MIMO systems, when there are multiple data streams per user. These performance metrics are essential for enhancing the SE, EE, reliability, and latency. Furthermore, we study a broader range of MU-MIMO systems compared to \cite{ shen2018fractional}.

\begin{table}
\centering
\scriptsize
\caption{Overview of most closely related works.}\label{tab:sec-i}
\begin{tabular}{|l||c|c|c|c|c|c|c|c|c|c|c|c|c|c|}
 \hline	
 &This paper&\cite{shen2018fractional}&\cite{soleymani2024optimization}&\cite{soleymani2024rate}\\
 \hline
MIMO systems&
$\surd$&$\surd$&$\surd$&$\surd$
\\
\hline
Multiple-stream data transmission per user&
$\surd$&-&$\surd$&$\surd$
\\
\hline
FBL coding&
$\surd$&-&$\surd$&$\surd$
\\
\hline
RIS&
$\surd$&-&$\surd$&$\surd$
\\
\hline
\textbf{SEE tradeoff}&$\surd$&-&-&-
\\
\hline
\textbf{Sum/geometric mean delays}&$\surd$&-&-&-
\\
\hline
\textbf{Sum/geometric mean EEs}&$\surd$&-&-&-
\\
\hline
\textbf{MSE optimization for MU-MIMO systems}&$\surd$&-&-&-
\\
\hline
		\end{tabular} 
\normalsize
\end{table}

Finally, the main contributions of this paper can be summarized as follows:
\begin{itemize}
   \item We propose a framework for FMP problems, solving a pair of generic OPs: a minimization and a maximization problem. In these generic OPs, the OF and CFs are functions of arbitrary continuous-domain non-negative FFs. These OPs encompass a wide range of practical performance metrics and problems in wireless communications, encompassing latency, reliability, EE, and SEE tradeoff metrics. Our FMP solver is an iterative algorithm converging to an SP of the original OP.  

   \item The main focus of our framework is on matrix optimizations, capable of optimizing any arbitrary continuous-domain complex-valued matrices. Hence, one can leverage our framework for optimizing beamforming matrices at transceivers, including precoders and decoders, or RIS coefficients in RIS-assisted systems. 

   \item We propose a single-loop algorithm for solving complex FMP OPs. Most FMP solvers require a twin-loop implementation, when the FFs are neither concave nor convex. However, our framework requires only a single loop, substantially reducing the implementation and computational complexities.

    \item We also propose resource allocation schemes for MU-MIMO systems relying on FBL coding. More specifically, we consider a BC, operating with and without an RIS and solve various practical OPs, including the maximization of the SEE tradeoff, the weighted sum EE, the geometric mean of EE, and the minimization of sum delay, the geometric mean of delays, maximum MSE, and weighted sum MSE. These FMP OPs are not solvable by classical FMP OPs, since the OF or CFs of these problems are either summations or products of FFs. However, our framework converges to an SP of these FMP OPs.

\end{itemize}

The paper is organized as follows. Section \ref{secii-new} presents the main results of this treatise. More specifically, Section \ref{secii-new} formulates the pair of generic FMP OPs we tackle in this paper  and proposes solutions for each. Section \ref{sec=iii} provides four practical examples of FMP minimization problems that our framework can solve, but Dinkelbach-based algorithms cannot. Section \ref{sec=iii-max} solves four commonly used FMP maximization problems that are not solvable by Dinkelbach-based algorithms. Section \ref{sec=iv} shows how the framework can be extended to RIS-aided systems. Section \ref{sec-dink} extensively compares the performance and computational complexity of our FMP solver to Dinkelbach-based algorithms. Section \ref{sec-new-vii} explores several potential directions for extending this line of research. Finally, Section \ref{sec-conclusion} concludes the paper.

\begin{table}
\centering
\scriptsize
\caption{List of most frequently used abbreviations.}\label{tab:iii}
\begin{tabular}{|l|l||l|l|c|c|c|c|c|c|c|c|c|c|c|}
 \hline	
GDA & Generalized Dinkelbach algorithm&HWI & Hardware impairment
\\
\hline	
FMP & Fractional matrix programming&OP & Optimization problem
\\\hline	
MM & Majorization minimization&CF & Constraint function
\\\hline	
CSI & Channel state information&FF & Fractional function
\\\hline	
SEE & Spectral-energy efficiency&OF & Objective function
\\\hline	
AO & Alternating optimization&GM & Geometric mean
\\\hline	
FP & Fractional programming&SP & Stationary point
\\\hline
		\end{tabular} 
\normalsize
\end{table}

\section{Problem Statement and Proposed Solutions}\label{secii-new} 
In this section, we formulate the OPs addressed and present our solutions.  {To solve these general FMP OPs, we propose iterative algorithms. Inspired by the MM framework\footnote{Please refer to Appendix \ref{app-mm} for preliminaries on the MM framework.}, each iteration of our algorithms consists of two steps. In the first step, we reformulate the FMP problem into a simpler preferably non-FMP problem, referred to as a surrogate problem. In the second step, we solve the surrogate problem to obtain an updated point. This procedure is iterated until fulfilling the convergence criteria.} Our minimization and maximization problems are discussed in separate subsections below.

\subsection{Minimization Problems}\label{sec-ii-min}
In this subsection, we aim for solving FMP minimization problems obeying the following format 
    \begin{align}\label{(1-min)}
        \min_{\{{\bf X}   \}\in \mathcal{X}  } & \sum_{m=1}^{M_0}h_{m0}(\{{\bf X}   \})
        &
        \text{s.t.}&\sum_{m=1}^{M_i}h_{mi}(\{{\bf X}   \})\leq \eta_i, \, \forall i,
    \end{align}
where $\{{\bf X}   \}=\{{\bf X}_1,{\bf X}_2,\cdots,{\bf X}_N\}$ is the set of optimization variables, and $\eta_i>0$, $\forall i$, is a known constant. Here, ${\bf X}_n$, $\forall n$, is an arbitrary complex matrix, $\mathcal{X}$ is the feasibility set of the variables, which is assumed to be a convex set, and $h_{m,i}(\{{\bf X}   \})$ is an FF, i.e., we have
\begin{equation}
    h_{mi}(\{{\bf X}   \})=\frac{f_{mi}(\{{\bf X}   \})}{g_{mi}(\{{\bf X}   \})},
\end{equation}
where $f_{mi}(\{{\bf X}   \})\geq 0$ and $g_{mi}(\{{\bf X}   \})>0$, $\forall m,i,$  are arbitrary continuous functions of $\{{\bf X}   \}$.  The OP in \eqref{(1-min)} is a non-convex FMP minimization problem, which cannot be solved by classical FMP solvers such as Dinkelbach-based algorithms. In the following theorem, we convert \eqref{(1-min)} into a simpler OP, allowing the reuse of optimization techniques such as MM. 
\begin{theorem}\label{th1-min}
    A stationary point of \eqref{(1-min)} can be obtained by iteratively solving 
    \begin{subequations}\label{(2-min)}
    \begin{align}
        \min_{\{{\bf X}   \}\in \mathcal{X}, {\bf t}, {\bf u}  } & \sum_{m=1}^{M_0}\frac{u_{m0}^{2}}{t_{m0}}
        \\
        \label{eq-2b-min}\text{\em s.t.}&\sum_{m=1}^{M_i}\frac{u_{mi}^{2}}{t_{mi}}\leq \eta_i, \forall i,
        \\
\label{eq-2d-min}        & g_{mi}(\{{\bf X}\})-t_{mi}\geq 0, \, \forall m,i,
        \\
        \label{eq-2c-min}        & 2\sqrt{f_{mi}^{(z)}}u_{mi} - f_{mi}^{(z)}\geq f_{mi}(\{{\bf X}\}),\,\, \forall m,i,
    \\
    \label{eq-2e-min}&
        t_{mi}> 0,\,\,  u_{mi}\geq 0, \, \forall m,i,
    \end{align}
\end{subequations}
where $ {\bf t}=\{t_{mi} \}_{\forall mi}$ and $ {\bf u}=\{u_{mi} \}_{\forall mi}$ are the sets of auxiliary optimization variables,  $z$ is the iteration index, $f_{mi}^{(z)}=f_{mi}(\{{\bf X}^{(z)}\})$, and 
 $\{{\bf X}^{(z)}\}=\{{\bf X}_1^{(z)},{\bf X}_2^{(z)},\cdots,{\bf X}_N^{(z)}\}$.
\end{theorem}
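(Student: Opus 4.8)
The plan is to recognize \eqref{(2-min)} as one step of a majorization--minimization (MM) procedure applied to \eqref{(1-min)}, and then to invoke the standard convergence theory for MM with a surrogate objective and inner-approximated constraints. Concretely, I would (a) eliminate the lifting variables $\mathbf t,\mathbf u$ from \eqref{(2-min)} to expose the surrogate it induces for \eqref{(1-min)}; (b) verify the three majorant conditions (global upper bound, value matching, gradient matching) for both the objective and the constraint functions; and (c) conclude stationarity of the limit points.

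\emph{Step 1: eliminating the lifting variables.} For a fixed $\{\mathbf X\}\in\mathcal X$, the scalar map $(u,t)\mapsto u^2/t$ is jointly convex on $\mathbb R_{\ge 0}\times\mathbb R_{>0}$, nondecreasing in $u$, and nonincreasing in $t$; moreover, for fixed $\{\mathbf X\}$ the constraints \eqref{eq-2d-min}--\eqref{eq-2e-min} and \eqref{eq-2c-min} bound each pair $(u_{mi},t_{mi})$ independently, while \eqref{eq-2b-min} and the objective are coordinatewise nondecreasing in the terms $u_{mi}^2/t_{mi}$. Hence each such term is minimized by taking $t_{mi}$ as large and $u_{mi}$ as small as \eqref{eq-2d-min} and \eqref{eq-2c-min} permit, i.e.\ $t_{mi}=g_{mi}(\{\mathbf X\})>0$ and $u_{mi}=\dfrac{f_{mi}(\{\mathbf X\})+f_{mi}^{(z)}}{2\sqrt{f_{mi}^{(z)}}}\ge 0$, so that \eqref{eq-2e-min} is automatically satisfied. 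Substituting these minimizers, \eqref{(2-min)} has the same optimal $\{\mathbf X\}$ as
\[
\min_{\{\mathbf X\}\in\mathcal X}\ \sum_{m=1}^{M_0}\widetilde h_{m0}^{(z)}(\{\mathbf X\})\qquad\text{s.t.}\qquad \sum_{m=1}^{M_i}\widetilde h_{mi}^{(z)}(\{\mathbf X\})\le\eta_i,\ \forall i,
\]
with $\widetilde h_{mi}^{(z)}(\{\mathbf X\})=\dfrac{\big(f_{mi}(\{\mathbf X\})+f_{mi}^{(z)}\big)^2}{4\,f_{mi}^{(z)}\,g_{mi}(\{\mathbf X\})}$; note that the lifted form \eqref{(2-min)} is itself non-fractional, which is what makes a single outer loop sufficient.

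\emph{Step 2: surrogate properties.} Writing $q_z(f)=\dfrac{(f+f_{mi}^{(z)})^2}{4f_{mi}^{(z)}}$, one has $q_z(f)-f=\dfrac{(f-f_{mi}^{(z)})^2}{4f_{mi}^{(z)}}\ge 0$, $q_z(f_{mi}^{(z)})=f_{mi}^{(z)}$, and $q_z'(f_{mi}^{(z)})=1$, so $q_z$ is the quadratic that touches the identity tangentially from above at $f_{mi}^{(z)}$. Since $\widetilde h_{mi}^{(z)}=q_z(f_{mi})/g_{mi}$ and $f_{mi}(\{\mathbf X^{(z)}\})=f_{mi}^{(z)}$, this yields the three MM conditions: (i) $\widetilde h_{mi}^{(z)}(\{\mathbf X\})\ge h_{mi}(\{\mathbf X\})$ for all $\{\mathbf X\}$ (equivalently, from AM--GM, $f_{mi}+f_{mi}^{(z)}\ge 2\sqrt{f_{mi}f_{mi}^{(z)}}$); (ii) $\widetilde h_{mi}^{(z)}(\{\mathbf X^{(z)}\})=h_{mi}(\{\mathbf X^{(z)}\})$; and (iii) $\nabla\widetilde h_{mi}^{(z)}(\{\mathbf X^{(z)}\})=\nabla h_{mi}(\{\mathbf X^{(z)}\})$, by the quotient/chain rule together with $q_z'(f_{mi}^{(z)})=1$. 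Consequently the reduced objective majorizes $\sum_m h_{m0}$ and is tangent to it at $\{\mathbf X^{(z)}\}$, each reduced constraint function majorizes $\sum_m h_{mi}$, and therefore the $\{\mathbf X\}$-feasible set of \eqref{(2-min)} is an inner approximation of the feasible set of \eqref{(1-min)} that contains $\{\mathbf X^{(z)}\}$ and is tangent to it there.

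\emph{Step 3: convergence, and the main obstacle.} Because solving \eqref{(2-min)} in iteration $z$ returns a minimizer $\{\mathbf X^{(z+1)}\}$ of the reduced problem, one outer iteration is exactly one MM / inner-approximation step for \eqref{(1-min)}; the majorant property makes $\sum_m h_{m0}(\{\mathbf X^{(z)}\})$ nonincreasing and every iterate feasible for \eqref{(1-min)}. Combining this monotonicity with the value- and gradient-matching conditions (ii)--(iii) and a standard constraint qualification on $\mathcal X$ and on $\{\{\mathbf X\}:\sum_m h_{mi}(\{\mathbf X\})\le\eta_i\}$, the convergence result for MM with surrogate constraints recalled in Appendix~\ref{app-mm} (see also the successive convex / inner approximation literature) gives that every limit point of $\{\mathbf X^{(z)}\}$ is a KKT / stationary point of \eqref{(1-min)}. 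I expect the genuinely delicate point to be the constrained part: unlike plain objective-only MM, the fractional functions here also appear in the constraints, so the argument must rely on property (iii) --- the gradient matching of the constraint surrogates --- to ensure that the KKT multipliers of the surrogate certify stationarity of the original problem, and one must argue that a constraint qualification persists along the iterates. A minor separate caveat is the degenerate case $f_{mi}^{(z)}=0$, where $\sqrt{f_{mi}^{(z)}}=0$ and $h_{mi}$ already attains its minimum value $0$; such a term is handled separately, or one simply assumes $f_{mi}>0$, as holds in all the applications treated later.
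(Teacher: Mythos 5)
Your proposal is correct, and it establishes the theorem by a route that differs in execution from the paper's. The paper's proof (Appendix \ref{app-1-min}) stays in the lifted space: it first shows that \eqref{(1-min)} is equivalent to the lifted problem with the constraints $u_{mi}^2-f_{mi}(\{\mathbf X\})\geq 0$ and $g_{mi}(\{\mathbf X\})-t_{mi}\geq 0$ (cf.\ \eqref{eq-2c-min-main}), and then convexifies the non-convex constraint by linearizing $u_{mi}^2$ at $u_{mi}^{(z)}=\sqrt{f_{mi}^{(z)}}$ via Lemma \ref{lem=5} (the CCP step in \eqref{eq-78-n}), so that \eqref{(2-min)} is one MM/CCP iteration for that lifted reformulation. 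You instead marginalize the auxiliary variables out of \eqref{(2-min)}, exhibit the induced surrogate $\tilde h^{(z)}_{mi}=\bigl(f_{mi}+f^{(z)}_{mi}\bigr)^2/\bigl(4f^{(z)}_{mi}g_{mi}\bigr)$ in the original variable space, and verify the three MM conditions (global majorization, value matching, gradient matching) for the fractional functions $h_{mi}$ themselves before invoking MM/inner-approximation convergence. What your route buys: it makes fully explicit what the paper only asserts, namely that the approximation in \eqref{eq-2c-min} satisfies the three conditions of Appendix \ref{app-mm} for both the objective and the constraint functions (the tangency of the constraint surrogates being exactly what lets the surrogate KKT points certify stationarity of \eqref{(1-min)}), and you also correctly flag the constraint-qualification requirement and the degenerate case $f^{(z)}_{mi}=0$, both left implicit in the paper. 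What the paper's route buys: it is shorter, works directly with the problem that is actually solved numerically (auxiliary variables retained), and avoids the partial-minimization argument needed to justify that eliminating $(\mathbf u,\mathbf t)$ preserves the minimizers in $\{\mathbf X\}$ — an argument you do carry out correctly, using the monotonicity of $u^2/t$ and the fact that each pair $(u_{mi},t_{mi})$ is constrained separately. The two proofs hinge on the same choice $u^{(z)}_{mi}=\sqrt{f^{(z)}_{mi}}$; your scalar majorizer $q_z(f)=(f+f^{(z)})^2/(4f^{(z)})$ is precisely the paper's linearization composed with the elimination of $u_{mi}$ and $t_{mi}$.
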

\begin{proof}
    Please refer to Appendix \ref{app-1-min}.
\end{proof}
\subsubsection{When $f_{mi}$ is convex and $g_{mi}$ is concave $\forall m,i$} 
In this case, \eqref{(2)} is a convex OP, solvable by numerical tools such as CVX \cite{grant2014cvx}. Moreover, the framework converges to an SP of \eqref{(1-min)} since the approximation in \eqref{eq-2c-min} fulfills the three conditions in Appendix \ref{app-mm}, as shown in the proof of Theorem \ref{th1-min}.

\subsubsection{General case}
Unfortunately, in many practical FMP problems of wireless communications, the functions $f_{mi}(\{{\bf X}   \})$ and $g_{mi}(\{{\bf X}   \})$ are neither convex nor concave in $\{{\bf X}\}$. In this case, an SP of \eqref{(1-min)} can be found using the following Lemma.
\begin{lemma}\label{cor-min}
    When $f_{mi}(\{{\bf X}   \})$ and $g_{mi}(\{{\bf X}   \})$ are neither convex nor concave, a stationary point of \eqref{(1-min)} can be found by iteratively solving 
    \begin{subequations}\label{(2-min-cor)}
    \begin{align}
        \min_{\{{\bf X}   \}\in \mathcal{X}, {\bf t}, {\bf u}  } & \sum_{m=1}^{M_0}\frac{u_{m0}^{2}}{t_{m0}}
        \\
        \text{\em s.t.}\,\,\,\, & \tilde{g}_{mi}(\{{\bf X}\})-t_{mi}\geq 0, \, \forall m,i,
         \\
         & 2\sqrt{f_{mi}^{(z)}}u_{mi} - f_{mi}^{(z)}\geq \tilde{f}_{mi}(\{{\bf X}\}),\,\, \forall m,i,,
    \\         
         &\eqref{eq-2b-min}, \eqref{eq-2e-min}
    \end{align}
\end{subequations}
where $ {\bf t}$, $ {\bf u}$,  $z$, and 
 $\{{\bf X}^{(z)}\}=\{{\bf X}_1^{(z)},{\bf X}_2^{(z)},\cdots,{\bf X}_N^{(z)}\}$ are defined as in Theorem \ref{th1-min}. Additionally, $\tilde{f}_{mi}(\{{\bf X}   \})$ is a convex \textit{upper bound} for $f_{mi}(\{{\bf X}   \})$, and $\tilde{g}_{mi}(\{{\bf X}   \})$ is a concave \textit{lower bound} for $g_{mi}(\{{\bf X}   \})$, satisfying the following conditions:
\begin{itemize}
    \item $\tilde{f}_{mi}(\{{\bf X}^{(z)}   \})=f_{mi}(\{{\bf X}^{(z)}   \})$ and  $\tilde{g}_{mi}(\{{\bf X}^{(z)}   \})=g_{mi}(\{{\bf X}^{(z)}   \})$, 
    \item $\tilde{f}_{mi}(\{{\bf X}   \})>f_{mi}(\{{\bf X}   \})$ and  $\tilde{g}_{mi}(\{{\bf X}   \})<g_{mi}(\{{\bf X}   \})$, 
    \item $\left. \frac{\partial \tilde{f}_{mi}(\{{\bf X}   \})}{\partial \{{\bf X}   \}}\right|_{\{{\bf X}   \}=\{{\bf X}^{(z)}\}}
    =\left. \frac{\partial {f}_{mi}(\{{\bf X}   \})}{\partial \{{\bf X}   \}}\right|_{\{{\bf X}   \}=\{{\bf X}^{(z)}\}}$ and  $\left. \frac{\partial \tilde{g}_{mi}(\{{\bf X}   \})}{\partial \{{\bf X}   \}}\right|_{\{{\bf X}   \}=\{{\bf X}^{(z)}\}}
    =\left. \frac{\partial {g}_{mi}(\{{\bf X}   \})}{\partial \{{\bf X}   \}}\right|_{\{{\bf X}   \}=\{{\bf X}^{(z)}\}}$, 
\end{itemize}
for all $m,i$ and $\{{\bf X}   \}\in \mathcal{X}$.
\end{lemma}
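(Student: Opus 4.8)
The plan is to reduce Lemma \ref{cor-min} to Theorem \ref{th1-min} by showing that the extra majorization layer preserves stationarity. First I would observe that the OP \eqref{(2-min-cor)} is exactly \eqref{(2-min)} except that the true functions $f_{mi}$ and $g_{mi}$ appearing in constraints \eqref{eq-2d-min} and \eqref{eq-2c-min} are replaced by the surrogates $\tilde f_{mi}$ and $\tilde g_{mi}$. So I would first invoke Theorem \ref{th1-min} to assert that \emph{if} we could solve \eqref{(2-min)} directly we would reach an SP of \eqref{(1-min)}; the remaining task is to argue that iterating \eqref{(2-min-cor)} is a legitimate inner-majorization of that scheme that inherits the same fixed points and the same first-order behaviour. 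The key tool is that $2\sqrt{f_{mi}^{(z)}}u_{mi}-f_{mi}^{(z)}\geq \tilde f_{mi}(\{{\bf X}\})$ together with $\tilde f_{mi}\geq f_{mi}$ implies $2\sqrt{f_{mi}^{(z)}}u_{mi}-f_{mi}^{(z)}\geq f_{mi}(\{{\bf X}\})$, i.e. the feasible set of \eqref{(2-min-cor)} is contained in that of \eqref{(2-min)}; likewise $\tilde g_{mi}(\{{\bf X}\})-t_{mi}\geq 0$ with $\tilde g_{mi}\leq g_{mi}$ gives $g_{mi}(\{{\bf X}\})-t_{mi}\geq 0$. Hence \eqref{(2-min-cor)} is a restriction of \eqref{(2-min)}.

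Next I would verify the three MM conditions (tangency, global bound, gradient match — exactly the bullet list in the statement, which is imposed as a hypothesis on $\tilde f_{mi},\tilde g_{mi}$) at the current iterate $\{{\bf X}^{(z)}\}$. At $\{{\bf X}\}=\{{\bf X}^{(z)}\}$ the surrogates coincide with the originals and have the same gradients, so the feasible region of \eqref{(2-min-cor)} touches that of \eqref{(2-min)} at the current point and the two regions are first-order tangent there. Combined with the fact that the objective $\sum_m u_{m0}^2/t_{m0}$ is unchanged, this means: (i) the current point $(\{{\bf X}^{(z)}\},{\bf t}^{(z)},{\bf u}^{(z)})$ — with $t_{mi}^{(z)},u_{mi}^{(z)}$ chosen as the corresponding auxiliary values — remains feasible for \eqref{(2-min-cor)}, so the algorithm is monotone (the optimal value does not increase across iterations) and, since the original objective is bounded below by $0$ and the constraint levels $\eta_i>0$ are fixed, it converges; (ii) at a fixed point, the KKT system of \eqref{(2-min-cor)} reduces, via the tangency and gradient-matching conditions on $\tilde f_{mi},\tilde g_{mi}$, to the KKT system of \eqref{(2-min)} at that same point, which by Theorem \ref{th1-min} is the KKT system of \eqref{(1-min)}. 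I would spell out the auxiliary-variable elimination: at optimality in \eqref{(2-min-cor)} the constraints \eqref{eq-2c-min}-type and \eqref{eq-2d-min}-type hold with equality (otherwise one could decrease $u_{mi}$ or increase $t_{mi}$ and strictly improve), which pins $u_{mi}=\sqrt{f_{mi}(\{{\bf X}\})}$ and $t_{mi}=g_{mi}(\{{\bf X}\})$ at a fixed point and recovers $u_{mi}^2/t_{mi}=h_{mi}(\{{\bf X}\})$, exactly as in the proof of Theorem \ref{th1-min}.

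The main obstacle I anticipate is the careful handling of the composition of two nested surrogates — the outer quadratic/linear transform of Theorem \ref{th1-min} and the inner convex/concave bounds $\tilde f_{mi},\tilde g_{mi}$ — when writing the stationarity (KKT) equations, because one must show the Lagrange multipliers of the inner surrogate constraints evaluated at a fixed point combine with the chain rule so that $\partial \tilde f_{mi}/\partial {\bf X}_n$ and $\partial \tilde g_{mi}/\partial {\bf X}_n$ can be swapped for $\partial f_{mi}/\partial {\bf X}_n$ and $\partial g_{mi}/\partial {\bf X}_n$ without leaving residual terms. This is where the third bullet (gradient consistency) is essential, and it must be used in conjunction with complementary slackness on \eqref{eq-2d-min}/\eqref{eq-2c-min}. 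A secondary, more technical point is guaranteeing feasibility of \eqref{(2-min-cor)} at each step (a nonemptiness/Slater-type argument so that CVX-type solvers and the KKT characterization apply); I would dispatch this by noting that the previous iterate, augmented with $t_{mi}^{(z)}=g_{mi}(\{{\bf X}^{(z)}\})$ and $u_{mi}^{(z)}=\sqrt{f_{mi}(\{{\bf X}^{(z)}\})}$, is always feasible thanks to the tangency conditions, which also supplies the monotonicity needed for convergence. The rest — convexity of \eqref{(2-min-cor)} given that $u^2/t$ is jointly convex for $t>0$, $\tilde f_{mi}$ convex and $\tilde g_{mi}$ concave — is routine and I would state it without detailed computation.
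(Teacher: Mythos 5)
Your proposal is correct and follows essentially the same route as the paper: the paper likewise reformulates \eqref{(1-min)} with the auxiliary variables $u_{mi},t_{mi}$, convexifies the non-convex constraints by combining the quadratic lower bound on $u_{mi}^2$ (with $u_{mi}^{(z)}=\sqrt{f_{mi}(\{{\bf X}^{(z)}\})}$) with the surrogates $\tilde f_{mi},\tilde g_{mi}$, and concludes convergence to an SP because all surrogates satisfy the MM conditions of Appendix \ref{app-mm}. The only difference is one of detail: where the paper simply appeals to the MM framework, you spell out the feasible-set containment, feasibility of the previous iterate, monotonicity, and the KKT/fixed-point correspondence, all of which are consistent with the paper's argument.
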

\begin{proof}
    Please refer to Appendix \ref{app-prof-lem1}. 
\end{proof}
 {Inspired by MM, our framework utilizes an iterative approach, consisting of two steps in each iteration: majorization and minimization. Indeed, we majorize $h_{mi}(\{{\bf X}\})$ to reformulate \eqref{(1-min)} as a simpler surrogate OP, as shown in Fig. \ref{Fig-mm}a. When $f_{mi}$ is non-convex and/or $g_{mi}$ is  non-concave, we employ the following surrogate function 
\begin{equation}\label{eq:sur-func-h}
    \tilde{h}_{mi}(\{{\bf X}   \})=\frac{\tilde{f}_{mi}(\{{\bf X}   \})}{\tilde{g}_{mi}(\{{\bf X}   \})}\geq h_{mi}(\{{\bf X}   \}),\,\,\, \forall m,i,
\end{equation}
where $\tilde{f}_{mi}(\{{\bf X}   \})$ and $\tilde{g}_{mi}(\{{\bf X}   \})$ fulfill the conditions formulated in Lemma \ref{cor-min}.}
Algorithm I summarizes our solution for \eqref{(1-min)} in the general case.
\doublespacing 
\begin{table}[htb]
\small
\begin{tabular}{l}
\hline 
 \textbf{Algorithm I}: Our framework to solve \eqref{(1-min)} in the general case.  \\
\hline
\hspace{0.2cm}{\textbf{Initialization}}\\
\hspace{0.2cm}Set $\delta$,  $z=1$,  $\{\mathbf{X}\}=\{\mathbf{X}^{(0)}\}$\\
\hline 
\hspace{0.2cm}
\textbf{While} $
\frac{
|\sum_{m=1}^{M_0}h_{m0}(\{{\bf X}^{(z)}   \})
-
\sum_{m=1}^{M_0}h_{m0}(\{{\bf X}^{(z-1)}   \})|
}{
\sum_{m=1}^{M_0}h_{m0}(\{{\bf X}^{(z-1)}   \})
}\geq\delta$\\ 
\hspace{.6cm}{Calculate $\{{\bf X}^{(z)}\}$ by solving \eqref{(2-min-cor)}}\\
\hspace{.6cm}$z=z+1$\\
\hspace{0.2cm}\textbf{End (While)}\\
\hspace{0.2cm}{{\bf Return} $\{\mathbf{W}^{(\star)}\}$}\\
\hline 
\end{tabular}  
\end{table}
\singlespacing

\begin{figure}[t]
    \centering
    \begin{subfigure}[t]{0.24\textwidth}
        \centering
           \includegraphics[width=\textwidth]{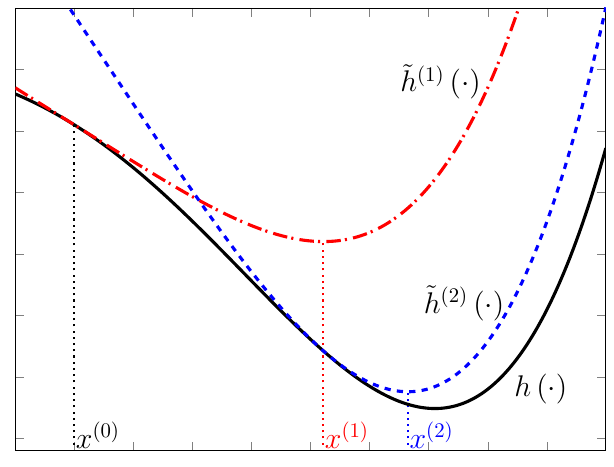}
        \caption{Minimization problem.}
    \end{subfigure}
\begin{subfigure}[t]{0.24\textwidth}
        \centering       \includegraphics[width=\textwidth]{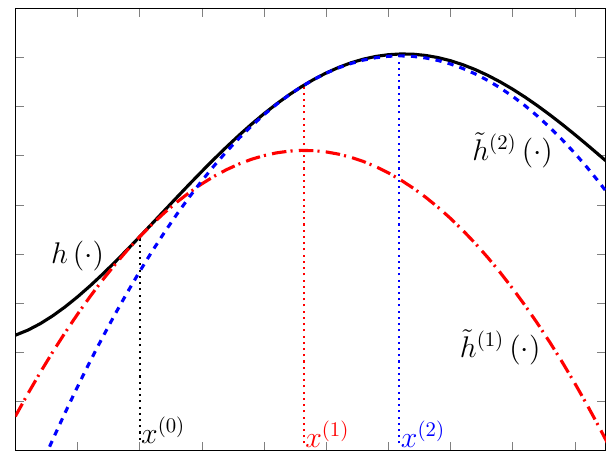}
        \caption{Maximization problem.}
    \end{subfigure}%
    \caption{Examples of iterative MM-based algorithms.} 
    \label{Fig-mm} 
\end{figure}

\subsection{Maximization Problems}\label{sec+ii}
In this subsection, we aim for solving 
    \begin{align}\label{(1)}
        \max_{\{{\bf X}   \}\in \mathcal{X}  } & \sum_{m=1}^{M_0}h_{m0}(\{{\bf X}   \})
        &
        \text{s.t.}&\sum_{m=1}^{M_i}h_{mi}(\{{\bf X}   \})\geq 0, \, \forall i,
    \end{align}
where $\{{\bf X}   \}$, and $h_{m,i}(\{{\bf X}   \})$ $\forall m,i,$ are defined as in \eqref{(1-min)}.  The OP in \eqref{(1)} is a non-convex FMP problem, not solvable by classical FMP solvers, such as Dinkelbach-based algorithms. In the following theorem, we reformulate \eqref{(1)} as a non-FMP problem to attain its stationary point.
\begin{theorem}\label{th1}
    A stationary point of \eqref{(1)} can be obtained by iteratively solving 
    \begin{subequations}\label{(2)}
    \begin{align}
        \max_{\{{\bf X}   \}\in \mathcal{X}, {\bf t}  } & \sum_{m=1}^{M_0}\tilde{h}_{m0}^{(z)}(\{{\bf X}\})
        \\
        \label{eq-3b}\text{\em s.t.}&\sum_{m=1}^{M_i}\tilde{h}_{mi}^{(z)}(\{{\bf X}\})\geq 0, 
        \\
\label{eq-3c}        & f_{mi}(\{{\bf X}\})-t_{mi}^2\geq 0, \, \forall m,i,
        \\
        \label{eq-3d}&
        t_{mi}\geq 0, \, \forall m,i,
    \end{align}
\end{subequations}
where $ {\bf t}=\{t_{mi} \}_{\forall mi}$ is a set of auxiliary optimization variables,  $z$ is the iteration index, and 
\begin{subequations}\label{(4)}
\begin{align}
\tilde{h}_{mi}^{(z)}(\{{\bf X}\})&=2a_{mi}^{(z)}t_{mi}- a_{mi}^{(z)^2}g_{mi}(\{{\bf X}\}),& \forall m,i,\\
    a_{mi}^{(z)}&=\frac{\sqrt{f_i(\{{\bf X}^{(z)}   \})}}{g_i(\{{\bf X}^{(z)}   \})}, & \forall m,i,
\end{align}
\end{subequations}
where $\{{\bf X}^{(z)}\}=\{{\bf X}_1^{(z)},{\bf X}_2^{(z)},\cdots,{\bf X}_N^{(z)}\}$.
\end{theorem}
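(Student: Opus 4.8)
The plan is to derive \eqref{(2)} from \eqref{(1)} by a two-stage transformation of every fractional function, and then to recognise the resulting iteration \eqref{(2)}--\eqref{(4)} as a minorize--maximize (MM) scheme, so that the convergence machinery recalled in Appendix \ref{app-mm} applies.

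\emph{Stage 1 (decoupling each numerator).} For every pair $(m,i)$ I would introduce an auxiliary variable $t_{mi}\ge 0$ with the constraint $t_{mi}^{2}\le f_{mi}(\{{\bf X}\})$, that is, \eqref{eq-3c}--\eqref{eq-3d}. Since $f_{mi}\ge 0$ and $g_{mi}>0$, the map $t\mapsto t^{2}/g_{mi}(\{{\bf X}\})$ is nondecreasing on $[0,\infty)$; hence for fixed $\{{\bf X}\}$ its largest admissible value is $f_{mi}(\{{\bf X}\})/g_{mi}(\{{\bf X}\})=h_{mi}(\{{\bf X}\})$, attained at $t_{mi}=\sqrt{f_{mi}(\{{\bf X}\})}$. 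Applying this to the objective (which is maximised) and to each constraint (for which a larger $\sum_{m}t_{mi}^{2}/g_{mi}$ only relaxes $\sum_{m}h_{mi}\ge 0$), one shows that \eqref{(1)} is equivalent --- same optimal value and same stationary points --- to maximising $\sum_{m}t_{m0}^{2}/g_{m0}(\{{\bf X}\})$ over $\{{\bf X}\}\in\mathcal{X}$ and ${\bf t}$, subject to $\sum_{m}t_{mi}^{2}/g_{mi}(\{{\bf X}\})\ge 0$ for all $i$ and to \eqref{eq-3c}--\eqref{eq-3d}.

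\emph{Stage 2 (quadratic transform of $t^{2}/g$).} Next I would linearise each term $t_{mi}^{2}/g_{mi}(\{{\bf X}\})$ by the quadratic transform of \cite{shen2018fractional}: completing the square gives, for every real $a$, $t_{mi}^{2}/g_{mi}(\{{\bf X}\})-\bigl(2a\,t_{mi}-a^{2}g_{mi}(\{{\bf X}\})\bigr)=\bigl(t_{mi}/\sqrt{g_{mi}(\{{\bf X}\})}-a\sqrt{g_{mi}(\{{\bf X}\})}\bigr)^{2}\ge 0$, with equality iff $a=t_{mi}/g_{mi}(\{{\bf X}\})$. Freezing $a$ at the equality-achieving value of the current iterate, namely $a=a_{mi}^{(z)}$ of \eqref{(4)} evaluated with $t_{mi}=\sqrt{f_{mi}(\{{\bf X}^{(z)}\})}$, replaces $t_{mi}^{2}/g_{mi}(\{{\bf X}\})$ by the surrogate $\tilde h_{mi}^{(z)}(\{{\bf X}\})$ in \eqref{(4)}, and \eqref{(2)} is exactly the surrogate problem so obtained. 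I would then verify the three properties that make this a valid MM step: (i) tightness, $\tilde h_{mi}^{(z)}(\{{\bf X}^{(z)}\})=h_{mi}(\{{\bf X}^{(z)}\})$, obtained by inserting $a_{mi}^{(z)}$ and $t_{mi}=\sqrt{f_{mi}(\{{\bf X}^{(z)}\})}$; (ii) the global minorization $\tilde h_{mi}^{(z)}(\{{\bf X}\})\le t_{mi}^{2}/g_{mi}(\{{\bf X}\})\le h_{mi}(\{{\bf X}\})$, valid for every $(\{{\bf X}\},t_{mi})$ feasible for \eqref{(2)}, which in particular shows that feasibility for \eqref{(2)} implies feasibility for \eqref{(1)}; and (iii) gradient consistency at $\{{\bf X}^{(z)}\}$, which follows from the envelope theorem because at the current iterate the inner variables ${\bf t}$ and $a_{mi}^{(z)}$ sit at their optimal values. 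Properties (i)--(ii) yield a nondecreasing objective sequence $\{\sum_{m}h_{m0}(\{{\bf X}^{(z)}\})\}_{z}$ (hence convergent when the objective is bounded above on $\mathcal{X}$, e.g.\ for compact $\mathcal{X}$), and (i)--(iii) place the scheme in the framework of Appendix \ref{app-mm}, from which every limit point is a stationary point of \eqref{(1)} --- this mirrors the arguments behind Theorem \ref{th1-min} and Lemma \ref{cor-min}.

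\emph{Main obstacle.} The algebra of Stages 1--2 is routine; the real work lies in the stationarity step --- proving that a fixed point of the iteration \eqref{(2)} is a KKT point of the \emph{original} problem \eqref{(1)}, not merely of the surrogate. This requires a constraint qualification for \eqref{(1)} and \eqref{(2)} and, most delicately, matching the Lagrange multipliers of the auxiliary constraints \eqref{eq-3c}--\eqref{eq-3d} --- which are active at a fixed point, since the objective drives each $t_{mi}$ up to $\sqrt{f_{mi}(\{{\bf X}\})}$ --- with the multipliers of \eqref{(1)}, while carrying the chain rule through $g_{mi}(\{{\bf X}\})$ and the square roots $\sqrt{f_{mi}(\{{\bf X}\})}$. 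A secondary nuisance is the degenerate case $f_{mi}(\{{\bf X}^{(z)}\})=0$, where $a_{mi}^{(z)}=0$ annihilates the corresponding term and must be treated separately.
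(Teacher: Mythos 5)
Your proposal is correct and follows essentially the same route as the paper's proof in Appendix \ref{app-2}: introduce $t_{mi}\geq 0$ with $t_{mi}^2\leq f_{mi}(\{{\bf X}\})$ to rewrite \eqref{(1)} with fractions $t_{mi}^2/g_{mi}$, then minorize each such fraction by $2a_{mi}^{(z)}t_{mi}-a_{mi}^{(z)^2}g_{mi}(\{{\bf X}\})$ with $a_{mi}^{(z)}$ frozen at the current iterate (your completion-of-the-square identity is exactly the paper's inequality \eqref{eq-43}), and invoke the MM conditions of Appendix \ref{app-mm}. The extra detail you give on gradient consistency and KKT matching only elaborates what the paper compresses into citing Lemma \ref{lem-mm}.
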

\begin{proof}
    Please refer to Appendix \ref{app-2}.
\end{proof}
\subsubsection{When $f_{mi}$ is concave and $g_{mi}$ is convex $\forall m,i$}
In this case, \eqref{(2)} is convex, and the framework converges to an SP of \eqref{(1)} since the surrogate functions $\tilde{h}_{mi}^{(z)}(\{{\bf X}\})$ fulfill the three conditions in Appendix \ref{app-mm} for all $m,i$, as shown in the proof of Theorem \ref{th1}. Note that constraint \eqref{eq-3c} is convex in ${\bf t}$. Moreover, the OF of \eqref{(2)} and CFs \eqref{eq-3b} and \eqref{eq-3d} are linear in ${\bf t}$.

\subsubsection{General case}
In practical scenarios, the functions $f_{mi}(\{{\bf X}   \})$ and $g_{mi}(\{{\bf X}   \})$ are neither convex nor concave in $\{{\bf X}\}$. 
In this case, we can utilize the following lemma to compute an SP of \eqref{(1)}.

\begin{lemma}\label{lem-max-sur}
If $f_{mi}(\{{\bf X}   \})$ and $g_{mi}(\{{\bf X}   \})$ are non-convex and non-concave, a stationary point of \eqref{(1)} can be calculated by iteratively solving 
    \begin{subequations}\label{(2-sur)}
    \begin{align}
        \max_{\{{\bf X}   \}\in \mathcal{X}, {\bf t}  } & \sum_{m=1}^{M_0}\left(2a_{m0}^{(z)}t_{m0}- a_{m0}^{(z)^2}\tilde{g}_{m0}(\{{\bf X}\}) \right)
        \\
        \label{eq-3b-sur}\text{\em s.t.}&\sum_{m=1}^{M_i}\left(2a_{mi}^{(z)}t_{mi}- a_{mi}^{(z)^2}\tilde{g}_{mi}(\{{\bf X}\}) \right), \, \forall i,
        \\
\label{eq-3c-sur}        & \tilde{f}_{mi}(\{{\bf X}\})-t_{mi}^2\geq 0, \, \forall m,i,
        \\
        \label{eq-3d-sur}&
        t_{mi}\geq 0, \, \forall m,i,
    \end{align}
\end{subequations}
where $ {\bf t}$,  $z$, $\{{\bf X}^{(z)}\}$ and $a_{mi}^{(z)}$ are defined as in Theorem \ref{th1}.
Additionally, $\tilde{f}_{mi}(\{{\bf X}   \})$ is a concave \textit{lower bound} for $f_{mi}(\{{\bf X}   \})$, and $\tilde{g}_{mi}(\{{\bf X}   \})$ is a convex \textit{upper bound} for $g_{mi}(\{{\bf X}   \})$, satisfying the following conditions:
\begin{itemize}
    \item $\tilde{f}_{mi}(\{{\bf X}^{(z)}   \})=f_{mi}(\{{\bf X}^{(z)}   \})$ and  $\tilde{g}_{mi}(\{{\bf X}^{(z)}   \})=g_{mi}(\{{\bf X}^{(z)}   \})$, 
    \item $\tilde{f}_{mi}(\{{\bf X}   \})<f_{mi}(\{{\bf X}   \})$ and  $\tilde{g}_{mi}(\{{\bf X}   \})>g_{mi}(\{{\bf X}   \})$, 
    \item $\left. \frac{\partial \tilde{f}_{mi}(\{{\bf X}   \})}{\partial \{{\bf X}   \}}\right|_{\{{\bf X}   \}=\{{\bf X}^{(z)}\}}
    =\left. \frac{\partial {f}_{mi}(\{{\bf X}   \})}{\partial \{{\bf X}   \}}\right|_{\{{\bf X}   \}=\{{\bf X}^{(z)}\}}$ and  $\left. \frac{\partial \tilde{g}_{mi}(\{{\bf X}   \})}{\partial \{{\bf X}   \}}\right|_{\{{\bf X}   \}=\{{\bf X}^{(z)}\}}
    =\left. \frac{\partial {g}_{mi}(\{{\bf X}   \})}{\partial \{{\bf X}   \}}\right|_{\{{\bf X}   \}=\{{\bf X}^{(z)}\}}$, 
\end{itemize}
for all $m,i$ and $\{{\bf X}   \}\in \mathcal{X}$.
\end{lemma}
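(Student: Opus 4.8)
The plan is to show that the functions appearing in \eqref{(2-sur)} form, term by term and constraint by constraint, a valid minorizer of the objective and a tangential inner approximation of the feasible set of \eqref{(1)} in the sense required by the MM framework of Appendix \ref{app-mm}; convergence to a stationary point of \eqref{(1)} then follows exactly as in the proof of Theorem \ref{th1} (Appendix \ref{app-2}). The cornerstone is the elementary identity behind the quadratic transform: for scalars $f\ge 0$, $g>0$ and any $a\ge 0$,
\[
\frac{f}{g}-\bigl(2a\sqrt{f}-a^{2}g\bigr)=g\Bigl(a-\tfrac{\sqrt{f}}{g}\Bigr)^{2}\ge 0,
\]
with equality if and only if $a=\sqrt{f}/g$. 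Taking $f=f_{mi}(\{{\bf X}\})$, $g=g_{mi}(\{{\bf X}\})$ and $a=a_{mi}^{(z)}=\sqrt{f_{mi}^{(z)}}/g_{mi}^{(z)}$ shows that $2a_{mi}^{(z)}\sqrt{f_{mi}(\{{\bf X}\})}-a_{mi}^{(z)^{2}}g_{mi}(\{{\bf X}\})$ is a global lower bound for $h_{mi}(\{{\bf X}\})$ that is tight at $\{{\bf X}^{(z)}\}$.

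Next I would sandwich this quantity using the hypotheses on $\tilde f_{mi},\tilde g_{mi}$. Since $0\le\tilde f_{mi}(\{{\bf X}\})\le f_{mi}(\{{\bf X}\})$, $\tilde g_{mi}(\{{\bf X}\})\ge g_{mi}(\{{\bf X}\})$, $\sqrt{\cdot}$ is nondecreasing, and $a_{mi}^{(z)}\ge 0$, the function $\bar{h}_{mi}^{(z)}(\{{\bf X}\}):=2a_{mi}^{(z)}\sqrt{\tilde f_{mi}(\{{\bf X}\})}-a_{mi}^{(z)^{2}}\tilde g_{mi}(\{{\bf X}\})$ satisfies
\[
\bar{h}_{mi}^{(z)}(\{{\bf X}\})\le 2a_{mi}^{(z)}\sqrt{f_{mi}(\{{\bf X}\})}-a_{mi}^{(z)^{2}}g_{mi}(\{{\bf X}\})\le h_{mi}(\{{\bf X}\}),\quad\forall\{{\bf X}\}\in\mathcal{X},
\]
and, because $\tilde f_{mi}(\{{\bf X}^{(z)}\})=f_{mi}^{(z)}$ and $\tilde g_{mi}(\{{\bf X}^{(z)}\})=g_{mi}^{(z)}$, the chain is tight at $\{{\bf X}^{(z)}\}$. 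Applied with $i=0$ this is the minorization-and-value-tightness property for the objective terms, and applied with $i\ge 1$ it shows that the surrogate feasible set $\{\{{\bf X}\}:\ \sum_m\bar{h}_{mi}^{(z)}(\{{\bf X}\})\ge 0,\ \forall i\}$ is contained in the feasible set of \eqref{(1)} and touches it at $\{{\bf X}^{(z)}\}$.

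Then I would deal with the auxiliary variable and the gradient condition. Because $a_{mi}^{(z)}\ge 0$, the objective and every constraint of \eqref{(2-sur)} are nondecreasing in $t_{mi}$, so at optimality $t_{mi}=\sqrt{\tilde f_{mi}(\{{\bf X}\})}$; hence \eqref{(2-sur)} is equivalent to maximizing $\sum_m\bar{h}_{m0}^{(z)}(\{{\bf X}\})$ over $\mathcal{X}$ subject to $\sum_m\bar{h}_{mi}^{(z)}(\{{\bf X}\})\ge 0$, i.e., a genuine minorize-maximize step for \eqref{(1)}. (Retaining $t_{mi}$ is what makes the step convex: $\sqrt{\tilde f_{mi}}$ is concave, $\tilde g_{mi}$ convex, and $\{t_{mi}^{2}\le\tilde f_{mi}(\{{\bf X}\})\}$ is a convex set.) For gradient consistency I would differentiate $\bar{h}_{mi}^{(z)}$ by the chain rule, evaluate at $\{{\bf X}^{(z)}\}$ using the hypothesized matchings $\nabla\tilde f_{mi}(\{{\bf X}^{(z)}\})=\nabla f_{mi}^{(z)}$ and $\nabla\tilde g_{mi}(\{{\bf X}^{(z)}\})=\nabla g_{mi}^{(z)}$, and substitute $a_{mi}^{(z)}=\sqrt{f_{mi}^{(z)}}/g_{mi}^{(z)}$; a one-line simplification yields $\nabla\bar{h}_{mi}^{(z)}(\{{\bf X}^{(z)}\})=\bigl(g_{mi}^{(z)}\nabla f_{mi}^{(z)}-f_{mi}^{(z)}\nabla g_{mi}^{(z)}\bigr)/g_{mi}^{(z)^{2}}=\nabla h_{mi}(\{{\bf X}^{(z)}\})$. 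With value-tightness, global minorization, and gradient matching established for the objective and for all constraints, the inner-approximation/SCA argument of Appendix \ref{app-2} applies verbatim, producing a nondecreasing objective sequence whose limit points are feasible for \eqref{(1)} and satisfy its stationarity (KKT) conditions.

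The only real subtlety — the step where I would be careful — is the differentiability of $\sqrt{\tilde f_{mi}}$ at $\{{\bf X}^{(z)}\}$: both the chain-rule computation and the definition of $a_{mi}^{(z)}$ require $f_{mi}^{(z)}>0$. I would state this mild nondegeneracy condition explicitly; it holds throughout the applications of Sections \ref{sec=iii}--\ref{sec=iv}, where each $f_{mi}$ is a strictly positive performance metric. If $f_{mi}^{(z)}=0$ for some $m,i$, then that term vanishes identically in both $h_{mi}$ and $\bar{h}_{mi}^{(z)}$ at the current point and can be dropped from the stationarity analysis; everything else reduces to routine verification of the three MM conditions of Appendix \ref{app-mm}.
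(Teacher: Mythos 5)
Your proposal is correct and follows essentially the same route as the paper: the quadratic-transform lower bound (the paper's inequality \eqref{eq-43}, your completed-square identity) composed with the concave lower bound $\tilde{f}_{mi}$ and convex upper bound $\tilde{g}_{mi}$, followed by the MM/inner-approximation argument of Appendix \ref{app-mm} as in Appendix \ref{app-2}. The paper merely asserts that the surrogates in \eqref{(2-sur)} satisfy the MM conditions, whereas you verify tightness, global minorization, and gradient matching explicitly (after eliminating ${\bf t}$) and flag the nondegeneracy $f_{mi}^{(z)}>0$ needed for $a_{mi}^{(z)}$ and the chain rule---useful added care, but not a different method.
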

\begin{proof}
    Please refer to Appendix \ref{app-prof-lem2}.
\end{proof}
 {Similar to the framework in Section \ref{sec-ii-min}, we adopt an approach inspired by MM to solve \eqref{(1)}. In the general case, we first minorize $h_{mi}(\{{\bf X}   \})$ and then solve the corresponding surrogate OP in \eqref{(2-sur)} to compute an SP of \eqref{(1)}, as shown in Fig. \ref{Fig-mm}b.} Algorithm II summarizes our solution for \eqref{(1)} in the general case. Note that when $f_{mi}(\{{\bf X}   \})$ is a quadratic convex function, we can directly use the inequality in \eqref{eq-63}. In this case, there is no need to use the auxiliary variable $t_{mi}$. We provide a practical example of maximizing FFs with a quadratic numerator in Section \ref{sec-sinr-max}.
\doublespacing 
\begin{table}[htb]
\small
\begin{tabular}{l}
\hline 
 \textbf{Algorithm II}: Our framework to solve \eqref{(1)} in the general case.  \\
\hline
\hspace{0.2cm}{\textbf{Initialization}}\\
\hspace{0.2cm}Set $\delta$,  $z=1$,  $\{\mathbf{X}\}=\{\mathbf{X}^{(0)}\}$\\
\hline 
\hspace{0.2cm}
\textbf{While} $
\frac{
\sum_{m=1}^{M_0}h_{m0}(\{{\bf X}^{(z)}   \})
-
\sum_{m=1}^{M_0}h_{m0}(\{{\bf X}^{(z-1)}   \})
}{
\sum_{m=1}^{M_0}h_{m0}(\{{\bf X}^{(z-1)}   \})
}\geq\delta$\\ 
\hspace{.6cm}{Calculate $\{{\bf X}^{(z)}\}$ by solving \eqref{(2-sur)}}\\
\hspace{.6cm}$z=z+1$\\
\hspace{0.2cm}\textbf{End (While)}\\
\hspace{0.2cm}{{\bf Return} $\{\mathbf{W}^{(\star)}\}$}\\
\hline 
\end{tabular} 
\end{table}
\singlespacing

\subsubsection{Comparison with the Framework in \cite{shen2018fractional}}
The authors of \cite{shen2018fractional} proposed a quadratic transform to solve \eqref{(1)} iteratively. Upon using the framework in \cite{shen2018fractional} in each iteration,  \eqref{(1)} is transformed to 
\begin{subequations}\label{([10]-sur)}
    \begin{align}
        \max_{\{{\bf X}   \}\in \mathcal{X}, {\bf t}  } & \sum_{m=1}^{M_0}\left(2t_{m0}\sqrt{{f}_{m0}(\{{\bf X}\})}- t_{m0}^{2}{g}_{m0}(\{{\bf X}\}) \right)
        \\
        \text{s.t.}&\sum_{m=1}^{M_i}\!\!\left(2t_{mi}\sqrt{{f}_{mi}(\{{\bf X}\})}- t_{mi}^{2}{g}_{mi}(\{{\bf X}\}) \right), \, \forall i,\!
        \\
        &
        t_{mi}\geq 0, \, \forall m,i,
    \end{align}
\end{subequations}
which requires a joint optimization of $\{{\bf X}\}$ and ${\bf t}$. Unfortunately, \eqref{([10]-sur)} is non-convex even when ${f}_{mi}(\{{\bf X}\})$ and ${g}_{mi}(\{{\bf X}\})$ are linear in $\{{\bf X}\}$. To solve \eqref{([10]-sur)}, one should employ alternating optimization to separate the optimization of ${\bf X}$ and ${\bf t}$. That is, \eqref{([10]-sur)} is solved by optimizing ${\bf X}$, while ${\bf t}$ is kept fixed. Then ${\bf t}$ is updated by solving \eqref{([10]-sur)} when ${\bf X}$ is kept fixed. Hence, this approach requires solving a pair of OPs to update $\{{\bf X}\}$ in each iteration. By contrast, our framework requires solving only a single OP in each iteration. Additionally, in the general case where  ${f}_{mi}(\{{\bf X}\})$ is non-concave, it is challenging to obtain a suitable concave lower bound for $\sqrt{{f}_{mi}(\{{\bf X}\})}$. However, this issue is addressed in our framework, since Lemma \ref{lem-max-sur} requires only concave lower bounds for ${f}_{mi}(\{{\bf X}\})$.

Note that the framework in \cite{shen2018fractional} is proposed for solving only maximization OPs. However, we propose two frameworks for both maximization and minimization OPs. We are unaware of any other framework capable of solving the minimization FMP OPs that our framework in Section \ref{sec-ii-min} can solve. Additionally, we consider practical examples for MU-MIMO FBL systems, while \cite{shen2018fractional} focuses on different network scenarios when solving concrete maximization problems. 

\subsection{Discussions on the General Maximization and Minimization Problems}
In Section \ref{sec-ii-min} and Section \ref{sec+ii}, we consider $h_{mi}(\{{\bf X} \})\geq 0$; however, the frameworks proposed in this paper can also be applied to OPs associated with negative functions. A practical example of such FMP OPs is constituted by multi-objective OPs, targeting to maximize a term while minimizing another one, yielding a format of $h_{01}(\{{\bf X}   \})-h_{00}(\{{\bf X}   \})$ for the OF of the OP. It can be the case, for instance, when we aim for maximizing the EE, while simultaneously minimizing latency. When maximizing an FF utility function in the format of $h_{01}(\{{\bf X}   \})-h_{00}(\{{\bf X}   \})$, we can apply the approach in Section \ref{sec+ii} to find a lower bound for  $h_{01}(\{{\bf X}   \})$ and the approach in Section \ref{sec-ii-min} to calculate an upper bound for $h_{00}(\{{\bf X}   \})$. Due to space restriction, we skip the details, as it is a straightforward extension of the solutions provided in Section \ref{sec-ii-min} and Section \ref{sec+ii}. 

Our framework can also solve FMP OPs with the format of 
\begin{align}\label{eq-new-1}
        \min_{\{{\bf X}   \}\in \mathcal{X}  } & \sum_{m=1}^{M_0}h_{m0}(\{{\bf X}   \})
        &
        \text{s.t.}&\sum_{m=1}^{M_i}h_{mi}(\{{\bf X}   \})\geq 0, \, \forall i.
\end{align}
To solve \eqref{eq-new-1}, we can leverage the approach in Section \ref{sec-ii-min} to deal with the OF, while using the approach in Section \ref{sec+ii} to handle the CFs.
Alternatively, the  framework can also solve FMP OPs having the following format
\begin{align}\label{eq-new-2}
        \max_{\{{\bf X}   \}\in \mathcal{X}  } & \sum_{m=1}^{M_0}h_{m0}(\{{\bf X}   \})
        &
        \text{s.t.}&\sum_{m=1}^{M_i}h_{mi}(\{{\bf X}   \})\leq \eta_i, \, \forall i.
\end{align}
To this end, we can employ the approach in Section \ref{sec+ii} to find a surrogate function for the OF, while utilizing the approach in Section \ref{sec-ii-min} to derive a surrogate function for the CFs.

In the aforementioned FMP OPs, we assume that the OF and CFs are linear in $h_{mi}(\{{\bf X}\})$ $\forall m,i$. However, our optimization framework can also solve OPs in which the OF and CFs are non-linear in $h_{mi}(\{{\bf X}\})$, i.e., in a format of $\phi_{mi}(h_{mi}(\{{\bf X}\}))$, where $\phi_{mi}(\cdot)$ can be even non-convex or non-concave. We will discuss this issue in Section \ref{sec=iii} and Section \ref{sec=iii-max}, providing a pair of practical examples. Additionally, we note that our framework can also solve products of FFs as
\begin{equation}
   \prod_{i} \frac{f_{mi}(\{{\bf X}   \})}{g_{mi}(\{{\bf X}   \})}, \,\, \forall m.
\end{equation}
This type of functions include, for instance, geometric mean optimization as we will discuss in the next sections. Finally, we note that a key step in our optimization framework is to find adequate surrogate functions for $f_{mi}(\{{\bf X}   \})$ and/or $g_{mi}(\{{\bf X}   \})$, which can be challenging in many practical scenarios. In Appendix \ref{app-1}, we provide inequalities to expedite calculating efficient surrogate functions, especially for  MIMO systems.

\section{Practical Applications of FMP Minimization Problems}\label{sec=iii}
The unified optimization framework of Section \ref{secii-new} can solve a large family of FMP problems. In this section, we provide various examples of such FMP minimization problems, focusing on MU-MIMO systems relying on FBL coding. Particularly,  we minimize the sum delay, geometric mean of delays, channel dispersion, and MSE. To provide a detailed solution, we consider a single-cell MIMO BC in this section. 
Section \ref{sec=iv} describes the applications of our framework in other MU-MIMO systems. 
We provide a few practical examples of the FMP minimization problems, solvable by our optimization framework in Table \ref{tab:minOPs}.
\doublespacing
\begin{table}[t]
    \centering    \footnotesize
    \caption{Practical applications of the objective functions solvable by our framework in Section \ref{sec-ii-min}.}
    \begin{tabular}{|l||l|}
    \hline
     Sum delay& $ \sum_{k}\frac{L_{k}}{r_{k} }$
     \\     
     \hline
     Geometric mean of delay&  $\prod_{k}\left(\frac{L_{k}}{r_{k} }\right)^{1/K} $
     \\     
     \hline
     Maximum MSE&$\max_k\left\{\text{Tr}\left({\bf I}-{\bf D}_k^{-1}{\bf S}_k \right)\right\}$
     \\     
     \hline
     Sum MSE&$\sum_k\text{Tr}\left({\bf I}-{\bf D}_k^{-1}{\bf S}_k \right)$
     \\     
     \hline
    \end{tabular} 
    \label{tab:minOPs}
\end{table}
\normalsize
\singlespacing

\subsection{MIMO Broadcast Channels}
\subsubsection{Network scenario} In the practical examples of this section, we study a MIMO BC, consisting of a single BS having $N_{BS}$ transmit antennas (TAs), serving $K$ users each equipped with $N_u$ receive antennas (RAs). Moreover, we treat interference as noise (TIN) and employ FBL coding having a codeword length $n$ at the BS to make the rates more general than the classical Shannon rates. In the following, we briefly state the achievable rate and EE of users, and refer the reader to, e.g., \cite{soleymani2024optimization} for more in-depth discussions.

\subsubsection{Rate and EE expressions} Upon using the normal approximation (NA), the achievable rate of user $k$ is given by \cite[Lemma 1]{soleymani2024optimization}
\begin{equation}\label{eq-rate}
r_{k}=
\underbrace{\log \left|{\bf I} +{\bf D}^{-1}_{k}{\bf S}_{k} \right|}_{r_{S,k}=\text{\footnotesize Shannon Rate}}
-Q^{-1}(\epsilon)\sqrt{\frac{v_{k}}{n}},
\end{equation}
where  {$Q^{-1}$ is the inverse Q-function for Gaussian signals}, $v_{k}$ is the channel dispersion of user $k$, ${\bf S}_{k}={\bf H}_{k}{\bf W}_{k}({\bf H}_{k}{\bf W}_{k})^H$ is the covariance matrix of the desired signal at user ${k}$, and
 ${\bf D}_{k}$ 
 is the covariance  matrix of the interfering signals plus noise, given by
\begin{equation}
{\bf D}_{k}={\bf C}_{k}
+
\sum_{j=1,j\neq k}^K{\bf H}_{k}{\bf W}_{j}({\bf H}_{k}{\bf W}_{j})^H,
\end{equation}
where ${\bf C}_{k} $ is the covariance of the additive noise at user $k$, ${\bf H}_{k}$ is the channel between the BS and  user ${k}$, while ${\bf W}_{j}$ is the beamforming matrix, corresponding to the signal containing the data of user ${k}$. An achievable channel dispersion for MU-MIMO systems with Gaussian signals is \cite{scarlett2016dispersion, soleymani2024optimization}
\begin{equation}\label{eq-ch-dis}
    v_{k}=2\text{Tr}({\bf S}_k({\bf D}_k+{\bf S}_k)^{-1}),
\end{equation}
which is an FF of $\{{\bf W}\}=\{ {\bf W}_j\}_{\forall j}$.

The EE of user $k$ is \cite{buzzi2016survey}
\begin{equation}\label{eq-ee}
    e_k=\frac{r_k}{P_s+\eta \text{Tr}({\bf W}_k{\bf W}_k^H) },
\end{equation}
where $P_s$ is the static power required for transmitting data to user $k$, and $\eta$ is the power efficiency of the power amplifier at the BS. Furthermore, the GEE of the network is \cite{buzzi2016survey}
\begin{equation}\label{eq-gee}
    g=\frac{\sum_k r_k}{\sum_k \left[P_s+\eta \text{Tr}({\bf W}_k{\bf W}_k^H)\right] }.
\end{equation}
Finally, the transmission delay of user $k$ is given by \cite{bai2020latency, li2023min}
\begin{equation}\label{eq-delay-18}
    d_k=\frac{L_k}{r_k},
\end{equation}
where $L_k$ is the packet length of user $k$.

The rate function in \eqref{eq-rate} is non-concave and non-convex. A concave lower-bound for the rates was obtained in \cite[Lemma 5]{soleymani2024optimization}. Here, we restate the bound for the ease of the reader. 
\begin{lemma}[\!\!\cite{soleymani2024optimization}]\label{lem-1}
A concave lower bound for $r_{k}$ is 
\begin{multline}
\label{eq24}
r_{k}\geq \tilde{r}_{k}= a_{k}
+2\sum_{j}\mathfrak{R}\left\{\text{{\em Tr}}\left(
{\bf A}_{kj}\mathbf{W}_{j}^H
\bar{\mathbf{H}}_{k}^H\right)\right\}
\\
-
\text{{\em Tr}}\left(
{\bf B}_{k}(\mathbf{H}_{k}\mathbf{W}_{k}\mathbf{W}_{k}^H\mathbf{H}_{k}^H+\mathbf{D}_{k})
\right),
\end{multline}
where the coefficients $a_{k}$, ${\bf A}_{kj}$, and ${\bf B}_{k}$ are given by \cite[Lemma 5]{soleymani2024optimization}.
\end{lemma}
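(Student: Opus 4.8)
## Proof Proposal for Lemma \ref{lem-1}

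The plan is to construct the concave lower bound $\tilde{r}_k$ by applying a sequence of standard concavity-preserving inequalities to the two parts of the rate expression $r_k = r_{S,k} - Q^{-1}(\epsilon)\sqrt{v_k/n}$ in \eqref{eq-rate}, and then argue that the resulting bound touches $r_k$ (with matching gradient) at the current iterate $\{{\bf W}^{(z)}\}$, so that the three MM conditions of Appendix \ref{app-mm} hold. First I would handle the Shannon term $r_{S,k}=\log|{\bf I}+{\bf D}_k^{-1}{\bf S}_k| = \log|{\bf D}_k+{\bf S}_k| - \log|{\bf D}_k|$. The first piece $\log|{\bf D}_k+{\bf S}_k| = \log|{\bf C}_k + \sum_j {\bf H}_k{\bf W}_j{\bf W}_j^H{\bf H}_k^H|$ is jointly concave in the matrices ${\bf H}_k{\bf W}_j$ and hence, via the standard $\log\det$ minorizer (first-order Taylor expansion of a concave function of the $\{{\bf W}\}$-dependent Gram matrices), can be lower-bounded by a linear-plus-constant expression, producing the terms $a_k$ and $2\sum_j \mathfrak{R}\{\text{Tr}({\bf A}_{kj}{\bf W}_j^H\bar{\bf H}_k^H)\}$. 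The concave-in-$\{{\bf W}\}$ term $-\log|{\bf D}_k|$ (a concave function composed with an affine map of the interference Grams) is replaced by its tangent, giving a linear upper bound on $\log|{\bf D}_k|$, i.e. a contribution of the form $-\text{Tr}({\bf B}_k \cdot(\text{something}))$; this is where ${\bf B}_k$ enters.

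Second, I would treat the channel-dispersion penalty. Since $-Q^{-1}(\epsilon)\sqrt{v_k/n}$ is the negative of (a positive multiple of) $\sqrt{v_k}$, and $v_k = 2\text{Tr}({\bf S}_k({\bf D}_k+{\bf S}_k)^{-1}) = 2\text{Tr}({\bf I}-{\bf D}_k({\bf D}_k+{\bf S}_k)^{-1})$, I would first upper-bound $\sqrt{v_k}$ by a linear function of $v_k$ via the tangent inequality $\sqrt{x}\le \sqrt{x^{(z)}} + (x-x^{(z)})/(2\sqrt{x^{(z)}})$, which converts the concave square-root into an affine function of $v_k$ with the correct value and derivative at $v_k^{(z)}$. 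It then remains to produce a convex upper bound on $v_k$ itself (equivalently a concave lower bound on $-v_k$); writing $v_k = 2\text{Tr}({\bf I}) - 2\text{Tr}({\bf D}_k({\bf D}_k+{\bf S}_k)^{-1})$ and noting that $\text{Tr}({\bf D}_k({\bf D}_k+{\bf S}_k)^{-1})$ is concave in the relevant Grams (it is $\text{Tr}$ of a matrix-fractional/Schur-type function), its tangent gives the needed bound. Collecting the quadratic-in-${\bf W}_k$ leftover terms from both the $-\log|{\bf D}_k|$ step and the dispersion step yields the single combined term $-\text{Tr}({\bf B}_k({\bf H}_k{\bf W}_k{\bf W}_k^H{\bf H}_k^H + {\bf D}_k))$ in \eqref{eq24}, after absorbing constants into $a_k$.

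Finally I would verify the three MM conditions: (i) at $\{{\bf W}\}=\{{\bf W}^{(z)}\}$ every inequality used (the $\log\det$ minorizer, the $\log\det$ tangent, the $\sqrt{\cdot}$ tangent, and the matrix-fractional tangent) holds with equality by construction, so $\tilde r_k(\{{\bf W}^{(z)}\}) = r_k(\{{\bf W}^{(z)}\})$; (ii) each is a global bound of the correct sense, so $\tilde r_k \le r_k$ everywhere on $\mathcal{X}$; (iii) each tangent-type bound matches the first-order derivative at $\{{\bf W}^{(z)}\}$ by the standard property of first-order Taylor minorants/majorants, and the derivatives add, so $\partial\tilde r_k/\partial\{{\bf W}\}$ agrees with $\partial r_k/\partial\{{\bf W}\}$ at the iterate. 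Concavity of $\tilde r_k$ is immediate: it is a sum of an affine function of $\{{\bf W}\}$ (from the $\log\det$ minorizer and the linearized terms) and the term $-\text{Tr}({\bf B}_k({\bf H}_k{\bf W}_k{\bf W}_k^H{\bf H}_k^H+{\bf D}_k))$, which is concave in $\{{\bf W}\}$ provided ${\bf B}_k\succeq 0$ — this positive-semidefiniteness of the coefficient matrices is the one point that needs care, and I expect it to be the main obstacle: one must check that the specific ${\bf B}_k$ emerging from combining the $\log\det$ tangent and the dispersion tangent is indeed PSD (it should be, since it is built from inverses of PSD covariance matrices, but the bookkeeping is delicate). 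Once that is confirmed, $\tilde r_k$ is a genuine concave lower bound satisfying all MM requirements, which is exactly the claim; the explicit forms of $a_k$, ${\bf A}_{kj}$, ${\bf B}_k$ then follow by reading off the constant, linear, and quadratic coefficients, matching \cite[Lemma 5]{soleymani2024optimization}.
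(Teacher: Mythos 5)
First, a point of reference: the paper does not prove this lemma at all --- it is restated from \cite[Lemma 5]{soleymani2024optimization}, with $a_k$, ${\bf A}_{kj}$, ${\bf B}_k$ defined only in that reference --- so your reconstruction can only be judged against the toolkit the paper itself assembles (Lemma \ref{lem-2}, Lemma \ref{lem-3}, the treatment of $v_k$ in Section \ref{sec-ch-dis}, and the MM conditions of Appendix \ref{app-mm}). Your skeleton (split $r_k$ into the Shannon term and the dispersion penalty, surrogate each, verify tightness, gradient matching and one-sidedness) is the right one, but your two central bounding steps are directionally wrong as stated. For the Shannon part, $\log\left|{\bf C}_k+\sum_j{\bf H}_k{\bf W}_j{\bf W}_j^H{\bf H}_k^H\right|$ is \emph{not} concave in the factors ${\bf H}_k{\bf W}_j$ (already $\log(1+|w|^2)$ is neither convex nor concave in $w$), and even in the covariance domain a first-order Taylor expansion of a concave function is a majorant, not the minorant you need. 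In fact no affine-in-$\{{\bf W}\}$ global lower bound that is tight with matching (generically nonzero) gradient at $\{{\bf W}^{(z)}\}$ can exist for this term: along ${\bf W}_j\mapsto t{\bf W}_j$ it grows only logarithmically in $t$, while such an affine minorant would grow linearly. The correct device is the non-affine minorant of Lemma \ref{lem-2} with ${\bf \Gamma}={\bf H}_k{\bf W}_k$, ${\bf \Omega}={\bf D}_k$, whose last term $-\text{Tr}\big((\bar{\bf D}_k^{-1}-(\bar{\bf S}_k+\bar{\bf D}_k)^{-1})({\bf S}_k+{\bf D}_k)\big)$ already supplies (part of) the quadratic penalty $-\text{Tr}\big({\bf B}_k({\bf H}_k{\bf W}_k{\bf W}_k^H{\bf H}_k^H+{\bf D}_k)\big)$ in \eqref{eq24}; you attribute that penalty solely to $-\log|{\bf D}_k|$ and the dispersion, which misplaces the origin of ${\bf B}_k$, and it also dissolves your flagged ``main obstacle'': positive semidefiniteness is automatic since $\bar{\bf D}_k^{-1}\succeq(\bar{\bf S}_k+\bar{\bf D}_k)^{-1}$, and the dispersion contribution to ${\bf B}_k$ is likewise a PSD weight.

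For the dispersion part, the tangent bound on $\sqrt{\cdot}$ is fine, but to upper-bound $v_k=2\text{Tr}\big({\bf I}-{\bf D}_k({\bf D}_k+{\bf S}_k)^{-1}\big)$ you must \emph{lower}-bound $\text{Tr}\big({\bf D}_k({\bf D}_k+{\bf S}_k)^{-1}\big)$; this comes from the joint \emph{convexity} of the matrix-fractional function $\text{Tr}({\bf \Omega}^{-1}{\bf \Gamma}{\bf \Gamma}^H)$ in $({\bf \Gamma},{\bf \Omega})$, whose tangent is a global under-estimator --- exactly Lemma \ref{lem-3}, used this way in Section \ref{sec-ch-dis}. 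Declaring that trace ``concave'' and taking ``its tangent'' produces the opposite inequality and would destroy the one-sided (majorization) property required in Appendix \ref{app-mm}. With these two steps replaced by Lemma \ref{lem-2} and by the $\sqrt{\cdot}$-tangent combined with Lemma \ref{lem-3}, your assembly of the constant, linear and quadratic pieces and your verification of the three MM conditions do go through and reproduce the form of \eqref{eq24}; as written, however, the proposal rests on two bounds taken in the wrong direction.
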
 

\begin{table}[t]
\footnotesize
    \centering
    \caption{List of most frequently used notations in Section \ref{sec=iii}.}
    \begin{tabular}{|l|l|}
    \hline
        ${\bf H}_{k}$ & Channel between the BS and user $k$ \\
        \hline
         ${\bf W}_{k}$ & Beamforming matrix for data transmission to user ${k}$
         \\
         \hline
         $r_{k}$ & Rate of user ${k}$, given by \eqref{eq-rate}
         \\
         \hline
         $r_{S,k}$ & Shannon rate of user ${k}$, given by \eqref{eq-rate}
         \\
         \hline
         $e_{k}$ & EE of user ${k}$, given by \eqref{eq-ee}
         \\
         \hline
         $g$& Global EE of the network, given by \eqref{eq-gee}
         \\
         \hline
         $\tilde{r}_{k}$ & Concave lower bound for $r_{k}$, given by Lemma \ref{lem-1}
         \\
         \hline
         $L_k$ & Packet length for user $k$
         \\
         \hline
         $p_k$& Power consumption for data TX to user $k$, given by \eqref{eq-pow}
         \\
         \hline
         $v_k$& Achievable channel dispersion for user $k$, given by \eqref{eq-ch-dis}
         \\
         \hline
         $\zeta$&Tradeoff between GEE and sum rate, given by \eqref{eq-11}
         \\
         \hline
         $\zeta_k$&Tradeoff between EE and rate of user $k$, given by \eqref{eq-see}
         \\
         \hline
          $\xi_k$&MSE of user $k$, given by \eqref{eq-mse}
         \\
         \hline
    \end{tabular} 
    \label{tab:1}
\end{table}
\normalsize

\subsection{Sum Delay Minimization}\label{sec-sum-delay}
The minimization of the sum transmission delay can be formulated as 
\begin{subequations}\label{eq-sdm}
    \begin{align}
        \min_{\{{\bf W}\} }\, & \sum_k\frac{L_k}{r_k}
        \\
        \text{s.t.}\,\,& \sum_k\text{Tr}({\bf W}_k{\bf W}^H_k)\leq P,
        \label{eq-13-b}
    \end{align}
\end{subequations}
where $P$ is the power budget of the BS. Upon leveraging Lemma \ref{cor-min} and Lemma \ref{lem-1}, we can attain an SP of  \eqref{eq-sdm} by iteratively solving
\begin{align}\label{eq-sdm-sur}
        \min_{\{{\bf W}\},{\bf t} }\, & \sum_k\frac{L_k}{t_k}
        &
        \text{s.t.}\,\,& \eqref{eq-13-b},\,\, \tilde{r}_k\geq t_k>0,\,\, \forall k,  
    \end{align}
where ${\bf t}=\{t_1,t_2,\cdots,t_K\}$ is the set of auxiliary variables. Note that in \eqref{eq-sdm}, $L_k$ is assumed to be constant. However, our framework can also solve OPs in which $L_k$ is an optimization variable. It happens when a joint design of the physical (PHY) layer and higher layers, e.g., the multiple access (MAC) or network, is studied. Moreover, in applications to mobile-edge computing, the number of bits offloaded to a computing center via a communication link is an optimization variable \cite{bai2020latency, li2023min}. In these applications, the resultant latency-minimization problem falls into FMP, solvable by our algorithm in Section \ref{sec-ii-min}. To avoid unnecessary complications of the practical examples, we leave it for future research.

\subsection{Geometric Mean of Delays}\label{sec-geo-de}
The geometric mean of the transmission delays is
\begin{equation}
    \bar{d}=\prod_{k} d_k^{\frac{1}{K}}=\prod_{k}\left(\frac{L_k}{r_k}\right)^{\frac{1}{K}},
\end{equation}
and minimizing $\bar{d}$ can be written as
    \begin{align}\label{eq-gmd}
        \min_{\{{\bf W}\} }\, & \prod_{k}\left(\frac{L_k}{r_k}\right)^{\frac{1}{K}}
        &
        \text{s.t.}\,\,& \eqref{eq-13-b}.
    \end{align}
 {The OF of \eqref{eq-gmd} is a multiplication of FFs, making \eqref{eq-gmd} more complex than \eqref{eq-sdm}. To solve \eqref{eq-gmd}, we first employ the upper bound in Lemma \ref{lem-gm}, transferring \eqref{eq-gmd} to the format of \eqref{(1-min)}. Note that the upper bound in Lemma \ref{lem-gm} fulfills the three conditions of MM algorithms mentioned in Lemma \ref{lem-mm}, hence ensuring  convergence to an SP of \eqref{eq-gmd}.}
Upon employing the inequality in Lemma \ref{lem-gm}, we have
    \begin{align}\label{eq-gmd2}
        \min_{\{{\bf W}\} }\, & \sum_{k}\alpha_k \frac{L_k}{r_k}
        &
        \text{s.t.}\,\,& \eqref{eq-13-b},
    \end{align}
where $\alpha_k$ is a constant coefficient, given by
\begin{equation}
    \alpha_k\!=\!\frac{1}{K}\!\left(\!\frac{L_i}{r_k(\{{\bf W}^{(z-1)} \}) }\!\right)^{\frac{1-K}{K}}\prod_{i\neq k}\!\left(\!\frac{L_i}{r_i(\{{\bf W}^{(z-1)} \})}\right)^{\frac{1}{K}}\!.\!
\end{equation}
Now we can leverage Lemma \ref{cor-min} and Lemma \ref{lem-1} to attain an SP of \eqref{eq-gmd} by iteratively solving 
\begin{align}\label{eq-gmd3}
    \min_{\{{\bf W}\},{\bf t} }\, & \sum_k\frac{\alpha_k L_k}{t_k}
        &
        \text{s.t.}\,\,& \eqref{eq-13-b},\,\, \tilde{r}_k\geq t_k>0,\,\, \forall k.
\end{align}
Note that the geometric mean can be considered as a fairness metric in multi-user systems \cite{yu2021maximizing}.

\subsection{Minimizing Maximum or Sum Mean Square Error} 
Upon using an MMSE estimator, the MSE of user $k$ is \cite{shi2011iterative}
\begin{equation}\label{eq-mse}
   \xi_k =\text{Tr}\left({\bf I}-{\bf D}_k^{-1}{\bf S}_k \right),
\end{equation}
which has a fractional structure in $\{{\bf W}\}$. We consider a pair of minimization problems, having an MSE-related metric as the OF: i) minimizing the sum MSE, ii) minimizing the maximum (min-max) MSE. To solve these OPs, we employ the inequality in Lemma \ref{lem-3} to calculate a convex upper bound for $\xi_k$ in the following lemma.
\begin{lemma}\label{mse-up}
For all feasible $\{{\bf W}\}$ and $\{{\bf W}^{(z)} \}$, the following inequality holds:
\begin{multline}
\label{eq-mse-up}
\xi_k =\text{\em Tr}\left({\bf I}-{\bf D}_k^{-1}{\bf S}_k \right)\leq \tilde{\xi}_k
=
\text{\em Tr}\left({\bf I}+ \tilde{\bf D}^{-1}_k\tilde{\bf S}_k\tilde{\bf D}^{-1}_k{\bf D}_k\right)
\\-
2\mathfrak{R}\left\{\text{\em Tr}\left(\tilde{\bf D}^{-1}_k{\bf H}_k{\bf W}^{(z)}_k{\bf W}^H_k{\bf H}^H_k\right)\right\}
,
\end{multline}
where $\tilde{\bf D}_k={\bf D}_k(\{{\bf W}^{(z)}\})$ and $\tilde{\bf S}_k={\bf S}_k(\{{\bf W}^{(z)}\})$. Note that $\tilde{\xi}_k$ is quadratic in $\{{\bf W}\}$.
\end{lemma}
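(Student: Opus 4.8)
The plan is to derive everything from a single matrix inequality, namely the joint convexity of the map $({\bf A},{\bf B})\mapsto\text{Tr}({\bf A}^H{\bf B}^{-1}{\bf A})$ on the domain ${\bf B}\succ{\bf 0}$, whose first-order Taylor expansion is a global affine under-estimator; this is provided by Lemma \ref{lem-3} (see Appendix \ref{app-1}). First I would isolate the only non-constant piece of $\xi_k$. By the cyclic property of the trace,
\begin{equation}
\text{Tr}\left({\bf D}_k^{-1}{\bf S}_k\right)=\text{Tr}\left({\bf W}_k^H{\bf H}_k^H{\bf D}_k^{-1}{\bf H}_k{\bf W}_k\right),
\end{equation}
which is of the form $\text{Tr}({\bf A}^H{\bf B}^{-1}{\bf A})$ with ${\bf A}={\bf H}_k{\bf W}_k$ and ${\bf B}={\bf D}_k$. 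Since $\xi_k=\text{Tr}({\bf I})-\text{Tr}({\bf D}_k^{-1}{\bf S}_k)$ and $\text{Tr}({\bf I})$ is constant, an \emph{upper} bound on $\xi_k$ follows precisely from a \emph{lower} bound on $\text{Tr}({\bf A}^H{\bf B}^{-1}{\bf A})$.

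Next I would apply Lemma \ref{lem-3} at the expansion point $({\bf A},{\bf B})=({\bf H}_k{\bf W}_k^{(z)},\tilde{\bf D}_k)$ with $\tilde{\bf D}_k={\bf D}_k(\{{\bf W}^{(z)}\})$. The differential with respect to ${\bf A}$ contributes a term affine in ${\bf W}_k$, and the differential with respect to ${\bf B}$ a term affine in ${\bf D}_k$ — hence quadratic in $\{{\bf W}_j\}$ — so that
\begin{equation}
\text{Tr}\left({\bf W}_k^H{\bf H}_k^H{\bf D}_k^{-1}{\bf H}_k{\bf W}_k\right)\geq 2\mathfrak{R}\left\{\text{Tr}\left(\tilde{\bf D}_k^{-1}{\bf H}_k{\bf W}_k^{(z)}{\bf W}_k^H{\bf H}_k^H\right)\right\}-\text{Tr}\left(\tilde{\bf D}_k^{-1}\tilde{\bf S}_k\tilde{\bf D}_k^{-1}{\bf D}_k\right),
\end{equation}
where $\tilde{\bf S}_k={\bf H}_k{\bf W}_k^{(z)}({\bf H}_k{\bf W}_k^{(z)})^H$. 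Here the cross term is first obtained as $2\mathfrak{R}\{\text{Tr}(({\bf H}_k{\bf W}_k^{(z)})^H\tilde{\bf D}_k^{-1}{\bf H}_k{\bf W}_k)\}$ and then rearranged into the displayed asymmetric form using the Hermitian symmetry of $\tilde{\bf D}_k$, the identity $\mathfrak{R}\{\text{Tr}({\bf M})\}=\mathfrak{R}\{\text{Tr}({\bf M}^H)\}$, and the cyclic property of the trace. Substituting this inequality into $\xi_k=\text{Tr}({\bf I})-\text{Tr}({\bf D}_k^{-1}{\bf S}_k)$ reproduces the right-hand side $\tilde{\xi}_k$ of \eqref{eq-mse-up}; moreover, since $\tilde{\xi}_k$ is affine in ${\bf W}_k$ and depends on $\{{\bf W}_j\}_{j\neq k}$ only through the matrix-convex map $\sum_{j\neq k}{\bf H}_k{\bf W}_j{\bf W}_j^H{\bf H}_k^H$ weighted by the positive semidefinite matrix $\tilde{\bf D}_k^{-1}\tilde{\bf S}_k\tilde{\bf D}_k^{-1}$, it is quadratic — indeed convex — in $\{{\bf W}\}$.

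Finally, for completeness I would record the two properties that make $\tilde{\xi}_k$ a valid majorizer in the sense used later: at ${\bf W}_k={\bf W}_k^{(z)}$ one has ${\bf D}_k=\tilde{\bf D}_k$ and ${\bf S}_k=\tilde{\bf S}_k$, whence $\tilde{\xi}_k=\text{Tr}({\bf I}-\tilde{\bf D}_k^{-1}\tilde{\bf S}_k)=\xi_k(\{{\bf W}^{(z)}\})$, and because $\tilde{\xi}_k$ is the first-order Taylor surrogate of a differentiable function its gradient agrees with that of $\xi_k$ at $\{{\bf W}^{(z)}\}$. The only delicate step I expect is the bookkeeping with Hermitian transposes and the $\mathfrak{R}\{\cdot\}$ operator when simplifying the cross term, i.e. making sure it lands in exactly the form $\text{Tr}(\tilde{\bf D}_k^{-1}{\bf H}_k{\bf W}_k^{(z)}{\bf W}_k^H{\bf H}_k^H)$ rather than its conjugate transpose; everything else is a direct specialization of Lemma \ref{lem-3}.
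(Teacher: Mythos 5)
Your proposal is correct and follows essentially the same route as the paper: the paper also obtains \eqref{eq-mse-up} by writing $\text{Tr}({\bf D}_k^{-1}{\bf S}_k)=\text{Tr}({\bf D}_k^{-1}{\bf H}_k{\bf W}_k({\bf H}_k{\bf W}_k)^H)$ and applying Lemma \ref{lem-3} with ${\bf \Gamma}={\bf H}_k{\bf W}_k$, ${\bf \Omega}={\bf D}_k$, $\bar{\bf \Gamma}={\bf H}_k{\bf W}_k^{(z)}$, $\bar{\bf \Omega}=\tilde{\bf D}_k$, so that the lower bound on this trace becomes the upper bound $\tilde{\xi}_k$ after negation. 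Your additional remarks on the affine dependence on ${\bf W}_k$, the convex quadratic dependence on $\{{\bf W}_j\}_{j\neq k}$, and the MM tangency conditions are accurate and consistent with how the bound is used later in the paper.
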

\subsubsection{Minimizing sum MSE}
The sum MSE minimization can be formulated as 
\begin{align}\label{eq-sum-mse}
\min_{\{{\bf W}\}}  &\sum_k   \xi_k& \text{s.t}\,\,\,&\eqref{eq-13-b}.
\end{align}
Upon using Lemma \ref{mse-up}, we can attain an SP of \eqref{eq-sum-mse} by iteratively solving
\begin{align}\label{eq-sum-mse-sur}
\min_{\{{\bf W}\}}  &\sum_k   \tilde{\xi}_k
& \text{s.t}\,\,\,&\eqref{eq-13-b}.
\end{align}
\subsubsection{Max-min MSE}
We can formulate the max-min MSE as
\begin{align}\label{eq-max-min-mse}
\min_{\{{\bf W}\}}  &\max_k\{\xi_k\}& \text{s.t}\,\,&\eqref{eq-13-b}.
\end{align}
Leveraging Lemma \ref{mse-up}, we can compute an SP of \eqref{eq-max-min-mse} by iteratively solving 
\begin{align}\label{eq-max-min-mse-sur}
\min_{\{{\bf W}\}}  &\max_k\{\tilde{\xi}_k\}
& \text{s.t}\,\,&\eqref{eq-13-b}.
\end{align}

\subsection{Optimizing Channel Dispersion}\label{sec-ch-dis}
The channel dispersion is also an FF of $\{{\bf W}\}$, expressed as
\begin{equation}
    v_{k}=2\text{Tr}({\bf S}_k({\bf D}_k+{\bf S}_k)^{-1}).
\end{equation}
All OPs associated with FBL coding require an optimization over the channel dispersion. In this subsection, we provide a convex upper bound for $v_k$, which can be used for minimizing $v_k$. Note that 
 $r_k$ is decreasing in $v_k$, which motivates minimizing $v_k$. 
To find the bound, we rewrite $v_k$ as follows
\begin{equation}
    v_{k}=2\text{Tr}({\bf I} -{\bf D}_k({\bf D}_k+{\bf S}_k)^{-1}).
\end{equation}
Upon employing Lemma \ref{lem-3}, we have
\begin{multline}
\label{eq-mse-up}
v_k =2\text{Tr}({\bf I} -{\bf D}_k({\bf D}_k+{\bf S}_k)^{-1})
\leq \tilde{v}_k
\\
=
\text{Tr}\left({\bf I}+ (\tilde{\bf D}_k+\tilde{\bf S}_k)^{-1}\tilde{\bf D}_k(\tilde{\bf D}_k+\tilde{\bf S}_k)^{-1}({\bf D}_k+{\bf S}_k)\right)
\\-
2\sum_{j\neq k}\mathfrak{R}\left\{\text{Tr}\left((\tilde{\bf D}_j+\tilde{\bf S}_j)^{-1}{\bf H}_j{\bf W}^{(z)}_j{\bf W}^H_j{\bf H}^H_j\right)\right\}
,
\end{multline}
where $\tilde{\bf D}_k$ and $\tilde{\bf S}_k$ are defined as in Lemma \ref{mse-up} for all $k$.
The upper bound $\tilde{v}_k$ is quadratic and convex in $\{{\bf W} \}$.

\section{Practical Applications of FMP Maximization Problems}\label{sec=iii-max}
In this section, we provide four examples of practical FMP maximization problems with applications to MU-MIMO systems using FBL coding. Specifically, we maximize the spectral-energy efficiency (SEE) tradeoff metrics, weighted sum EE (WSEE), and the geometric mean of EE (GMEE), as mentioned in Table \ref{tab:maxOPs}. Similar to Section \ref{sec=iii}, we consider a single-cell MIMO BC in this section.  
\doublespacing
\begin{table}[t]
    \centering    \footnotesize
    \caption{Practical applications of the objective functions solvable by our framework in Section \ref{sec+ii}.}
    \begin{tabular}{|l||l|}
    \hline
     GEE \& sum rate tradeoff    & $\alpha \sum_{k}r_{k}+ \frac{(1-\alpha)\sum_{k}r_{k}}{\sum_{k}\left[P_s+\eta \text{Tr}({\bf W}_{k}{\bf W}_{k}^H )\right] }$  
     \\ 
     \hline
     Max-min EE \& SE tradeoff    & $\min_{k} \left\{ \alpha_{k} r_{k}+ \frac{(1-\alpha_{k})r_{k}}{P_s+\eta \text{Tr}({\bf W}_{k}{\bf W}_{k}^H ) }\right\}$  
     \\
     \hline
       Weighted sum EE  &  $ \sum_{k}\frac{\alpha_{k}r_{k}}{P_s+\eta \text{Tr}({\bf W}_{k}{\bf W}_{k}^H ) }$
       \\
       \hline
       Geometric mean of EE& $ \prod_{k} \left(\frac{r_{k}}{P_s+\eta \text{Tr}({\bf W}_{k}{\bf W}_{k}^H ) }\right)^{1/k}$
       \\     
     \hline
    \end{tabular} 
    \label{tab:maxOPs}
\end{table}
\normalsize
\singlespacing

\subsection{Spectral-energy-efficiency Tradeoff}\label{sec-iv-a}
We consider a pair of performance metrics to evaluate the tradeoff between the SE and EE in this subsection.

\subsubsection{Sum Rate and GEE Tradeoff}
A metric for studying the SEE tradeoff is given by \cite{zappone2023rate, aydin2017energy, you2020energy, you2020spectral}
\begin{equation}\label{eq-11}
   \zeta=\alpha\sum_k r_k+ \frac{(1-\alpha)\sum_k r_k}{\sum_k[P_s+\eta \text{Tr}({\bf W}_k{\bf W}_k^H)] } ,
\end{equation}
where $0\leq\alpha\leq 1$ and $(1-\alpha)$ are the weights corresponding to the SE and EE, respectively.
We can formulate the maximization of $\zeta$ as follows
    \begin{align}\label{opt-13-see}
        \max_{\{{\bf W}\} }\,\, & \alpha\sum_k r_k
        +
        \frac{(1-\alpha)\sum_k r_k}{\sum_k[P_s+\eta \text{Tr}({\bf W}_k{\bf W}_k^H)] } 
        &
        \text{s.t.}\,\,& 
        \eqref{eq-13-b}.
    \end{align}
Leveraging Lemma \ref{lem-max-sur} and Lemma \ref{lem-1} yields the convex OP
\begin{subequations}
    \begin{align}
        \max_{\{{\bf W}\},t }\, & 
        \alpha\sum_k \tilde{r}_k 
        \!+ 
        (1\!-\!
        \alpha)\!
        \left(
        \!\!2\beta^{(z)}t- \beta^{(z)^2}
        {\sum_kp_k(\{{\bf W}\}) } 
        \!\!\right)
        \\
        \text{s.t.}\,\,& 
        \eqref{eq-13-b},\,\,\sum_k\tilde{r}_k -t^2\geq 0,
    \end{align}
\end{subequations}
where $\beta^{(z)}=\frac{\sqrt{\sum_k r_k({\bf W}^{(z)})}}{\sum_kp_k({\bf W}^{(z)})}$ and
\begin{equation}\label{eq-pow}
    p_k({\bf W})=P_s+\eta \text{Tr}({\bf W}_k{\bf W}_k^H).
\end{equation}

\subsubsection{Max-min Rate and EE Tradeoff}
Another SEE tradeoff metric is based on the individual rate and EE of each user given by  
\begin{equation}\label{eq-see}
   \zeta_k=\alpha_k r_k+ \frac{(1-\alpha_k) r_k}{P_s+\eta \text{Tr}({\bf W}_k{\bf W}_k^H) },
\end{equation}
where $0\leq \alpha_k\leq 1$ and $1-\alpha_k$ are coefficients reflecting the weights of the SE and EE for user $k$, respectively. 
Maximizing the minimum of $\zeta_k$ for all $k$ can be expressed as
    \begin{align}
        \max_{\{{\bf W}\} }\, & \min_k\left\{
        \alpha_k r_k
        +\frac{(1-\alpha_k) r_k}{P_s+\eta \text{Tr}({\bf W}_k{\bf W}_k^H) }
        \right\}
        &
        \text{s.t.}\,\,& 
        \eqref{eq-13-b}.
    \end{align}
Upon utilizing Lemma \ref{lem-max-sur} and Lemma \ref{lem-1}, we obtain the convex OP 
\begin{subequations}\label{(18)}
    \begin{align}
        \max_{\{{\bf W}\},{\bf t} }\, & \min_k
        \left\{
        \alpha_k \tilde{r}_k
        +(1-\alpha_k)
        \left(2\beta_k^{(z)} t_k-\beta_k^{(z)^2}p_k \right)
        \right\}
        \\
        \text{s.t.}\,\,& \
       \eqref{eq-13-b},\,\,
        \tilde{r}_k-t_k^2\geq 0,\,\,\, \forall k,
    \end{align}
\end{subequations}
where $\beta_k^{(z)}=\frac{\sqrt{r_k(\{{\bf W}^{(z)}\})}}{p_k(\{{\bf W}^{(z)}\})}$, $p_k$ is given by \eqref{eq-pow}, and ${\bf t}=\{t_1, t_2,\cdots,t_K\} $ is the set of auxiliary optimization variables. 

\subsection{Weighted Sum EE}\label{sec-sum-ee}
The weighted sum EE can be formulated as
    \begin{align}\label{eq-sum-ee}
        \max_{\{{\bf W}\} } & \sum_k\!
        \frac{\alpha_k r_k}{P_s+\eta \text{Tr}({\bf W}_k{\bf W}_k^H) }
        &
        \text{s.t.}\,& \eqref{eq-13-b},r_k\geq r_k^{th}, \, \forall k,\!\!
    \end{align}
where $\alpha_k$s are the corresponding weights, and $r_k^{th}$ is the minimum operational rate for user $k$ to ensure a specific quality of service (QoS). 
To find an SP of \eqref{eq-sum-ee}, we utilize Lemma \ref{lem-max-sur} and Lemma \ref{lem-1}, yielding the convex OP
\begin{subequations}\label{eq-45-sumEE}
    \begin{align}
        \max_{\{{\bf W}\},{\bf t} }\, & \sum_k\!\alpha_k\left(
        2\beta_k^{(z)} t_k-\beta_k^{(z)^2}{(P_s+\eta \text{Tr}({\bf W}_k{\bf W}_k^H)) }
        \right)\!\!
        \\
        \text{s.t.}\,\,& \eqref{eq-13-b},\,\,
        \alpha_k \tilde{r}_k-t_k^2\geq 0,\,\,\tilde{r}_k\geq r_k^{th}, \,\,\, \forall k,
    \end{align}
\end{subequations}
where $\beta_k^{(z)}=\frac{\sqrt{r_k(\{{\bf W}^{(z)}\})}}{p_k(\{{\bf W}^{(z)}\})}$, and ${\bf t} $ is defined as in \eqref{(18)}.

\subsection{Geometric Mean of EE Functions}
The geometric mean (GM) of EE is defined as
\begin{equation}
     \bar{e}=\prod_{k} \left(\frac{r_{k}}{P_s+\eta \text{Tr}({\bf W}_{k}{\bf W}_{k}^H ) }\right)^{1/k}.
\end{equation}
Hence, the maximization of the GM of the EE can be written as 
\begin{subequations}\label{eq-gm-ee}
    \begin{align}
        \max_{\{{\bf W}\} }\, & \prod_{k} \left(\frac{r_{k}}{P_s+\eta \text{Tr}({\bf W}_{k}{\bf W}_{k}^H ) }\right)^{1/k}
        \\
        \text{s.t.}\,\,& \eqref{eq-13-b},r_k\geq r_k^{th}, \,\,\, \forall k.
    \end{align}
\end{subequations}
Maximizing $x^{1/K} $ is equivalent to maximizing $x^2$ for $x\geq 0$. Hence, the solution of \eqref{eq-gm-ee} is equivalent to the solution of
\begin{subequations}\label{eq-gm-ee2}
    \begin{align}
        \max_{\{{\bf W}\} }\, & \prod_{k} \left(\frac{r_{k}}{P_s+\eta \text{Tr}({\bf W}_{k}{\bf W}_{k}^H ) }\right)^{2}
        \\
        \text{s.t.}\,\,& \eqref{eq-13-b},r_k\geq r_k^{th}, \,\,\, \forall k.
    \end{align}
\end{subequations}
Upon utilizing Lemma \ref{lem-max-sur} and the lower bounds in Lemma \ref{lem-1} and Lemma \ref{lem-gm2}, we can obtain 
\begin{subequations}\label{eq-gm-ee3}
    \begin{align}
         \max_{\{{\bf W}\},{\bf t} }\, & \sum_k\alpha_k\left(
        2\beta_k^{(z)} t_k-\beta_k^{(z)^2}{(P_s+\eta \text{Tr}({\bf W}_k{\bf W}_k^H)) }
        \right)\!\!
        \\
        \text{s.t.}\,\,& \eqref{eq-13-b},\,\,
         \tilde{r}_k-t_k^2\geq 0,\,\,\tilde{r}_k\geq r_k^{th}, \,\,\, \forall k,
    \end{align}
\end{subequations}
where $\beta_k^{(z)}=\frac{\sqrt{r_k(\{{\bf W}^{(z)}\})}}{p_k(\{{\bf W}^{(z)}\})}$, and $\alpha_k$ is a coefficient, given by 
\begin{equation}
    \alpha_k=2{e}_k\left(\{{\bf W}^{(z)}\}\right)\prod_{i\neq k}{e}_i^{2}\left(\{{\bf W}^{(z)}\}\right),\,\,\forall k.
\end{equation}

\subsection{Weighted-sum SINR Maximization}\label{sec-sinr-max}
The focus of this paper is on MU-MIMO systems supporting multiple-stream data transmission per users. However, for the sake of completeness, we also solve the maximization of weighted-sum SINR in MIMO systems, when a single-stream data transmission is utilized. In this case, we have to optimize over the beamforming vectors $\{{\bf w}\}=\{{\bf w}_1,{\bf w}_2,\cdots, {\bf w}_K\} $,  instead of the matrices $\{{\bf W} \}$, where ${\bf w}_k$ is the beamforming vector, corresponding to user $k$. Thus, the SINR of user $k$ is
\begin{equation}
    \gamma_k=\frac{{\bf w}_k^H{\bf H}_k^H{\bf H}_k{\bf w}_k}{\sigma^2+\sum_{j\neq k}{\bf w}_j^H{\bf H}_k^H{\bf H}_k{\bf w}_j},
\end{equation}
where  $\sigma^2$ is the noise variance at the receiver of user $k$. 

The weighted-sum SINR maximization can be formulated as
\begin{align}\label{eq-sinr-max}
    \max_{\{{\bf w}\} }&\sum_k \alpha_k \gamma_k&\text{s.t.}&\sum_k{\bf w}_k{\bf w}_k^H\leq P,
\end{align}
where $\alpha_k$ is the corresponding SINR weight for user $k$. To obtain an SP of \eqref{eq-sinr-max}, we employ the inequality in \eqref{eq-63} to calculate a surrogate function for $\gamma_k$ as
\begin{multline}\label{eq-45-sinr}
    \gamma_k\geq \tilde{\gamma}_k=\frac{2\mathfrak{R}\left\{({\bf H}_k{\bf w}_k^{(z)})^H{\bf H}_k{\bf w}_k\right\}}{\sigma^2+\sum_{j\neq k}|{\bf h}_k{\bf w}_j^{(z)}|^2}
    \\-\frac{|{\bf h}_k{\bf w}_k^{(z)}|^2}{\left(\sigma^2+\sum_{j\neq k}|{\bf h}_k{\bf w}_j^{(z)}|^2\right)^2}\left(\sigma^2+\sum_{j\neq k}|{\bf h}_k{\bf w}_j|^2\right),
\end{multline}
which is quadratic and concave. Then, we attain an SP of \eqref{eq-sinr-max} by iteratively solving
\begin{align}\label{eq-sinr-max2}
    \max_{\{{\bf w}\} }&\sum_k \alpha_k \tilde{\gamma}_k&\text{s.t.}&\sum_k{\bf w}_k{\bf w}_k^H\leq P.
\end{align}
Note that the bound in \eqref{eq-45-sinr} can be employed to obtain an SP of other OPs with SINRs such as the max-min SINR of users. 

\section{Extension to RIS-aided Systems }\label{sec=iv}
 In RIS-aided systems, we can optimize the system performance not only through the beamforming matrices $\{{\bf W} \}$ at the BS, but also through the channels by modifying the RIS coefficients. In this section, we extend our solutions in Section \ref{sec=iii} and Section \ref{sec=iii-max} to MU-MIMO RIS-aided BCs relying on FBL coding. To this end, we employ an alternating optimization (AO) approach, which is a typical technique of developing resource allocation schemes for multi-user RIS-assisted systems. More particularly, we separate the optimization of $\{{\bf W} \}$ and RIS coefficients ${\bf\Theta}$ at each iteration. Indeed, we first optimize over $\{{\bf W}\}$, when ${\bf\Theta}$ is kept fixed at ${\bf\Theta}^{(z-1)}$. Then we alternate and optimize over ${\bf\Theta}$, when $\{{\bf W} \}$ is kept fixed at $\{{\bf W}^{(z-1)} \}$. Optimization of $\{{\bf W} \}$ is similar to the solutions detailed in Section \ref{sec=iii} and Section \ref{sec=iii-max}. Thus, we only consider the optimization of ${\bf\Theta}$ in this section. To this end, a concave lower bound of the rates in ${\bf\Theta}$ is needed similar to when optimizing over $\{{\bf W} \}$. A concave lower bound for $\hat{r}_k$ is provided in \cite[Corollary 1]{soleymani2024optimization}. To avoid redundancy, we do not restate them here.

We utilize the channel model in \cite{pan2020multicell} for a MIMO RIS-aided link as
    ${\bf H}_k= {\bf G}_k{\bf\Theta} {\bf G}+\tilde{{\bf G}}_k$,
where $\tilde{{\bf G}}_k$ is the channel between the BS and user $k$, ${\bf G}$ is the channel between the RIS and the BS, ${\bf G}_k$ is the channel between the RIS and user $k$, and ${\bf\Theta}$ is the diagonal matrix, encompassing the RIS coefficients. 
Assuming a nearly passive diagonal RIS, the feasible set of the RIS coefficients is convex as
\begin{equation}
    \mathcal{D}=\{\theta_{mn}: |\theta_{mm}|^2 \leq 1,\theta_{mn} = 0,\forall m\neq n \},
\end{equation}
where $\theta_{mn}$ denotes the entry of the $m$-th row and $n$-th column of ${\bf\Theta}$. Note that we also consider non-convex feasibility sets for ${\bf\Theta}$ in Section \ref{sec-v-d}.

In this section, we also provide a few numerical examples using the simulation parameters and setup in Appendix \ref{app-sim}. Note that we are not aware of any other treatises solving the OPs in Section \ref{sec=iii} and Section \ref{sec=iii-max} for MU-MIMO systems using FBL coding. Hence, in the numerical results, we consider the following schemes:
\begin{itemize}
    \item \textbf{RIS} (or \textbf{RIS}$_i$): The solution of our framework for RIS-aided systems with feasibility set $\mathcal{D}$ (or $\mathcal{D}_i$).

    \item \textbf{No-RIS}: The solution of our framework for systems, operating without RIS.

    \item \textbf{RIS-Rand}: The solution of our framework for RIS-aided systems, employing random RIS elements.
\end{itemize}

\subsection{Sum Delay Minimization}\label{sec-v-c}
 {The optimization of $\{{\bf W} \}$ for fixed ${\bf\Theta}$ is provided in Section \ref{sec-sum-delay}, when the sum delay is minimized. Indeed, solving \eqref{eq-sdm-sur} gives $\{{\bf W}^{(z)} \}$. Here, we solve the sum delay minimization for fixed $\{{\bf W} \}$, which can be written as 
\begin{align}\label{eq-sdm-ris}
        \min_{{\bf \Theta}\in\mathcal{D} }\, & \sum_k\frac{L_k}{r_k}.
\end{align}
To solve \eqref{eq-sdm-ris}, we employ an approach similar to the one in Section \ref{sec-sum-delay}. More specifically, we leverage Lemma 
\ref{cor-min} and a concave lower bound for $r_k$ as $\hat{r}_k$, provided in \cite[Corollary 1]{soleymani2024optimization}, which yields
\begin{align}\label{eq-sdm-ris-sur}
        \min_{{\bf \Theta}\in\mathcal{D},{\bf t} }\, & \sum_k\frac{L_k}{t_k}
        &
        \text{s.t.}\,\,&  \hat{r}_k\geq t_k>0,\,\, \forall k,  
    \end{align}
whose solution is ${\bf \Theta}^{(z)}$.} The algorithm converges to an SP of the sum delay minimization OP, since the algorithm falls into the MM framework.

\subsection{Spectral-energy-efficiency Tradeoff}\label{sec-v-a}
 {The optimization of $\{{\bf W} \}$ for fixed ${\bf\Theta}$ is provided in Section \ref{sec-iv-a}, when the SEE tradeoff metrics are maximized. More specifically, $\{{\bf W}^{(z)}\}$ is obtained by solving \eqref{eq-sdm-sur}. Here, we provide a solution for updating ${\bf\Theta}$, when maximizing the tradeoff metric $\zeta$ stated in \eqref{eq-11}, yielding the following OP 
    \begin{align}\label{opt-13-see-theta}
        \max_{{\bf \Theta}\in\mathcal{D}}\,\, & \alpha\sum_k r_k
        +
        \frac{(1-\alpha)\sum_k r_k}{\sum_k[P_s+\eta \text{Tr}({\bf W}_k^{(z)} {\bf W}_k^{(z)^H})]}. 
    \end{align}
To solve \eqref{opt-13-see-theta}, we employ the concave lower bound $\hat{r}_k$, leading to the following convex OP
    \begin{align}\label{opt-13-see-theta2}
        \max_{{\bf \Theta}\in\mathcal{D}}\,\, & \alpha\sum_k \hat{r}_k
        +
        \frac{(1-\alpha)\sum_k \hat{r}_k}{\sum_k[P_s+\eta \text{Tr}({\bf W}_k^{(z)} {\bf W}_k^{(z)^H})] }.
    \end{align}
The solution of \eqref{opt-13-see-theta2} yields ${\bf\Theta}^{(z)} $, and the overall algorithm converges to an SP.}

\begin{figure}[t]
    \centering
    \begin{subfigure}[t]{0.24\textwidth}
        \centering
           \includegraphics[width=\textwidth]{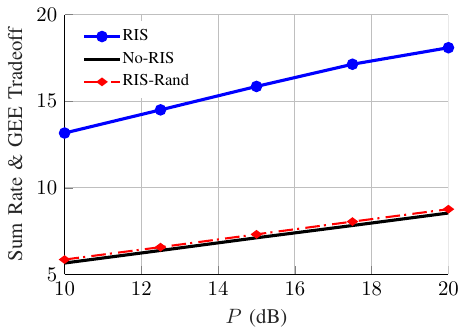}
        \caption{Using the metric in \eqref{eq-11} with $\alpha=\frac{1}{2}$.}
    \end{subfigure}
\begin{subfigure}[t]{0.24\textwidth}
        \centering       \includegraphics[width=\textwidth]{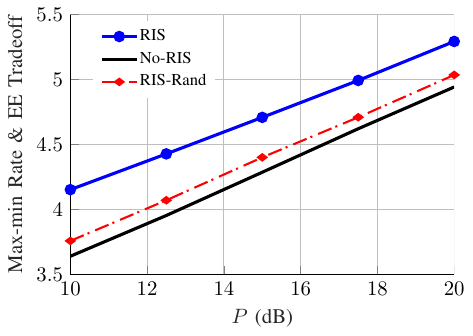}
        \caption{Using the metric in \eqref{eq-see} with $\alpha_k=\frac{1}{2}$ for all $k$.}
    \end{subfigure}%
    \caption{SEE tradeoff versus $P$ for $N_{BS}=N_u=4$, and $K=3$.} 
	\label{Fig-rr5} 
\end{figure} 

In Fig. \ref{Fig-rr5}, we show the average SEE tradeoff, using the metrics in \eqref{eq-11} and \eqref{eq-see}, versus the BS power budget. The RIS improves both the SEE tradeoff metrics. Moreover, the benefits of RIS are more significant, when sum rate and GEE are considered as the metrics for the SE and EE, respectively.  
\subsection{Sum EE Maximization}\label{sec-v-b}
 {The optimization of $\{{\bf W} \}$ for fixed ${\bf\Theta}$ can be found in Section \ref{sec-sum-ee}, when the sum EE is maximized. Particularly, $\{{\bf W}^{(z)} \}$ is calculated by solving \eqref{eq-45-sumEE}.
Upon employing the concave lower bound $\hat{r}_k$, we can compute ${\bf\Theta}^{(z)}$ by solving
    \begin{align}\label{eq-sum-ee-theta}
        \max_{{\bf \Theta}\in\mathcal{D} } & \sum_k
        \frac{\alpha_k \hat{r}_k}{P_s+\eta \text{Tr}({\bf W}_k^{(z)} {\bf W}_k^{(z)^H}) },
    \end{align}
converging to an SP of the sum EE maximization problem.}

\begin{figure}[t]
    \centering
      \includegraphics[width=.44\textwidth]{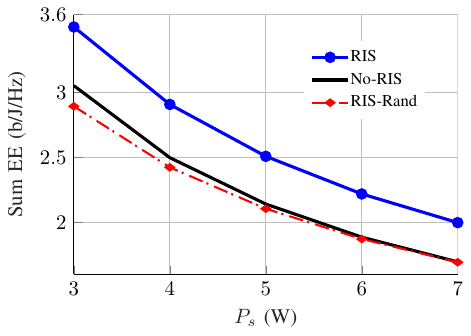}
    \caption{Average sum EE obtained by solving \eqref{eq-sum-ee} and \eqref{eq-sum-ee-theta} versus $P$ for $N_{{BS}}=N_{u}=5$, and $K=2$.} 
	\label{Fig-sum-e}  
\end{figure}
In Fig. \ref{Fig-sum-e}, we show the average sum EE versus $P_s$. In this example, the RIS improves the sum EE, when its elements are optimized based on our framework. However, the RIS may not provide any gain, if its elements are chosen randomly.

\subsection{ {Other Feasibility Sets for the RIS Coefficients} }\label{sec-v-d}

 {In this subsection, we detail how one can leverage the frameworks proposed in this paper to optimize non-convex sets of RIS coefficients. To this end, we present three additional feasibility sets for the scattering matrix of nearly passive RISs, based on the models in \cite[Sec. II.B]{wu2021intelligent}. Then, we propose a suboptimal approach to `convexify' these non-convex sets. Note that the other parts of our solutions, including calculating concave lower bounds for the rates and transforming a complicated FMP OP to a simpler surrogate OP, remain unchanged, when considering a non-convex set for the feasible values of ${\bf \Theta}$. To elaborate on this issue, we solve the surrogate OP in \eqref{eq-sdm-ris} with these non-convex sets. Additionally, note that the algorithms converge, since they generate a monotone sequence of the OFs. However, we do not make any claim on the optimality of the resulting solutions, when the feasibility set of ${\bf \Theta}$ is non-convex.}

 {Firstly, we consider the following set satisfying the unit modulus constraint:
\begin{equation}
   \mathcal{D}_{1}=\left\{\theta_{mn}:|\theta_{mm}|= 1, \theta_{mn}=0,\forall m\neq n\right\}.
\end{equation}
The constraint $|\theta_{mm}|=1$ can be written as the two constraints $|\theta_{mm}|^2\leq 1$ and $|\theta_{mm}|^2\geq 1$. The former is convex, but the latter is not. To convexify constraint $|\theta_{mn}|^2\geq 1$, we relax the constraint and employ the lower bound in Lemma \ref{lem=5}, yielding 
\begin{equation}\label{+=+}
|\theta_{mm}|^2\!\geq\!|\theta_{mm}^{(z-1)}|^2\!-2\mathfrak{R}\{\theta_{mm}^{(z-1)^*}(\theta_{mm}-\theta_{mm}^{(z-1)})\}\!\geq 1-\varsigma,\!\!
\end{equation}
where $\varsigma>0$. Upon using \eqref{+=+}, the resultant surrogate OP for $\mathcal{D}_{1}$ is
\begin{subequations}\label{eq-sdm-ris-sur-t1}
    \begin{align}
        \min_{{\bf \Theta},{\bf t} }\, & \sum_k\frac{L_k}{t_k}
        &
        \text{s.t.}\,\,& |\theta_{mm}|^2\leq 1, \, \text{and} \,\, \eqref{+=+},\,\,\, \forall m, 
        \\        &&&  \hat{r}_k\geq t_k>0,\,\, \forall k, \label{eq-62b}
    \end{align}
\end{subequations}
whose solution is denoted by $\hat{{\bf \Theta}}=\text{diag}(\hat{\theta}_{11},\hat{\theta}_{22},\cdots,\hat{\theta}_{MM})$, where $M$ is the number of RIS elements, and $\text{diag}(\cdot)$ denotes a diagonal matrix. Due to the relaxation in \eqref{+=+}, $\hat{{\bf \Theta}}$ may not be in $\mathcal{D}_{1}$. To address this issue, we project  $\hat{{\bf \Theta}}$ to  $\mathcal{D}_1$ by normalizing its diagonal elements as $\hat{\theta}_{mm}^{\text{new}}=\frac{\hat{\theta}_{mm}}{|\hat{\theta}_{mm}|}$ for all $m$. To ensure convergence, we update ${\bf \Theta}$ as
\begin{equation}\label{eq-42}
\bm{\Theta}^{(z)}=
\left\{
\begin{array}{lcl}
\hat{\bm{\Theta}}^{\text{new}}&\text{if}&
\sum_kd_k\left(\left\{\mathbf{W}^{(z)}\right\},\hat{{\bf\Theta}}^{\text{new}}\right)\leq
\\
&&
\sum_kd_k\left(\left\{\mathbf{W}^{(z)}\right\},{\bf \Theta}^{(z-1)}\right)
\\
\bm{\Theta}^{(z-1)}&&\text{Otherwise},
\end{array}
\right.
\end{equation}
where $\hat{\bm{\Theta}}^{\text{new}}=\text{diag}(\hat{\theta}_{11}^{\text{new}},\hat{\theta}_{22}^{\text{new}},\cdots,\hat{\theta}_{MM}^{\text{new}})$.
The updating rule in \eqref{eq-42} guarantees convergence because a sequence of non-increasing $\sum_kd_k$ is generated. 
}

 {Another set considered is based on the model in \cite{abeywickrama2020intelligent}, assuming that the amplitude of each RIS coefficient is a deterministic function of its phase as
\begin{equation}\label{eq*=*}
\mathcal{F}(\angle \theta_{mi})= |\theta|_{\min}+( 1\!-|\theta|_{\min})\left(\!\!\frac{\sin\left(\angle \theta_{mi}-\varphi\right)+1}{2}\!\right)^{\varpi}\!\!\!,\!\!
\end{equation}
where $|\theta|_{\min}$, $\varpi$, and $\varphi$ are non-negative constant values. Additionally, $\angle x$ takes the phase of $x$.
This set can be written as  \cite[Eq. (7)]{soleymani2022noma}
\begin{multline}
\mathcal{D}_{2}=\left\{\theta_{mn}\!:|\theta_{mm}|= \mathcal{F}(\angle \theta_{mm}), \,\angle \theta_{mm}\in[-\pi,\pi],
\right. \\ \left.
\theta_{mn}=0,\forall m\neq n\right\}.
\end{multline}
To `convexify' $\mathcal{D}_{2}$, we first relax the dependency of the amplitude of $\theta_{mm}$ on its phase, resulting in
\begin{align}\label{eq=9-2}
 |\theta|_{\min}^2\leq |\theta_{mm}|^2&\leq 1,\,\,\,\,\forall m,
\end{align}
where $|\theta_{mm}|^2\leq 1$ is a convex constraint, but $|\theta|_{\min}^2\leq |\theta_{mm}|^2$ is non-convex. 
To tackle this issue, we derive a concave lower bound for $|\theta_{mn}|^2$ using Lemma \ref{lem=5}, yielding the convex constraint
\begin{equation}\label{eq-50-2}
|\theta_{mm}^{(z-1)}|^2+2\mathfrak{R}\left(\theta_{mm}^{(z-1)}(\theta_{mm}-\theta_{mm}^{(z-1)})^*\right)\geq |\theta|_{\min}^2.
\end{equation}
 Finally, the surrogate OP for $\mathcal{D}_{2}$ is
    \begin{align}
        \min_{{\bf \Theta},{\bf t} }\, & \sum_k\frac{L_k}{t_k}
        &
        \text{s.t.}\,\,& \eqref{eq-62b},\,\,  \eqref{eq-50-2},\,\, \text{and} \,\,|\theta_{mm}|^2\leq 1,  \, \forall m, 
    \end{align}
with the solution $\hat{{\bf \Theta}}=\text{diag}(\hat{\theta}_{11},\hat{\theta}_{22},\cdots,\hat{\theta}_{MM})$. To ensure a feasible solution, we project $\hat{{\bf \Theta}}$ into $\mathcal{D}_{2}$ by choosing $\hat{{\bf \Theta}}^{\text{new}}=\text{diag}(\hat{\theta}^{\text{new}}_{11},\hat{\theta}^{\text{new}}_{22},\cdots,\hat{\theta}^{\text{new}}_{MM})$, where $\hat{{\theta}}^{\text{new}}_{mm}=\mathcal{F}(\angle\hat{{\theta}}_{mm})$ for all $m$, where $\mathcal{F}$ is defined as in \eqref{eq*=*}. Finally, ${\bf\Theta}$ is updated based on \eqref{eq-42} to ensure convergence. 
}

 {The other feasibility set considered assumes a discrete domain for the phase of the RIS coefficients, while the amplitude is kept fixed at $1$ as 
\begin{multline}
\mathcal{D}_{3}=\left\{\theta_{mn}:|\theta_{mm}|= 1,\angle\theta_{mm}=\{\phi_1,\phi_2,\cdots,\phi_I\},
\right. \\ \left.
\theta_{mn}=0,\forall m\neq n\right\},
\end{multline}
where the $\phi_i$s for all $i$ are the only possible phase shifts that can be tuned \cite{wu2021intelligent}. 
To obtain a suboptimal solution for $\mathcal{D}_{3}$, we first relax the assumption of having a discrete domain for the phase of $\theta_{mm}$. By this relaxation, ${\bf\Theta}$ can take any value in $\mathcal{D}_{1}$. Then, we compute $\hat{{\bf \Theta}}$ by solving  \eqref{eq-sdm-ris-sur-t1}. Additionally, we project $\hat{{\bf \Theta}}$ into $\mathcal{D}_{3}$ by firstly normalizing its amplitude and then taking the closest phase to $\angle\hat{\theta}_{mm}$ in $\mathcal{D}_{3}$. The projection of $\hat{{\bf \Theta}}$ in $\mathcal{D}_{3}$ is denoted by $\hat{{\bf\Theta}}^{\text{new}}$. Finally, ${\bf \Theta}^{(z)}$ is calculated according to the rule in \eqref{eq-42} to guarantee convergence.}

\begin{figure}[t]
    \centering
      \includegraphics[width=.48\textwidth]{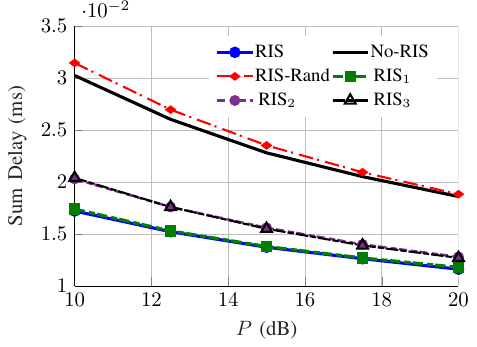}
    \caption{Average sum delay, using the metric in \eqref{eq-delay-18}, versus $P$ for $N_{{BS}}=4$, $N_{u}=3$, and $K=2$.} 
	\label{Fig-sum-d} 
\end{figure}
 {In Fig. \ref{Fig-sum-d}, we show the average sum delay versus $P$, using Monte Carlo simulations. The sum delay of the algorithms considered in Fig. \ref{Fig-sum-d} decreases with the power budget of the BS. Moreover, for each feasibility set, the RIS substantially reduces the delay, when appropriately optimized. Interestingly, to achieve a target sum delay, the RIS-aided system requires a significantly lower transmission power, which indicates how energy-efficient RIS can be, when low-latency communication is required. Note that $I=8$ for $\mathcal{D}_{3}$ in this example, meaning that the phase of each RIS coefficient in $\mathcal{D}_{3}$ can only take values $\frac{i\pi}{4}$ for $i=1,2,\cdots,8$.}

\section{Comparison with Dinkelbach-based Algorithms} \label{sec-dink}

Our framework can also solve multiple-ratio max-min FMP problems, which are a special case of the FMP problems formulated in \eqref{(1)}. As a practical application of these OPs, we consider the maximization of the minimum EE, referred to as the max-min EE problem, as 
\begin{align}\label{eq-maxmin-ee}
    \max_{\{{\bf W} \} }\,&\min_k \frac{r_{k}}{P_s+\eta \text{Tr}({\bf W}_{k}{\bf W}_{k}^H ) }
    & \text{s.t.}\,\,&\eqref{eq-13-b},\,\,r_k\!\geq\! r_k^{th}\!,\forall k.\!\!
\end{align}
To find an SP of \eqref{eq-maxmin-ee} with the GDA, we first employ the bound in Lemma \ref{lem-1} and define the surrogate OP 
\begin{align}\label{eq-maxmin-ee2}
    \max_{\{{\bf W} \} }\,&\min_k \frac{\tilde{r}_{k}}{P_s+\eta \text{Tr}({\bf W}_{k}{\bf W}_{k}^H ) }
    & \text{s.t.}\,\,&\eqref{eq-13-b},\,\,\tilde{r}_k\!\geq\! r_k^{th}\!,\forall k.\!\!
\end{align}
Now we apply the GDA to \eqref{eq-maxmin-ee2}, which means that we iteratively solve
\begin{align}\label{eq-maxmin-ee3}
    \max_{\{{\bf W} \} }\,&\min_k\left\{ {\tilde{r}_{k}}-\mu^{(m)}p_k
    \right \}
    & \text{s.t.}\,\,&\eqref{eq-13-b},\,\,\tilde{r}_k\!\geq\! r_k^{th}\!,\forall k.\!\!
\end{align}
and update $\mu^{(m)}$ as
\begin{equation}\label{mu}
    \mu^{(m)}=\min_k \frac{\tilde{r}_{k} (\{{\bf W}^{(m-1)} \})}{P_s+\eta \text{Tr}({\bf W}_{k}^{(m-1)}{\bf W}_{k}^{(m-1)^H} ) },
\end{equation}
where $\{{\bf W}^{(m-1)} \}$ is the solution of \eqref{eq-maxmin-ee3} in the $(m-1)$-st iteration of the GDA (i.e., the inner loop). 
Therefore, to solve \eqref{eq-maxmin-ee} with the GDA, we have to employ a twin-loop algorithm in which the outer loop is based on MM, while the inner loop is based on the GDA. The GDA solution for \eqref{eq-maxmin-ee} is summarized in Algorithm III. 
\doublespacing 
\begin{table}[htb]
\small
\begin{tabular}{l}
\hline 
 \textbf{Algorithm III}: The two-loop GDA algorithm to solve \eqref{eq-maxmin-ee}.  \\
\hline
\hspace{0.2cm}{\textbf{Initialization}}\\
\hspace{0.2cm}Set $\delta_1$, $\delta_2$, $z=1$,  $\{\mathbf{W}\}=\{\mathbf{W}^{(0)}\}$\\
\hline 
\hspace{0.2cm}
\textbf{While} $\left(\underset{\forall k}{\min}\,e^{(z)}_{k}-\underset{\forall k}{\min}\,e^{(z-1)}_{k}\right)/\underset{\forall k}{\min}\,e^{(z-1)}_{k}\geq\delta_1$\\ 
\hspace{.6cm}{Calculate $\tilde{r}_{lk}$, 
using Lemma \ref{lem-1}}\\ 
\hspace{.6cm}\textbf{While} $\left(\underset{\forall k}{\min}\,e^{(m)}_{k}-\underset{\forall k}{\min}\,e^{(m-1)}_{k}\right)/\underset{\forall k}{\min}\,e^{(m-1)}_{k}\geq\delta_2$\\ 
\hspace{1.2cm}{Calculate $\mu^{(m)}$ using \eqref{mu}}\\
\hspace{1.2cm}{Calculate $\{{\bf W}\}$ by solving \eqref{eq-maxmin-ee3}}\\
\hspace{1.2cm} $m=m+1$\\
\hspace{.6cm}\textbf{End (While)}\\
\hspace{.6cm}$z=z+1$\\
\hspace{0.2cm}\textbf{End (While)}\\
\hspace{0.2cm}{{\bf Return} $\{\mathbf{W}^{(\star)}\}$}\\
\hline 
\end{tabular} 
\end{table}
\singlespacing

In contrast to the Dinkelbach-based algorithms, our framework requires only a single-loop, which may reduce the implementation and computational complexities. Specifically, upon utilizing our framework, an SP of \eqref{eq-maxmin-ee} can be found by iteratively solving
\begin{subequations}\label{eq-our-sol}
    \begin{align}
        \max_{\{{\bf W}\},{\bf t} }\, & \min_k\left\{
        2\beta_k^{(z)} t_k-\beta_k^{(z)^2}{(P_s+\eta \text{Tr}({\bf W}_k{\bf W}_k^H)) }
        \right\}
        \\
        \text{s.t.}\,\,& \eqref{eq-13-b},\,\,
        \alpha_k \tilde{r}_k-t_k^2\geq 0,\,\,\tilde{r}_k\geq r_k^{th}, \,\,\, \forall k.
    \end{align}
\end{subequations}
We summarize our solution for \eqref{eq-maxmin-ee} in Algorithm IV. 
\doublespacing 
\begin{table}[htb]
\small
\begin{tabular}{l}
\hline 
 \textbf{Algorithm IV}: Our single-loop framework to solve \eqref{eq-maxmin-ee}.  \\
\hline
\hspace{0.2cm}{\textbf{Initialization}}\\
\hspace{0.2cm}Set $\delta$,  $z=1$,  $\{\mathbf{W}\}=\{\mathbf{W}^{(0)}\}$\\
\hline 
\hspace{0.2cm}
\textbf{While} $\left(\underset{\forall k}{\min}\,e^{(z)}_{k}-\underset{\forall k}{\min}\,e^{(z-1)}_{k}\right)/\underset{\forall k}{\min}\,e^{(z-1)}_{k}\geq\delta$\\ 
\hspace{.6cm}{Calculate $\tilde{r}_{lk}$, 
using Lemma \ref{lem-1}}\\ 
\hspace{.6cm}{Calculate $\{{\bf W}\}$ by solving \eqref{eq-our-sol}}\\
\hspace{.6cm}$z=z+1$\\
\hspace{0.2cm}\textbf{End (While)}\\
\hspace{0.2cm}{{\bf Return} $\{\mathbf{W}^{(\star)}\}$}\\
\hline 
\end{tabular}  
\end{table}
\singlespacing
\subsection{Performance Comparison}
The GDA and our framework yield the same type of solution, i.e., convergence to an SP of \eqref{eq-maxmin-ee}. In Fig. \ref{Fig-nfp-gda}, we compare the performance of our framework to that of Dinkelbach-based algorithms via Monte Carlo simulations. The simulation parameters and setup are described in Appendix \ref{app-sim}.
\begin{figure}[t]
    \centering
    \begin{subfigure}[t]{0.24\textwidth}
        \centering
           \includegraphics[width=\textwidth]{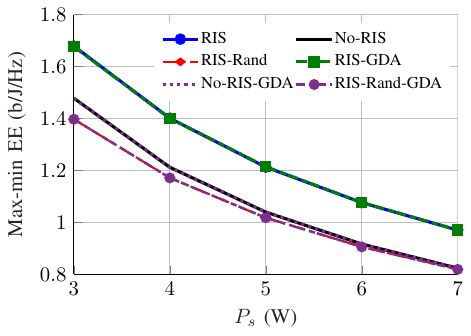}
        \caption{Max-min EE comparison by solving \eqref{eq-maxmin-ee}.}
    \end{subfigure}
\begin{subfigure}[t]{0.24\textwidth}
        \centering       \includegraphics[width=\textwidth]{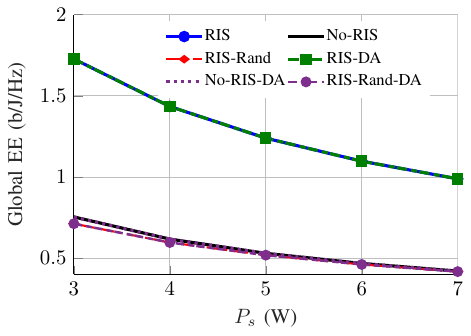}
        \caption{Comparison of GEE maximization, using the metric in \eqref{eq-gee}.}
    \end{subfigure}%
    \caption{Comparison of our framework and Dinkelbach-based algorithms for $P=10$ dB, $N_{BS}=N_u=5$, and $K=2$.} 
	\label{Fig-nfp-gda} 
\end{figure}
In these examples, our framework performs very similarly to the Dinkelbach-based algorithms. However, the Dinkelbach-based algorithms cannot solve any of the OPs considered in Section \ref{sec=iii},  and Section \ref{sec=iii-max}. Indeed, our framework is more general than the Dinkelbach-based algorithms and can solve a large variety of FMP problems. Moreover, the Dinkelbach-based algorithms require a twin-loop implementation, which is much more difficult to implement than our framework.

\subsection{Computational Complexity Comparison}
We approximate the number of multiplications to obtain a solution of \eqref{eq-maxmin-ee}  to compare the computational complexities of our framework and that of the GDA. These algorithms iteratively update the parameters by solving a convex OP. According to \cite[Chapter 11]{boyd2004convex}, the number of Newton iterations required to solve a convex OP increases with the number of constraints. The GDA updates $\{{\bf W}\}$ by solving \eqref{eq-maxmin-ee3}, having $2K+1$ constraints. To solve each Newton iteration, $K$ surrogate functions have to be calculated for the rates, which approximately requires $KN_{BS}^2(2N_{BS}+N_{u})$ multiplications. Hence, the approximate number of multiplications to solve \eqref{eq-maxmin-ee3} is $\mathcal{O}\left[N_{BS}^2K\sqrt{2K+1}(2N_{BS}+N_{u})\right]$.  {Similarly, the number of multiplications required to update $\{{\bf W}\}$ in each iteration of our framework can be approximated as $\mathcal{O}\left[N_{BS}^2K\sqrt{3K+1}(2N_{BS}+N_{u})\right]$, which is on the same order as the number of multiplications required for the GDA. As the GDA involves two loops, the total number of iterations necessitated for solving \eqref{eq-maxmin-ee} by the GDA can be limited to $MN$, where $M$ and $N$ are the maximum numbers of iterations for the inner and outer loops, respectively. However, our framework involves only a single loop, which makes its implementation easier than that of the GDA and may reduce the required total number of iterations.} 

 {To sum up, the computational complexity of updating $\{{\bf W}\}$ in each step is of the same order for both algorithms. More specifically, the approximate number of multiplications increases with $K^{3/2}$, $N_{BS}^3$, and $N_u$ for these algorithms. However, as the GDA is a dual-loop algorithm, the total number of updates for the GDA might be higher than for our framework.}
\begin{figure}[t]
    \centering
    \begin{subfigure}[t]{0.24\textwidth}
        \centering
           \includegraphics[width=\textwidth]{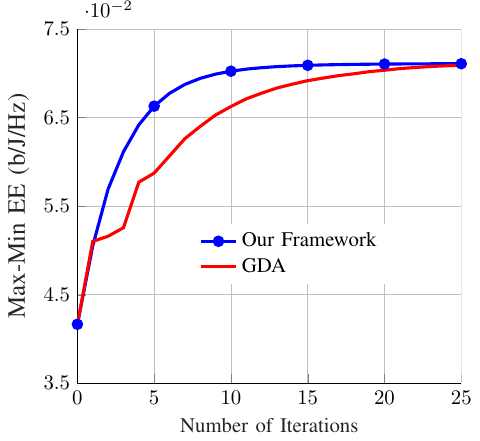}
        \caption{Max-min EE comparison by solving \eqref{eq-maxmin-ee}.}
    \end{subfigure}
\begin{subfigure}[t]{0.24\textwidth}
        \centering       \includegraphics[width=\textwidth]{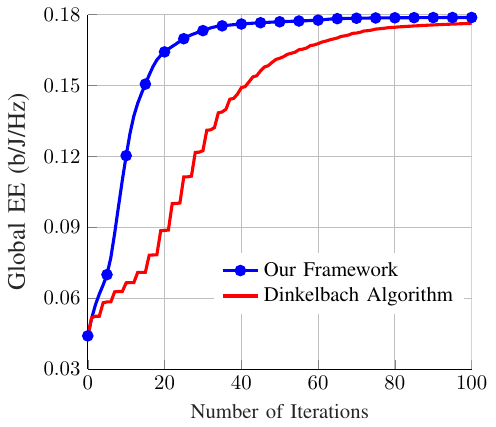}
        \caption{Comparison of GEE maximization, using the metric in \eqref{eq-gee}.}
    \end{subfigure}%
    \caption{Convergence comparison of our framework and Dinkelbach-based algorithms for $P=10$ dB, $P_s=5$ W, $N_{BS}=N_u=2$, and $K=5$.} 
	\label{Fig-nfp-gda-2} 
\end{figure}
In Fig. \ref{Fig-nfp-gda-2}, we show the convergence of our algorithm and Dinkelbach-based algorithms, utilizing Monte Carlo simulations. The number of iterations for the Dinkelbach-based algorithms equals the number of updating $\{{\bf W} \}$ to make a fair comparison. Our FMP solver converges faster than Dinkelbach-based algorithms, leading to less computational complexities.  {The average number of iterations until convergence to a max-min EE in Fig. \ref{Fig-nfp-gda-2}a  is $24$ for our framework and $33$ for the GDA. Moreover, our solver requires $76$ iterations on average to solve the maximization of GEE, while the Dinkelbach algorithm needs an average of $169$ iterations. Note that the convergence of our framework is smoother than that of Dinkelbach-based algorithms, since our solver is a single-loop algorithm. Indeed, the inner loop of Dinkelbach-based algorithms improves the OF much less than the outer loop. Hence, the steps for Dinkelbach-based algorithms seen in Fig. \ref{Fig-nfp-gda-2} correspond to updating the outer loop, making the curves non-smooth.}

\section{Extension to Other Practical Scenarios}\label{sec-new-vii}
 Our general FMP solver is applicable to diverse MU-MIMO systems such as RIS-aided systems, multiple-access channels, multi-user interference channels (ICs), multi-cell BCs, and cognitive radio systems. In this work, we provide a practical example for single-cell BCs operating with and without RIS, assuming perfect devices and global channel state information (CSI). However, our framework is much more general than the system models and OPs described in Section \ref{sec=iii}, Section \ref{sec=iii-max}, and Section \ref{sec=iv}. In this section, we provide insights on how one can apply the general FMP solver to other practical scenarios, specifically in connection with the existing works in the literature. To this end, we do not focus on only MU-MIMO FBL systems.

\subsection{Applications in Systems with Hardware Impairments}
 In this subsection, we elaborate on how our framework applies to multiple-antenna MU systems affected by various sources of hardware imperfections. A common approach for studying hardware impairments (HWI) is to model the aggregated impact of these imperfections as additive Gaussian noise, having a variance proportional to the power of the transmitted/received signal \cite{xia2015hardware, soleymani2019improper, bjornson2013new, cheng2018performance, soleymani2020rate}. Here, we consider the scenario examined in one of these references, specifically \cite{soleymani2020rate}, which derives the rate region of a $K$-user MIMO IC by modeling HWIs as additive Gaussian noise and solving the maximization of the minimum weighted rate as formulated in \cite[Eq. (6)]{soleymani2020rate}. Instead of the minimum weighted rate OP, our frameworks can be used to solve FMP OPs such as those discussed in Section \ref{sec=iii} and Section \ref{sec=iii-max}. In particular, one can employ the concave lower bounds for the user rates developed in \cite[Theorem 1]{soleymani2020rate} and apply our solution in Section \ref{sec-sum-delay} to minimize the sum delay of users in the $K$-user MIMO IC with HWI.

Another common HWI is I/Q imbalance (IQI) \cite{boulogeorgos2016energy, javed2019multiple, soleymani2020improper}. When a device suffers from IQI, its output signal becomes a widely linear transformation of its input signal, rendering the resultant signal improper. In \cite{soleymani2020improper}, an optimization framework for the $K$-user MIMO IC with IQI was proposed to maximize the weighted sum rate, the weighted minimum rate/EE, and the GEE. By leveraging our FMP solver, the framework in \cite{soleymani2020improper} can be extended to tackle the FMP OPs presented in Section \ref{sec=iii} and Section \ref{sec=iii-max}. In particular, one can use \cite[Lemma 2]{soleymani2020improper} to transform the improper signals/noise caused by IQI into real-domain signals. Then, a concave lower bound for the rates can be derived based on the results in \cite[Lemma 5]{soleymani2020improper}. Once the lower bounds are calculated, Lemma \ref{cor-min} and Lemma \ref{lem-max-sur} can be used to solve the resulting minimization and maximization FMP problems, respectively.

\subsection{Applications in Systems with Imperfect CSI}
 {The practical examples considered in the previous sections assume perfect global CSI. However, the proposed framework can also be employed when developing robust designs following the worst-case robustness principle conceived for noisy CSI. Indeed, the results in Section \ref{secii-new} are general and independent of CSI assumptions. For robust designs following the worst-case robustness principle, each channel is assumed to belong to an uncertainty set with a specified probability \cite{vorobyov2003robust}. Then system parameters, including beamforming vectors or RIS scattering matrices, are designed to guarantee a specific worst-case performance for all channels within the uncertainty set \cite{vorobyov2003robust}. To deal with the uncertainty set, one can use the S-procedure or its generalization presented in \cite[Lemma 1]{yu2020robust}. Following this principle, the authors of \cite{li2024robust} proposed a robust rate-splitting scheme for maximizing the minimum rate in MISO BCs with FBL. Upon leveraging our FMP frameworks, the scheme in \cite{li2024robust} can be extended to solve the FMP OPs of Section \ref{sec=iii} and Section \ref{sec=iii-max} in MU-MISO FBL BCs.}

\subsection{Other Future Research Lines}
This subsection outlines additional future research directions for our FMP framework.  {In the practical examples considered in this paper, we focus on Gaussian signals, although the proposed framework itself is independent of the signaling schemes employed. An interesting research direction is to study OPs obtained for non-Gaussian signals and discrete constellations, which might be even more practical.} Moreover, we explain how the framework can be applied to systems with imperfect devices, considering two widely-used HWI models. Another timely research avenue is to investigate the impact of low-resolution analog-to-digital converters (ADCs), especially in applications to ultra-wide-band systems. Finally, we highlight three other promising applications of our FMP framework in the following.

\subsubsection{Emerging Technologies for Intelligent Surfaces}
In this paper, we study only a nearly passive diagonal RIS. However, the framework can be used in other RIS architectures, including globally passive \cite{fotock2023energy, soleymani2024rate2, soleymani2024energy}, non-diagonal RIS \cite{santamaria2023snr, soleymani2023optimization, li2022beyond, santamaria2024mimo}, active RIS \cite{zhang2022active}, multi-sector \cite{soleymani2024maximizing}, simultaneously transmitting and reflecting (STAR) RIS \cite{ahmed2023survey, liu2021star, soleymani2023spectral} and stacked intelligent surfaces \cite{an2023stacked}. Additionally, the framework is compatible with other emerging technologies such as fluid antennas.  

\subsubsection{Next Generations of Multiple Access}
We consider only treating interference as noise (TIN) in this paper. However, the framework is suitable for complex multiple-access (MA) schemes, utilizing successive interference cancellation (SIC). These MA schemes include, for instance, RSMA \cite{mao2022rate} and non-orthogonal MA (NOMA) \cite{ding2017survey}. To this end, the framework in \cite{soleymani2022rate} for RSMA and in \cite{soleymani2022noma} for NOMA-based schemes can be insightful. 

\subsubsection{Mobile-edge Computing} Latency minimization is a primary objective in applications related to mobile-edge computing. In these applications, the optimization variables include not only transmission parameters at the transmitters/RIS but also the number of bits offloaded to a computing center via a communication link \cite{bai2020latency, li2023min}. Our FMP framework in Section \ref{sec-ii-min} provides a powerful tool for developing latency-aware schemes for MU-MIMO RIS-aided mobile-edge computing systems operating under FBL coding.

\section{Summary and Conclusion}\label{sec-conclusion}
This paper proposed a unified framework for FMP problems, solving a pair of generic problems: a minimization and a maximization. FMP OPs are essential for wireless communication systems, since they include a wide range of practical OPs such as maximizing the SEE tradeoff metrics, weighted sum EE, the geometric mean of EE, minimizing sum delay, the geometric mean of delays, sum and maximum of MSE, to mention a few. In each generic OP, the OF and CFs are functions of any arbitrary and non-negative FFs. These generic problems typically cannot be tackled by traditional FP solvers, such as Dinkelbach-based algorithms. Indeed, Dinkelbach-based algorithms can only solve single- and multiple-ratio FMP OPs, but they are unable to solve, e.g., the minimization of the sum delay or the maximization of the sum EE.

We extensively compared the performance of our general framework to Dinkelbach-based algorithms. First, we showed that the proposed framework is more general than Dinkelbach-based algorithms, since it can solve OPs that Dinkelbach-based algorithms cannot. Second, our framework yields a single-loop iterative algorithm. By contrast, Dinkelbach-based algorithms require two loops for solving single- or multiple-ratio FMP OPs, such as max-min EE or maximizing the GEE. Thus, the implementation complexity of our framework is lower than that of its traditional counterparts.

In addition to developing the unified FMP solver, we proposed resource allocation schemes for MU-MIMO URLLC BCs as a practical example of the applications of our general optimization framework in wireless communication systems. We solved four practical OPs for each of the minimization and maximization solvers proposed in Section \ref{secii-new}. Specifically, we solved the minimization of sum delay, the geometric mean of delays, and the sum and maximum of MSE as practical examples of the generic minimization problems formulated in Section \ref{sec-ii-min}. Moreover, we solved the maximization of SEE tradeoff metrics, sum and geometric EE, and the weighted sum SINR by leveraging the framework in Section \ref{sec+ii}. We also showed how one can utilize our FMP solver in RIS-aided systems as another practical example of MU-MIMO systems. Additionally, we provided numerical results, showing that RIS improves the sum delay, SEE tradeoff and sum EE.

\appendices 
\section{Preliminaries on Majorization Minimization}\label{app-mm}
MM is an iterative method, converging to an SP of the original OP. The following lemma briefly states the main concept of MM. 
\begin{lemma}[\!\!\cite{aubry2018new}]\label{lem-mm}
Consider the non-convex problem
\begin{align}\label{app-ar-opt}
 \underset{\{\mathbf{X}\}\in\mathcal{X}
 }{\max}\,\,  & 
  \phi_0\left(\{\mathbf{X}\}\right) &
 \,\,\,\,\, \,\, \text{\em s.t.}   \,\,\,\,\,&  \phi_i\left(\{\mathbf{X}\}\right)\geq0,&\forall i,
 \end{align}
where $\{\mathbf{X}\}$ is defined as in \eqref{(1)}, $\mathcal{X}$ is a convex set, and $\phi_i\left(\{\mathbf{X}\}\right)$ is a non-concave function.
An SP of \eqref{app-ar-opt} can be obtained by iteratively solving 
\begin{align}\label{app-ar-opt2}
 \underset{\{\mathbf{X}\}\in\mathcal{X}
 }{\max}\,\,  & 
  \tilde{\phi}_0^{(z)}\left(\{\mathbf{X}\}\right) &
 \,\,\,\,\, \,\, \text{\em s.t.}   \,\,\,\,\,&  \tilde{\phi}_i^{(z)}\left(\{\mathbf{X}\}\right)\geq0,&\forall i,
 \end{align}
where $z$ is the iteration index, and $\tilde{\phi}_i$ is a surrogate function that has to satisfy the conditions:
\begin{itemize}
\item $\tilde{\phi}_i^{(z)}\left(\{\mathbf{X}^{(z-1)}\}\right)=\phi_i\left(\{\mathbf{X}^{(z-1)}\}\right)$. 

\item $\frac{\partial \tilde{\phi}_i^{(z)}\left(\{\mathbf{X}^{(z-1)}\}\right)}{\partial \mathbf{X}_n}
=\frac{\partial \phi_{i}\left(\{\mathbf{X}^{(z-1)}\}\right)}{\partial \mathbf{X}_n}$, $\forall n$

\item $\tilde{\phi}_i^{(z)}(\{\mathbf{X}\})\leq \phi_i(\{\mathbf{X}\})$ for $\{\mathbf{X}\}\in\mathcal{X}$,
\end{itemize}
for all $i$,
where $\{\mathbf{X}^{(z-1)}\}$ is the solution of \eqref{app-ar-opt2} at the $(z-1)$-st iteration. 

\end{lemma}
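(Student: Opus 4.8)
The plan is to establish the three pillars underpinning any majorization--minimization (MM) convergence argument: (i) \emph{feasibility} is preserved across iterations, (ii) the objective sequence $\{\phi_0(\{\mathbf{X}^{(z)}\})\}_z$ is monotonically non-decreasing and hence convergent, and (iii) every limit point of $\{\mathbf{X}^{(z)}\}$ is a stationary (KKT) point of \eqref{app-ar-opt}. Throughout I would assume $\mathcal{X}$ is compact and that the $\phi_i$ together with the surrogates $\tilde{\phi}_i^{(z)}$ are continuously differentiable, and that a constraint qualification holds at the iterates, so that the surrogate maximizer admits KKT multipliers.

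First I would argue feasibility by induction. Starting from a feasible $\{\mathbf{X}^{(0)}\}$, the first bullet condition gives $\tilde{\phi}_i^{(z)}(\{\mathbf{X}^{(z-1)}\})=\phi_i(\{\mathbf{X}^{(z-1)}\})\geq 0$, so $\{\mathbf{X}^{(z-1)}\}$ is feasible for the surrogate problem \eqref{app-ar-opt2}; in particular \eqref{app-ar-opt2} is solvable. For its solution $\{\mathbf{X}^{(z)}\}$ we have $\tilde{\phi}_i^{(z)}(\{\mathbf{X}^{(z)}\})\geq 0$, and the third bullet ($\tilde{\phi}_i^{(z)}\leq\phi_i$) then yields $\phi_i(\{\mathbf{X}^{(z)}\})\geq\tilde{\phi}_i^{(z)}(\{\mathbf{X}^{(z)}\})\geq 0$, i.e. $\{\mathbf{X}^{(z)}\}$ is feasible for \eqref{app-ar-opt}. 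The monotone-ascent property then follows from the chain $\phi_0(\{\mathbf{X}^{(z)}\})\geq\tilde{\phi}_0^{(z)}(\{\mathbf{X}^{(z)}\})\geq\tilde{\phi}_0^{(z)}(\{\mathbf{X}^{(z-1)}\})=\phi_0(\{\mathbf{X}^{(z-1)}\})$, where the first inequality is the third bullet applied to $\phi_0$, the second is optimality of $\{\mathbf{X}^{(z)}\}$ for the surrogate (recall $\{\mathbf{X}^{(z-1)}\}$ is feasible for it), and the equality is the first bullet. Compactness of $\mathcal{X}$ and continuity of $\phi_0$ then force convergence of the objective sequence.

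The crux is part (iii). Take any convergent subsequence $\{\mathbf{X}^{(z_j)}\}\to\{\mathbf{X}^{\star}\}$. Writing the KKT system of the surrogate \eqref{app-ar-opt2} at its maximizer $\{\mathbf{X}^{(z_j)}\}$, the stationarity condition reads $\nabla\tilde{\phi}_0^{(z_j)}(\{\mathbf{X}^{(z_j)}\})+\sum_i\lambda_i^{(z_j)}\nabla\tilde{\phi}_i^{(z_j)}(\{\mathbf{X}^{(z_j)}\})\in-N_{\mathcal{X}}(\{\mathbf{X}^{(z_j)}\})$, with complementary slackness and $\lambda_i^{(z_j)}\geq 0$. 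I would then pass to the limit: the gradient-consistency condition (second bullet) identifies the surrogate gradients with the true gradients at the base point, so once the base point and the updated point are shown to coalesce this converts the surrogate KKT system into that of \eqref{app-ar-opt}. Boundedness of the multipliers (from the constraint qualification) lets me extract a further convergent subsequence $\lambda_i^{(z_j)}\to\lambda_i^{\star}\geq 0$, and closedness of the normal-cone map delivers $\nabla\phi_0(\{\mathbf{X}^{\star}\})+\sum_i\lambda_i^{\star}\nabla\phi_i(\{\mathbf{X}^{\star}\})\in-N_{\mathcal{X}}(\{\mathbf{X}^{\star}\})$ with $\lambda_i^{\star}\phi_i(\{\mathbf{X}^{\star}\})=0$, i.e. $\{\mathbf{X}^{\star}\}$ is a KKT point of \eqref{app-ar-opt}.

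The step I expect to be the main obstacle is establishing the \emph{asymptotic regularity} $\{\mathbf{X}^{(z)}\}-\{\mathbf{X}^{(z-1)}\}\to 0$, which is what lets the base point $\{\mathbf{X}^{(z_j-1)}\}$ and the updated point $\{\mathbf{X}^{(z_j)}\}$ share the common limit $\{\mathbf{X}^{\star}\}$ and thereby makes the gradient-consistency condition bite in the limit. In the absence of strong concavity of the surrogate this does not follow for free; the standard remedy is to combine the summable objective gains $\sum_z[\phi_0(\{\mathbf{X}^{(z)}\})-\phi_0(\{\mathbf{X}^{(z-1)}\})]<\infty$ with a lower bound on the per-step improvement in terms of $\|\{\mathbf{X}^{(z)}\}-\{\mathbf{X}^{(z-1)}\}\|$, or to appeal to a (mildly) regularized surrogate. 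As the statement is quoted from \cite{aubry2018new}, at this point I would invoke their argument for the remaining technical details rather than reproduce them.
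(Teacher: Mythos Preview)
The paper does not prove this lemma at all: it is stated as a cited result from \cite{aubry2018new} (note the bracketed citation attached to the lemma header), and the only surrounding text is a short explanatory remark restating the three surrogate conditions in words. There is therefore no ``paper's own proof'' to compare your attempt against.

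That said, your sketch is a perfectly standard and essentially correct MM convergence argument, and it is exactly the kind of proof one finds in the MM literature (including \cite{aubry2018new} and the survey \cite{sun2017majorization}). Your identification of asymptotic regularity $\{\mathbf{X}^{(z)}\}-\{\mathbf{X}^{(z-1)}\}\to 0$ as the only genuinely delicate step is accurate; most treatments handle it either by assuming a uniqueness/strong-concavity property of the surrogate maximizer or by invoking Zangwill-type global convergence theory for point-to-set maps. Since the lemma is quoted rather than proved here, your closing remark that you would defer the remaining technicalities to \cite{aubry2018new} is in fact precisely what the paper itself does.
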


Note that the function $\phi_i(\{{\bf X}\})$ and its surrogate function $\tilde{\phi}_i(\{{\bf X}\})$ have to be equal and have the same derivative at $\{{\bf X}^{(z)} \}$. Moreover, $\tilde{\phi}_i(\{{\bf X}\})$ has to majorize $\phi_i(\{{\bf X}\})$  for the entire domain, i.e., $\tilde{\phi}_i(\{{\bf X}\})\geq\phi_i(\{{\bf X}\})$ $\forall\{{\bf X}   \}\in \mathcal{X}$.

\section{Proof of Theorem \ref{th1-min}}\label{app-1-min}
Let us define the variables $u_{mi}\geq0$ and $t_{mi}>0$ for ensuring that $f_{mi}(\{{\bf X}   \})\leq u_{mi}^2$ and $g_{mi}(\{{\bf X}   \})\geq t_{mi}$, $\forall m,i$. 
In this case, we have $h_{mi}(\{{\bf X}   \})\leq \frac{u_{mi}^2}{t_{mi}}$, $\forall m,i$. Upon utilizing the auxiliary variables ${\bf t}=\{t_{mi}\}_{\forall m,i} $ and ${\bf u}=\{u_{mi}\}_{\forall m,i} $, we can rewrite \eqref{(1-min)} as
\begin{subequations}\label{(app-2-min)}
    \begin{align}
        \max_{\{{\bf X}   \}\in \mathcal{X}, {\bf t}, {\bf u}  } & \sum_{m=1}^{M_0}\frac{u_{m0}^{2}}{t_{m0}}
        &
        \text{s.t.}&\,\,\eqref{eq-2b-min},\eqref{eq-2d-min},\eqref{eq-2e-min}, 
        \\&&
        \label{eq-2c-min-main}        &  u_{mi}^2-f_{mi}(\{{\bf X}   \})\!\geq\! 0, 
        \,\, \forall m,i,\!\!
    \end{align}
\end{subequations}
where the notations are defined as in Theorem \ref{th1-min}. 
 {The constraint in \eqref{eq-2c-min-main} is not convex, not even when $f_{mi}(\{{\bf X}   \})$ is convex in $\{{\bf X}   \}$, since $u_{mi}^2$ is convex, instead of being concave. Thus, we employ the concave lower bound in Lemma \ref{lem=5} to convexify \eqref{eq-2c-min-main} as
\begin{multline}
    \label{eq-78-n}
    u_{mi}^2-f_{mi}(\{{\bf X}   \})\geq \\
    2u_{mi}^{(z)}u_{mi}
    -\left(u_{mi}^{(z)}\right)^2-f_{mi}(\{{\bf X}   \})\geq 0,\,\forall m,i.
\end{multline}
This approach is also known as the convex-concave procedure (CCP) 
and falls into the MM framework \cite{sun2017majorization}. We set $u_{mi}^{(z)}=\sqrt{f_{mi}(\{{\bf X}^{(z)}   \})}$ for all $m,i$, since we assume $f_{mi}(\{{\bf X}   \})\leq u_{mi}^2$, and the bound in \eqref{eq-78-n} is tighter when the equality holds. Upon substituting \eqref{eq-2c-min-main} with \eqref{eq-78-n} and setting $u_{mi}^{(z)}=\sqrt{f_{mi}(\{{\bf X}^{(z)}   \})}$ for all $m,i$, the proof of Theorem \ref{th1-min} is complete.}

\section{Proof of Lemma \ref{cor-min}}\label{app-prof-lem1}
 {Similar to the proof of Theorem \ref{th1-min}, we first define  variables $u_{mi}\geq0$ and $t_{mi}>0$, satisfying $f_{mi}(\{{\bf X}   \})\leq u_{mi}^2$ and $g_{mi}(\{{\bf X}   \})\geq t_{mi}$, $\forall m,i$. Hence, \eqref{(1-min)} can be rewritten as
\begin{subequations}\label{(app-3-min)}
    \begin{align}
        \max_{\{{\bf X}   \}\in \mathcal{X}, {\bf t}, {\bf u}  } & \sum_{m=1}^{M_0}\frac{u_{m0}^{2}}{t_{m0}}
        &
        \text{s.t.}&\,\, \eqref{eq-2b-min},\eqref{eq-2e-min},
        \\&&
        \label{eq-app-3b-min-main}        &  u_{mi}^2-f_{mi}(\{{\bf X}   \})\!\geq\! 0, 
        \,\, \forall m,i,\!\!
        \\&&&g_{mi}(\{{\bf X}\})\!-\!t_{mi}\geq 0, \, \forall m,i. \!\!
        \label{eq-app-3c-min-main}
    \end{align}
\end{subequations}
The OF of \eqref{(app-3-min)} is a convex function. Additionally, the constraints in \eqref{eq-2b-min} and \eqref{eq-2e-min}  are convex. Hence, when $f_{mi}(\{{\bf X}   \})$ and $g_{mi}(\{{\bf X}   \})$ are, respectively, non-convex and non-concave, the OP in \eqref{(app-3-min)} is non-convex due to \eqref{eq-app-3b-min-main} and \eqref{eq-app-3c-min-main}.
To ``convexify'' \eqref{eq-app-3b-min-main}, one can utilize the concave lower bound in Lemma \ref{lem=5} and the surrogate function $\tilde{f}_{mi}(\{{\bf X}   \})$, obeying the conditions stated in Lemma \ref{cor-min}, which results in
\begin{multline}\label{eq-80-n}
    u_{mi}^2-f_{mi}(\{{\bf X}   \})   \geq
    \\2u_{mi}^{(z)}u_{mi}-\left(u_{mi}^{(z)}\right)^2-\tilde{f}_{mi}(\{{\bf X}   \})\geq 0,\,\forall m,i.
\end{multline}
Note that we set $u_{mi}^{(z)}=\sqrt{f_{mi}(\{{\bf X}^{(z)}   \})}$ for all $m,i$, as explained in Appendix \ref{app-1-min}. Additionally, we can convexify \eqref{eq-app-3c-min-main} by leveraging the surrogate function $\tilde{g}_{mi}(\{{\bf X}   \})$, fulfilling the conditions stated in Lemma \ref{cor-min}. In this case, we can attain an SP of \eqref{(1-min)} since all the surrogate functions of \eqref{(2-min-cor)} satisfy the three conditions in Appendix \ref{app-mm}.
}
\section{Proof of Theorem \ref{th1}}\label{app-2}
Let us define the variables $t_{mi}\geq0$ so that $f_{mi}(\{{\bf X}   \})\geq t_{mi}^2$, $\forall m,i$. 
In this case, \eqref{(1)} can be equivalently rewritten as 
\begin{subequations}\label{(1)-sur}
    \begin{align}
        \max_{\{{\bf X}   \}\in \mathcal{X}, {\bf t}  } & \sum_{m=1}^{M_0}\frac{t_{m0}^2}{g_{m0}(\{{\bf X}   \})}
        &
        \text{s.t.}&\sum_{m=1}^{M_i}\frac{t_{mi}^2}{g_{mi}(\{{\bf X}   \})}\!\geq 0,\,  \forall i,\!\!\!
        \\ &&
        &\eqref{eq-3c},\eqref{eq-3d},
    \end{align}
\end{subequations}
which is still in a fractional format. 
Upon using \eqref{eq-43}, we have 
\begin{equation}
    \frac{t_{mi}^2}{g_{mi}(\{{\bf X}   \})}
     \geq 2a_{mi}^{(z)}t_{mi}- a_{mi}^{(z)^2}g_{mi}(\{{\bf X}\})\triangleq \tilde{h}_{mi}(\{{\bf X}   \}),
\end{equation}
where $a_{mi}^{(z)}=\frac{t_{mi}^{(z)}}{g_{mi}(\{{\bf X}^{(z)}\})}$, and $t_{mi}^{(z)}$ is a feasible point and can be chosen as 
\begin{equation}
    t_{mi}^{(z)}=\sqrt{f_{mi}(\{{\bf X}^{(z)}\})}.
\end{equation}
Upon inserting the surrogate function $\tilde{h}_{mi}(\{{\bf X}   \})$ into \eqref{(1)-sur}, we have the surrogate OP in \eqref{(2)}. Since   $\tilde{h}_{mi}(\{{\bf X}   \})$ fulfills the three conditions in Appendix \ref{app-mm}, we can obtain an SP of \eqref{(1)-sur} by iteratively solving \eqref{(2)}. Moreover, the solution of \eqref{(1)-sur} is equal to the solution of \eqref{(1)}, which proves the theorem.

\section{Proof of Lemma \ref{lem-max-sur}}\label{app-prof-lem2} 
In the general case, \eqref{(2)} is non-convex since $f_{mi}(\{{\bf X}   \})$ and $g_{mi}(\{{\bf X}   \})$, $\forall m,i,$ are not concave and convex, respectively.
To address this issue, we can employ the surrogate functions $\tilde{f}_{mi}(\{{\bf X}   \})$ and $\tilde{g}_{mi}(\{{\bf X}   \})$ obeying the properties mentioned in Lemma \ref{lem-max-sur}.  In this case, the resultant surrogate OP in \eqref{(2-sur)} is convex. Furthermore, the algorithm in Lemma \ref{lem-max-sur} converges to an SP of \eqref{(1)}, since all the surrogate functions in \eqref{(2-sur)} fulfill the conditions of the MM framework stated in Appendix \ref{app-mm}. 
\section{Useful Inequalities}\label{app-1}
In this Appendix, we provide the inequalities employed throughout the paper. 
\begin{lemma}\label{1}
    Consider function $h({\bf X}, {\bf Y})$, which is jointly convex in arbitrary matrices ${\bf X}$ and ${\bf Y}$. Then, we can employ the first order Taylor expansion to obtain a linear lower bound for $h({\bf X}, {\bf Y})$ as follows \cite{sun2017majorization}
    \begin{multline}\label{(3)}
        h({\bf X}, {\bf Y})\geq h({\bf X}^{(t)} , {\bf Y}^{(t)})
        \\
        + \mathfrak{R} \left \{
        \text{\em Tr}
        \left (
        \left. \frac{\partial h({\bf X}, {\bf Y})}{{\partial \bf X}}\right|_{({\bf X}^{(z)}, {\bf Y}^{(z)})} \left({\bf X}-{\bf X}^{(z)} \right)^H
        \right )
        \right \}
        \\
        + \mathfrak{R} \left \{
        \text{\em Tr}
        \left (
        \left. \frac{\partial h({\bf X}, {\bf Y})}{\partial {\bf Y}}\right|_{({\bf X}^{(z)}, {\bf Y}^{(z)})} \left({\bf Y}-{\bf Y}^{(z)} \right)^H
        \right )
        \right \},
    \end{multline}
    where ${\bf X}^{(z)}$ and ${\bf Y}^{(z)}$ are any feasible points.
\end{lemma}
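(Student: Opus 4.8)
The statement to prove is Lemma~\ref{1}, which asserts a linear (first-order Taylor) lower bound for a jointly convex function $h(\mathbf{X},\mathbf{Y})$ of two matrix arguments. The plan is to reduce the matrix case to the familiar scalar/vector first-order convexity inequality by vectorization, then translate the resulting real inner products back into the $\mathfrak{R}\{\mathrm{Tr}(\cdot)\}$ form used in the statement.

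First I would stack the two matrices into a single real parameter vector: write $\mathbf{z}=[\,\mathrm{vec}(\mathbf{X})^T,\mathrm{vec}(\mathbf{Y})^T\,]^T$, and if the entries are complex, further split real and imaginary parts so that $h$ becomes a function of a real vector $\mathbf{z}\in\mathbb{R}^d$. Joint convexity of $h$ in $(\mathbf{X},\mathbf{Y})$ is, by definition, convexity of $h$ as a function of $\mathbf{z}$ on the (convex) domain. Then I would invoke the standard first-order characterization of convexity: a differentiable convex function $h$ satisfies $h(\mathbf{z})\ge h(\mathbf{z}^{(z)})+\nabla h(\mathbf{z}^{(z)})^T(\mathbf{z}-\mathbf{z}^{(z)})$ for all feasible $\mathbf{z},\mathbf{z}^{(z)}$. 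This is the entire analytic content; everything else is bookkeeping.

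Next I would unpack the gradient term $\nabla h(\mathbf{z}^{(z)})^T(\mathbf{z}-\mathbf{z}^{(z)})$ into the matrix-derivative notation. Splitting $\nabla h$ into its $\mathbf{X}$-block and $\mathbf{Y}$-block, the block corresponding to $\mathbf{X}$ contributes $\sum_{p,q}\big(\partial h/\partial X_{pq}\big)(X_{pq}-X^{(z)}_{pq})$ summed appropriately over real and imaginary parts, which is precisely $\mathfrak{R}\{\mathrm{Tr}((\partial h/\partial\mathbf{X})|_{(\mathbf{X}^{(z)},\mathbf{Y}^{(z)})}(\mathbf{X}-\mathbf{X}^{(z)})^H)\}$ under the Wirtinger/CR-calculus convention used in the paper (where the real-part-of-trace pairing reproduces the real Euclidean inner product on the underlying real coordinates). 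The same holds for the $\mathbf{Y}$-block. Summing the two blocks plus the constant $h(\mathbf{X}^{(z)},\mathbf{Y}^{(z)})$ yields exactly \eqref{(3)}, completing the proof.

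The main obstacle is not the convexity inequality itself, which is textbook, but making the identification between the real gradient inner product and the $\mathfrak{R}\{\mathrm{Tr}(\cdot)\}$ expression fully rigorous when the matrices are complex-valued — i.e., pinning down the derivative convention so that $2\mathfrak{R}\{\mathrm{Tr}(\mathbf{A}^H\mathbf{B})\}$ (or $\mathfrak{R}\{\mathrm{Tr}(\cdot)\}$, depending on the normalization) corresponds to the real Euclidean pairing. Since the paper already adopts this convention consistently (and cites \cite{sun2017majorization}), I would state the convention explicitly once and then treat the translation as routine, so the proof can be kept to a few lines: vectorize, apply the first-order convexity inequality, and rewrite the gradient term.
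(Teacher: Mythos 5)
Your proposal is correct and follows essentially the same route the paper (implicitly) relies on: Lemma~\ref{1} is simply the first-order convexity/Taylor lower bound cited from \cite{sun2017majorization}, and your argument—vectorize to a real variable, apply the standard inequality $h(\mathbf{z})\ge h(\mathbf{z}^{(z)})+\nabla h(\mathbf{z}^{(z)})^{T}(\mathbf{z}-\mathbf{z}^{(z)})$, then rewrite the gradient pairing as $\mathfrak{R}\{\text{Tr}(\cdot)\}$ under the paper's complex-derivative convention—is exactly that reasoning made explicit, with the only caveat (the normalization of the complex matrix derivative) properly flagged.
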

Note that the result in Lemma \ref{1} can be extended to the case with more than two variables. In case of scalar and real variables, \eqref{(3)} simplifies to 
\begin{multline}\label{neq-1}
        h(x,y)\geq h(x^{(t)} , y^{(t)})
        +  \left. \frac{\partial h(x, y)}{{\partial x}}\right|_{(x^{(z)}, y^{(z)})} \left(x-x^{(z)} \right)
        \\
        +  \left. \frac{\partial h(x, y)}{{\partial y}}\right|_{(x^{(z)}, y^{(z)})} \left(y-y^{(z)} \right),
    \end{multline}
which can be used to derive useful inequalities such as 
\begin{align}\label{eq-43}
    \frac{x^2}{y}&\geq \frac{2\bar{x}}{\bar{y}}x-\frac{\bar{x}^2}{\bar{y}^2}y,
\end{align}
for real-valued $x$ and $y$.
Moreover, when $x$ is complex, we have
\begin{align}\label{eq-63}
    \frac{|x|^2}{y}&\geq \frac{2\mathfrak{R}\{\bar{x}^*x\}}{\bar{y}}-\frac{|\bar{x}|^2}{\bar{y}^2}y.
\end{align}

\begin{lemma}[\!\cite{soleymani2022improper}]\label{lem-2}
Consider arbitrary matrices ${\bf \Gamma}\in\mathbb{C}^{m\times n}$ and $\bar{\bf \Gamma}\in\mathbb{C}^{m\times n}$, and positive definite matrices ${\bf \Omega}\in\mathbb{C}^{m\times m}$ and $\bar{\bf \Omega}\in\mathbb{C}^{m\times m}$, where $m$ and $n$ are arbitrary natural numbers. Then, we have:
\begin{multline} 
\ln \left|\mathbf{I}+{\bf \Omega}^{-1}{\bf \Gamma}{\bf \Gamma}^H\right|\geq
 \ln \left|\mathbf{I}+{\bf \Omega}^{-1}\bar{{\bf \Gamma}}\bar{{\bf \Gamma}}^H\right|
\\-
\text{{\em Tr}}\left(
\bar{{\bf \Omega}}^{-1}
\bar{{\bf \Gamma}}\bar{{\bf \Gamma}}^H
\right)
+
2\mathfrak{R}\left\{\text{{\em Tr}}\left(
\bar{{\bf \Omega}}^{-1}
\bar{{\bf \Gamma}}{\bf \Gamma}^H
\right)\right\}\\
-
\text{{\em Tr}}\left(
(\bar{{\bf \Omega}}^{-1}-(\bar{{\bf \Gamma}}\bar{{\bf \Gamma}}^H + \bar{{\bf \Omega}})^{-1})^H({\bf \Gamma}{\bf \Gamma}^H+{\bf \Omega})
\right).
\label{lower-bound}
\end{multline}
\end{lemma}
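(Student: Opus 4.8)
The statement to prove is Lemma~\ref{lem-2}, the logarithm-determinant lower bound reproduced from \cite{soleymani2022improper}. The plan is to establish it by combining two well-known facts: the concavity of $\ln|\cdot|$ on the positive-definite cone, and a careful bookkeeping of the bilinear term $\mathbf{\Gamma}\mathbf{\Gamma}^H$. First I would rewrite the left-hand side as $\ln|\mathbf{\Gamma}\mathbf{\Gamma}^H + \mathbf{\Omega}| - \ln|\mathbf{\Omega}|$ and focus on lower-bounding the first term. Since $\ln|\mathbf{Z}|$ is concave in $\mathbf{Z}\succ 0$, its first-order Taylor expansion about $\bar{\mathbf{Z}} = \bar{\mathbf{\Gamma}}\bar{\mathbf{\Gamma}}^H + \bar{\mathbf{\Omega}}$ gives
\begin{equation}
\ln|\mathbf{Z}| \;\le\; \ln|\bar{\mathbf{Z}}| + \mathrm{Tr}\!\left(\bar{\mathbf{Z}}^{-1}(\mathbf{Z}-\bar{\mathbf{Z}})\right),
\end{equation}
but this is an \emph{upper} bound, so instead I would apply concavity in the other direction --- i.e.\ view the desired inequality as a consequence of a Bregman-type argument --- or, more cleanly, derive the bound directly from the matrix inequality that underlies the WMMSE/MM literature.

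The cleaner route, and the one I would actually carry out, is to start from the elementary fact that for any $\mathbf{A}\succ 0$ and conformable $\mathbf{B}$,
\begin{equation}
\ln|\mathbf{I}+\mathbf{A}^{-1}\mathbf{B}\mathbf{B}^H| \;=\; \max_{\mathbf{U}}\;\left[\ln|\mathbf{U}| - \mathrm{Tr}(\mathbf{U}\mathbf{E}) + m\right],
\end{equation}
where $\mathbf{E} = (\mathbf{I}+\mathbf{A}^{-1}\mathbf{B}\mathbf{B}^H)^{-1}$ is the associated MMSE matrix; this is the standard log-det/MSE identity. Evaluating the right-hand side at the suboptimal (fixed) choice dictated by the barred quantities yields a concave-in-$\mathbf{\Gamma}$ lower bound. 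Expanding $\mathbf{E}$ via the matrix inversion lemma and simplifying produces exactly the four terms on the right of \eqref{lower-bound}: the constant $\ln|\mathbf{I}+\mathbf{\Omega}^{-1}\bar{\mathbf{\Gamma}}\bar{\mathbf{\Gamma}}^H|$, the $-\mathrm{Tr}(\bar{\mathbf{\Omega}}^{-1}\bar{\mathbf{\Gamma}}\bar{\mathbf{\Gamma}}^H)$ term, the linear $2\mathfrak{R}\{\mathrm{Tr}(\bar{\mathbf{\Omega}}^{-1}\bar{\mathbf{\Gamma}}\mathbf{\Gamma}^H)\}$ term, and the quadratic penalty $-\mathrm{Tr}((\bar{\mathbf{\Omega}}^{-1}-(\bar{\mathbf{\Gamma}}\bar{\mathbf{\Gamma}}^H+\bar{\mathbf{\Omega}})^{-1})^H(\mathbf{\Gamma}\mathbf{\Gamma}^H+\mathbf{\Omega}))$. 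I would then verify tightness at $\mathbf{\Gamma}=\bar{\mathbf{\Gamma}}$, $\mathbf{\Omega}=\bar{\mathbf{\Omega}}$ by direct substitution, and check that the gradient with respect to $\mathbf{\Gamma}$ matches that of the original function there, so that the bound is MM-compatible.

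The main obstacle I anticipate is purely algebraic: correctly managing the Woodbury expansion of $(\bar{\mathbf{\Gamma}}\bar{\mathbf{\Gamma}}^H+\bar{\mathbf{\Omega}})^{-1}$ and tracking the Hermitian-vs-non-Hermitian placement of the inverses so that the cross term appears with a factor of $2\mathfrak{R}\{\cdot\}$ rather than a bare trace --- the complex-matrix setting means one must be careful not to silently assume $\mathrm{Tr}(\mathbf{M})$ is real. A secondary subtlety is confirming that the quadratic penalty term is genuinely concave (equivalently, that $\bar{\mathbf{\Omega}}^{-1}-(\bar{\mathbf{\Gamma}}\bar{\mathbf{\Gamma}}^H+\bar{\mathbf{\Omega}})^{-1}\succeq 0$, which follows since $\bar{\mathbf{\Gamma}}\bar{\mathbf{\Gamma}}^H\succeq 0$ implies $\bar{\mathbf{\Gamma}}\bar{\mathbf{\Gamma}}^H+\bar{\mathbf{\Omega}}\succeq\bar{\mathbf{\Omega}}$ and hence the inverses are ordered), so that $-\mathrm{Tr}((\cdots)(\mathbf{\Gamma}\mathbf{\Gamma}^H+\mathbf{\Omega}))$ is concave in $\mathbf{\Gamma}$. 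Since the statement is quoted verbatim from \cite{soleymani2022improper}, I would ultimately present the argument concisely, citing that reference for the detailed manipulations and including here only the identity, the suboptimal-$\mathbf{U}$ substitution, and the tightness/gradient-matching checks.
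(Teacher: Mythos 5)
You are proving a lemma that the paper itself does not prove: Lemma \ref{lem-2} is quoted verbatim from \cite{soleymani2022improper}, so there is no in-paper argument to compare against, and your proposal must stand on its own. Judged that way, your chosen route---the log-det/MSE (WMMSE-type) variational characterization evaluated at suboptimal ``barred'' parameters---is indeed the standard way to obtain exactly this bound, and the algebra does close: with the barred weight $\bar{\bf W}={\bf I}+\bar{\bf \Gamma}^H\bar{\bf \Omega}^{-1}\bar{\bf \Gamma}$ and the barred MMSE receiver $\bar{\bf F}=(\bar{\bf \Omega}+\bar{\bf \Gamma}\bar{\bf \Gamma}^H)^{-1}\bar{\bf \Gamma}$ one has $\bar{\bf F}\bar{\bf W}=\bar{\bf \Omega}^{-1}\bar{\bf \Gamma}$ and $\bar{\bf F}\bar{\bf W}\bar{\bf F}^H=\bar{\bf \Omega}^{-1}-(\bar{\bf \Omega}+\bar{\bf \Gamma}\bar{\bf \Gamma}^H)^{-1}$, which are precisely the coefficients of the cross term $2\mathfrak{R}\{\text{Tr}(\bar{\bf \Omega}^{-1}\bar{\bf \Gamma}{\bf \Gamma}^H)\}$ and of the quadratic penalty in \eqref{lower-bound}, while $\ln|\bar{\bf W}|-\text{Tr}(\bar{\bf W})+n$ produces the two constant terms. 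Your tightness and gradient checks, and your observation that $\bar{\bf \Omega}^{-1}-(\bar{\bf \Omega}+\bar{\bf \Gamma}\bar{\bf \Gamma}^H)^{-1}\succeq 0$, are correct. (Incidentally, this derivation yields $\bar{\bf \Omega}^{-1}$, not ${\bf \Omega}^{-1}$, inside the first logarithm on the right-hand side; the printed form appears to be a transcription artifact and coincides with the derived one at the barred point.)

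The genuine gap is in the variational identity you actually wrote down: you maximize only over the weight ${\bf U}$ while keeping ${\bf E}$ as the \emph{exact} MMSE matrix. Fixing ${\bf U}=\bar{\bf U}$ alone leaves a term of the form $\text{Tr}\bigl(\bar{\bf U}({\bf I}+{\bf \Gamma}^H{\bf \Omega}^{-1}{\bf \Gamma})^{-1}\bigr)$, which still contains the inverse of an \emph{unbarred} matrix; no application of the matrix inversion lemma turns this into an expression built only from the fixed inverse $(\bar{\bf \Omega}+\bar{\bf \Gamma}\bar{\bf \Gamma}^H)^{-1}$, a term linear in ${\bf \Gamma}$, and a term linear in ${\bf \Gamma}{\bf \Gamma}^H+{\bf \Omega}$. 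So the step ``expanding ${\bf E}$ via the matrix inversion lemma \ldots produces exactly the four terms'' fails as stated. The repair is standard and small: use the two-parameter characterization $\ln|{\bf I}+{\bf \Omega}^{-1}{\bf \Gamma}{\bf \Gamma}^H|=\max_{{\bf W}\succ 0,\,{\bf F}}\bigl\{\ln|{\bf W}|-\text{Tr}({\bf W}{\bf E}({\bf F}))+n\bigr\}$ with the fixed-receiver MSE ${\bf E}({\bf F})=({\bf I}-{\bf F}^H{\bf \Gamma})({\bf I}-{\bf F}^H{\bf \Gamma})^H+{\bf F}^H{\bf \Omega}{\bf F}$ (equivalently, first lower-bound via the exact-MMSE identity and then use ${\bf E}_{\text{MMSE}}\preceq{\bf E}(\bar{\bf F})$ together with $\bar{\bf W}\succ 0$), and substitute the barred pair $(\bar{\bf W},\bar{\bf F})$. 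Only after the receiver is fixed is the surrogate quadratic in ${\bf \Gamma}$ and linear in ${\bf \Omega}$, and the expansion sketched above then delivers \eqref{lower-bound} with equality and matching derivatives at $({\bf \Gamma},{\bf \Omega})=(\bar{\bf \Gamma},\bar{\bf \Omega})$, as required for MM compatibility.
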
 
\begin{lemma}[\cite{soleymani2024optimization}]\label{lem-3} 
 The following inequality holds for arbitrary matrices ${\bf \Gamma}\in\mathbb{C}^{m\times n}$ and $\bar{{\bf \Gamma}}\in\mathbb{C}^{m\times n}$ and positive semi-definite ${\bf \Omega}\in\mathbb{C}^{m\times m}$ and $\bar{{\bf \Omega}}\in\mathbb{C}^{m\times m}$
\begin{multline}
\label{eq10}
f\left({\bf \Gamma},{\bf \Omega}\right)=\text{\em Tr}\left({\bf \Omega}^{-1}{\bf \Gamma}{\bf \Gamma}^H\right)\geq 
2\mathfrak{R}\left\{\text{\em Tr}\left(\bar{\bf \Omega}^{-1}\bar{\bf \Gamma}{\bf \Gamma}^H\right)\right\}
\\-
\text{\em Tr}\left(\bar{\bf \Omega}^{-1}\bar{\bf \Gamma}\bar{\bf \Gamma}^H\bar{\bf \Omega}^{-1}{\bf \Omega}\right),
\end{multline}
where $m$ and $n$ are arbitrary natural numbers. 
\end{lemma}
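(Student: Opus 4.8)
The plan is to obtain \eqref{eq10} by a direct ``completing the square'' argument, which at the same time exhibits the right‑hand side as an admissible surrogate in the sense of Appendix \ref{app-mm}. Throughout I treat ${\bf \Omega}$ and $\bar{\bf \Omega}$ as positive definite, so that ${\bf \Omega}^{1/2}$, ${\bf \Omega}^{-1/2}$ and $\bar{\bf \Omega}^{-1}$ are well defined; the statement is understood in this sense, since ${\bf \Omega}^{-1}$ already appears on the left. By cyclic invariance of the trace, $f({\bf \Gamma},{\bf \Omega})=\text{Tr}({\bf \Omega}^{-1}{\bf \Gamma}{\bf \Gamma}^H)=\text{Tr}({\bf \Gamma}^H{\bf \Omega}^{-1}{\bf \Gamma})$, and the three terms on the right of \eqref{eq10} can likewise be rewritten with the factors of ${\bf \Gamma}$ and $\bar{\bf \Gamma}$ moved to whichever side is convenient.

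First I would introduce ${\bf E}={\bf \Omega}^{-1/2}{\bf \Gamma}-{\bf \Omega}^{1/2}\bar{\bf \Omega}^{-1}\bar{\bf \Gamma}$ and use the trivial fact $\text{Tr}({\bf E}^H{\bf E})\ge 0$. Expanding the four terms of this quadratic form and cancelling ${\bf \Omega}^{-1/2}{\bf \Omega}^{1/2}={\bf I}$ gives
\begin{align*}
0 \le \text{Tr}({\bf E}^H{\bf E}) = {}& \text{Tr}({\bf \Gamma}^H{\bf \Omega}^{-1}{\bf \Gamma}) - \text{Tr}({\bf \Gamma}^H\bar{\bf \Omega}^{-1}\bar{\bf \Gamma}) \\
& - \text{Tr}(\bar{\bf \Gamma}^H\bar{\bf \Omega}^{-1}{\bf \Gamma}) + \text{Tr}(\bar{\bf \Gamma}^H\bar{\bf \Omega}^{-1}{\bf \Omega}\bar{\bf \Omega}^{-1}\bar{\bf \Gamma}).
\end{align*}
Rearranging, $\text{Tr}({\bf \Gamma}^H{\bf \Omega}^{-1}{\bf \Gamma})\ge\text{Tr}({\bf \Gamma}^H\bar{\bf \Omega}^{-1}\bar{\bf \Gamma})+\text{Tr}(\bar{\bf \Gamma}^H\bar{\bf \Omega}^{-1}{\bf \Gamma})-\text{Tr}(\bar{\bf \Gamma}^H\bar{\bf \Omega}^{-1}{\bf \Omega}\bar{\bf \Omega}^{-1}\bar{\bf \Gamma})$. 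Since $\bar{\bf \Omega}^{-1}$ is Hermitian, $\text{Tr}(\bar{\bf \Gamma}^H\bar{\bf \Omega}^{-1}{\bf \Gamma})=\overline{\text{Tr}({\bf \Gamma}^H\bar{\bf \Omega}^{-1}\bar{\bf \Gamma})}$, so the first two terms on the right combine to $2\mathfrak{R}\{\text{Tr}({\bf \Gamma}^H\bar{\bf \Omega}^{-1}\bar{\bf \Gamma})\}=2\mathfrak{R}\{\text{Tr}(\bar{\bf \Omega}^{-1}\bar{\bf \Gamma}{\bf \Gamma}^H)\}$, while the last term equals $\text{Tr}(\bar{\bf \Omega}^{-1}\bar{\bf \Gamma}\bar{\bf \Gamma}^H\bar{\bf \Omega}^{-1}{\bf \Omega})$ by cyclicity; substituting these identities together with $\text{Tr}({\bf \Gamma}^H{\bf \Omega}^{-1}{\bf \Gamma})=\text{Tr}({\bf \Omega}^{-1}{\bf \Gamma}{\bf \Gamma}^H)$ reproduces \eqref{eq10} verbatim.

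An equivalent route, which connects the result to the MM machinery, is to note that $f({\bf \Gamma},{\bf \Omega})=\text{Tr}({\bf \Gamma}^H{\bf \Omega}^{-1}{\bf \Gamma})$ is jointly convex on $\{{\bf \Omega}\succ0\}$ — it is the sum over the columns of ${\bf \Gamma}$ of matrix fractional functions, each having an LMI‑representable (hence convex) epigraph via a Schur complement \cite{boyd2004convex} — and then to apply the first‑order global lower bound of Lemma \ref{1} at the expansion point $({\bf \Gamma},{\bf \Omega})=(\bar{\bf \Gamma},\bar{\bf \Omega})$; a short evaluation of $\partial f/\partial{\bf \Gamma}$ and $\partial f/\partial{\bf \Omega}$ returns exactly the right‑hand side of \eqref{eq10}, and this also shows the bound is tight with matching gradient at $(\bar{\bf \Gamma},\bar{\bf \Omega})$, i.e.\ that it qualifies as a surrogate in the sense of Appendix \ref{app-mm}; see also \cite{soleymani2024optimization}. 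The only delicate point in either approach is the bookkeeping of complex conjugates — separating the traces that are automatically real (products of Hermitian positive semidefinite factors) from those that genuinely require $2\mathfrak{R}\{\cdot\}$ — but the square‑completion form keeps this purely mechanical, so I do not anticipate a real obstacle; establishing joint convexity of the matrix fractional function is the only slightly nonelementary ingredient, and it is needed only for the alternative argument.
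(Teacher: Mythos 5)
Your argument is correct. Note that the paper does not prove Lemma \ref{lem-3} at all: it is quoted with a citation to \cite{soleymani2024optimization}, where the standard derivation is precisely your second route, namely that $f({\bf \Gamma},{\bf \Omega})=\text{Tr}({\bf \Omega}^{-1}{\bf \Gamma}{\bf \Gamma}^H)$ is jointly convex in $({\bf \Gamma},{\bf \Omega})$ on ${\bf \Omega}\succ 0$ (a sum of matrix fractional functions over the columns of ${\bf \Gamma}$), so the right-hand side of \eqref{eq10} is exactly its first-order Taylor expansion at $(\bar{\bf \Gamma},\bar{\bf \Omega})$ and Lemma \ref{1} gives the global lower bound; one can check that the constant terms combine so that equality and gradient matching hold at the expansion point, which is what makes the bound MM-admissible in the sense of Appendix \ref{app-mm}. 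Your primary argument via $\text{Tr}({\bf E}^H{\bf E})\ge 0$ with ${\bf E}={\bf \Omega}^{-1/2}{\bf \Gamma}-{\bf \Omega}^{1/2}\bar{\bf \Omega}^{-1}\bar{\bf \Gamma}$ is a genuinely different and more elementary route: it needs no convexity machinery, the algebra you display is correct (the cross terms indeed combine into $2\mathfrak{R}\{\text{Tr}(\bar{\bf \Omega}^{-1}\bar{\bf \Gamma}{\bf \Gamma}^H)\}$ because $\bar{\bf \Omega}^{-1}$ is Hermitian, and cyclicity gives the last term), and tightness at $({\bf \Gamma},{\bf \Omega})=(\bar{\bf \Gamma},\bar{\bf \Omega})$ is visible directly since ${\bf E}={\bf 0}$ there. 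You are also right to flag that ``positive semi-definite'' must be read as positive definite for ${\bf \Omega}$ and $\bar{\bf \Omega}$, since ${\bf \Omega}^{-1}$, $\bar{\bf \Omega}^{-1}$ and ${\bf \Omega}^{\pm 1/2}$ appear; with that reading, both of your routes are complete, and the square-completion one is arguably the cleaner self-contained proof.
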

\begin{lemma}\label{lem-gm}
The following inequality holds for real scalar variables $x_k\geq 0$ and $\bar{x}_k\geq 0$ for $k=1,\cdots,K$
\begin{equation}
    \prod_{k}x_k^{1/K} \leq \prod_{k}x_k^{1/K} + \sum_k \alpha_k (x_k-\bar{x}_k),
\end{equation}
where $\alpha_k=\frac{1}{K}\bar{x}_k^{\frac{1-K}{K}}\prod_{i\neq k}\bar{x}_i^{\frac{1}{K}}$, and $K>1$.
\end{lemma}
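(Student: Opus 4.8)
The plan is to read the inequality, after an evident typo is corrected, as the statement that the geometric mean $G(x)=\prod_{k}x_{k}^{1/K}$ on the nonnegative orthant lies below its supporting hyperplane at $\bar{x}=(\bar{x}_1,\dots,\bar{x}_K)$, with $\alpha_k=\partial G/\partial x_k$ evaluated at $\bar{x}$. (I take the first term on the right-hand side as $\prod_{k}\bar{x}_{k}^{1/K}$; as printed it reads $\prod_{k}x_{k}^{1/K}$, which cannot be intended, since the inequality would then reduce to $0\le\sum_{k}\alpha_k(x_k-\bar{x}_k)$, false whenever $x_k<\bar{x}_k$.) Two natural routes are available: establish concavity of $G$ and invoke the first-order characterization of concave functions, or argue directly from the weighted arithmetic--geometric mean (AM--GM) inequality. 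I would take the second route, as it is self-contained and also yields, with no extra work, the majorization properties for which Lemma \ref{lem-gm} is used.

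Concretely, assume $\bar{x}_k>0$ for every $k$ --- this is needed only for $\alpha_k$ to be finite (the exponent $\tfrac{1-K}{K}$ is negative for $K>1$) and it holds in every use of the lemma in the paper, where $\bar{x}_k$ is a strictly positive quantity such as a delay $L_k/r_k$. Write
\[
\prod_{k}x_{k}^{1/K}=\Bigl(\prod_{k}\bar{x}_{k}^{1/K}\Bigr)\prod_{k}\Bigl(\frac{x_k}{\bar{x}_k}\Bigr)^{1/K},
\]
and apply weighted AM--GM with all weights equal to $1/K$ to the nonnegative numbers $x_k/\bar{x}_k$ (valid for $x_k\ge0$ under the convention $0^{1/K}=0$):
\[
\prod_{k}\Bigl(\frac{x_k}{\bar{x}_k}\Bigr)^{1/K}\le\frac{1}{K}\sum_{k}\frac{x_k}{\bar{x}_k}=1+\frac{1}{K}\sum_{k}\frac{x_k-\bar{x}_k}{\bar{x}_k}.
\]
Multiplying through by $\prod_{k}\bar{x}_{k}^{1/K}$ and distributing over the sum, the coefficient of $(x_k-\bar{x}_k)$ becomes $\tfrac{1}{K}\bar{x}_k^{-1}\prod_{i}\bar{x}_i^{1/K}=\tfrac{1}{K}\bar{x}_k^{(1-K)/K}\prod_{i\ne k}\bar{x}_i^{1/K}=\alpha_k$, which is exactly the claimed bound.

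Finally I would record the facts that make this surrogate admissible for the MM scheme of Appendix \ref{app-mm}: the bound holds with equality at $x=\bar{x}$ (AM equals GM when all arguments coincide), and since its right-hand side is affine in $x$ with value and slope at $\bar{x}$ agreeing with those of $G$, its gradient matches $\nabla G(\bar{x})$ there --- these are the equality and derivative-matching conditions of Lemma \ref{lem-mm}, while the displayed inequality itself is the required majorization (upper bound), so all three conditions hold when minimizing geometric-mean objectives (Section \ref{sec-geo-de} and the geometric-mean-of-EE problem). I do not anticipate any genuine obstacle; the only points needing care are the standing assumption $\bar{x}_k>0$ so that $\alpha_k$ is well defined, and the boundary value $x_k=0$, both handled above. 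If one instead prefers the concavity route, negative semidefiniteness of $\nabla^2 G$ on the positive orthant is a short computation, after which the first-order inequality for concave functions finishes the argument.
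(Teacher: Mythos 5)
Your proof is correct, and you rightly flagged and repaired the typo in the display: the first term on the right-hand side must be $\prod_{k}\bar{x}_{k}^{1/K}$ (as printed the inequality is trivially false), which is also how the lemma is used downstream, e.g.\ the coefficients in \eqref{eq-gmd2} are evaluated at the previous iterate. Your route, however, is genuinely different from the paper's. The paper's proof is a one-liner: it observes that $\prod_{k}x_{k}^{1/K}$ is jointly concave on the nonnegative orthant and invokes the concave--convex procedure \cite{lipp2016variations}, i.e.\ the first-order (tangent-plane) upper bound of a concave function at $\bar{x}$, with $\alpha_k$ being the partial derivatives of the geometric mean there. You instead derive the bound in a self-contained way from weighted AM--GM applied to the ratios $x_k/\bar{x}_k$, which avoids asserting or verifying concavity, handles the boundary case $x_k=0$ explicitly, and simultaneously delivers the equality-at-$\bar{x}$ and gradient-matching conditions required by the MM framework of Lemma \ref{lem-mm}; the only price is the standing assumption $\bar{x}_k>0$, which the paper implicitly needs anyway for $\alpha_k$ to be finite (its stated hypothesis $\bar{x}_k\ge 0$ is too weak when $K>1$). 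Both arguments are valid: the paper's is shorter and leans on a standard convexity fact, yours is more elementary and makes the tangency of the surrogate explicit.
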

\begin{proof}
    Function $\prod_{k}x_k^{1/K}$ is jointly concave in ${\bf x}=\{x_1,x_2,\cdots,x_K\}$ for $x_k\geq 0$,  $\forall k$. Thus, we can employ the concave-convex procedure to obtain an upper bound for $\prod_{k}x_k^{1/K}$ \cite{lipp2016variations}.
\end{proof}

\begin{lemma}\label{lem-gm2}
For all real-valued $x_k\geq 0$ and $\bar{x}_k\geq 0$ for $k=1,\cdots,K$, we have
\begin{equation}
    \prod_{k}x_k^{2} \geq \prod_{k}\bar{x}_k^{2}+ \sum_k \alpha_k (x_k-\bar{x}_k),
\end{equation}
where $\alpha_k=2\bar{x}_k\prod_{i\neq k}\bar{x}_i^{2}$.
\end{lemma}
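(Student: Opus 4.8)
The plan is to recognize the claimed inequality as a first-order (tangent-hyperplane) lower bound for the function $\Pi(\mathbf{x}) = \prod_{k} x_k^{2}$ at the expansion point $\bar{\mathbf{x}} = (\bar{x}_1,\dots,\bar{x}_K)$, exactly paralleling the argument used for Lemma \ref{lem-gm}. First I would compute the gradient of $\Pi$: since $\frac{\partial}{\partial x_k}\prod_{j} x_j^{2} = 2 x_k \prod_{j\neq k} x_j^{2}$, evaluating at $\bar{\mathbf{x}}$ gives exactly $\alpha_k = 2\bar{x}_k \prod_{i\neq k}\bar{x}_i^{2}$. Hence the right-hand side of the claimed inequality is precisely $\Pi(\bar{\mathbf{x}}) + \nabla\Pi(\bar{\mathbf{x}})^{\!\top}(\mathbf{x}-\bar{\mathbf{x}})$, i.e. the linearization of $\Pi$ about $\bar{\mathbf{x}}$.

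The second step is to invoke the first-order Taylor lower bound of Lemma \ref{1} (in its real scalar form \eqref{neq-1}, extended to $K$ variables): for a jointly convex function the tangent hyperplane at any feasible point lies below the graph. Applying this to $\Pi$ on the convex domain $\{x_k \geq 0,\ \forall k\}$ with base point $\bar{\mathbf{x}}$ would immediately yield $\prod_k x_k^{2} \geq \prod_k \bar{x}_k^{2} + \sum_k \alpha_k (x_k - \bar{x}_k)$, which is the assertion. As a sanity check, equality holds at $\mathbf{x} = \bar{\mathbf{x}}$ and both sides share the same gradient there, so the bound automatically satisfies the first two majorization–minimization conditions of Appendix \ref{app-mm}.

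The main obstacle is the third ingredient: establishing that $\Pi(\mathbf{x}) = \prod_k x_k^{2}$ behaves like a jointly convex function on the region of interest (equivalently, that the tangent hyperplane is a global under-estimator there), which is what licenses the use of the convex–concave procedure / Lemma \ref{1}. I would attempt this either by examining the Hessian $\frac{\partial^2 \Pi}{\partial x_i \partial x_j}$ and checking its definiteness on $\{x_k\ge0\}$, or by trying to exhibit $\Pi$ as a composition or supremum of convex functions. If this global property turns out to be delicate, the fallback — which is all the MM iteration in \eqref{eq-gm-ee3} actually requires — is to restrict to the region of $\mathbf{x}$ relevant to the update, e.g. a neighborhood of $\bar{\mathbf{x}}$ together with the QoS constraints $r_k\ge r_k^{th}$, and verify the tangent-plane inequality there directly; concretely, after the substitution $t_k = x_k/\bar{x}_k$ the claim reduces to the single scalar inequality $\prod_k t_k^{2} \geq 1 + 2\sum_k (t_k-1)$, which can then be analyzed on its own.
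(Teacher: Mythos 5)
Your plan follows essentially the same route as the paper's own (one-line) proof: read the right-hand side as the tangent hyperplane of $\Pi(\mathbf{x})=\prod_k x_k^{2}$ at $\bar{\mathbf{x}}$ and then appeal to joint convexity of $\Pi$ on $\{x_k\ge 0\}$ via the first-order lower bound of Lemma \ref{1} (the paper simply asserts this convexity and invokes the convex--concave procedure). The step you yourself flag as the ``main obstacle'' is, however, a genuine gap, and it cannot be closed: $\prod_k x_k^{2}$ is convex in each $x_k$ separately but is \emph{not} jointly convex for $K\ge 2$. For $K=2$, the Hessian of $x_1^{2}x_2^{2}$ at $(1,1)$ has entries $2,4,4,2$ and hence eigenvalues $6$ and $-2$, so it is indefinite; equivalently, $\Pi=\exp\bigl(2\sum_k \log x_k\bigr)$ is the exponential of a \emph{concave} function, which need not be convex. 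As a consequence the tangent-plane argument fails, and in fact the inequality as stated is false in full generality: take $K=2$, $\bar{x}_1=\bar{x}_2=1$, $x_1=0$, $x_2=2$; then the left-hand side equals $0$ while the right-hand side equals $1+2(0-1)+2(2-1)=1$. Your own normalization makes this immediate: the reduced claim $\prod_k t_k^{2}\ge 1+2\sum_k(t_k-1)$ already fails at $t=(0,2)$. So the Hessian check you proposed would have exposed the problem, but as written neither your argument nor the paper's establishes the lemma as stated.

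What does survive is the local content: the two sides agree in value and gradient at $\bar{\mathbf{x}}$ (and the bound does hold globally along each coordinate line through $\bar{\mathbf{x}}$, by per-coordinate convexity), so the linearization is a legitimate first-order model but not a global minorant on the nonnegative orthant. Your fallback of restricting to a suitable region around $\bar{\mathbf{x}}$ (or otherwise constraining the iterates) is therefore not optional -- it is exactly what would be needed to make the minorization step feeding \eqref{eq-gm-ee3} rigorous, and it still needs to be carried out explicitly rather than sketched.
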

\begin{proof}
    The function $ \prod_{k}x_k^{2}$ is jointly convex in ${\bf x}=\{x_1,x_2,\cdots,x_K\}$ for $x_k\geq 0$, $\forall k$. Thus, we can employ the convex-concave procedure to obtain a lower bound for $\prod_{k}x_k^{1/K}$ \cite{lipp2016variations}.
\end{proof}
 {\begin{lemma}[\!\cite{soleymani2022rate}]\label{lem=5}
For all complex-valued $x_k$ and $\bar{x}_k$, we have
\begin{equation}
\sum_{k}|x_k|^2\geq \sum_{k}|\bar{x}_k|^2+2\sum_{k}\mathfrak{R}\left\{\bar{x}_k^{*}(x_k-\bar{x}_k)\right\}.
\end{equation}
\end{lemma}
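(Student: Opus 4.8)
The plan is to recognize that the claimed inequality is nothing more than the supporting-hyperplane property of the convex function $F(\{x_k\})=\sum_k|x_k|^2$ evaluated at the expansion point $\{\bar{x}_k\}$. First I would note that $F$ is a positive semidefinite quadratic form in the real and imaginary parts of $x_1,\dots,x_K$ and hence jointly convex. Then I would invoke the first-order lower bound \eqref{(3)} of Lemma \ref{1} with $h=F$: the constant term contributes $\sum_k|\bar{x}_k|^2$, and a short computation of the conjugate derivative, $\partial|x_k|^2/\partial x_k^{*}=x_k$, shows that the gradient term collapses to $2\sum_k\mathfrak{R}\{\bar{x}_k^{*}(x_k-\bar{x}_k)\}$, which is exactly the right-hand side of the lemma.

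For the appendix I would instead record the cleaner elementary argument, which sidesteps the matrix-Taylor machinery entirely. For each $k$ one expands the nonnegative quantity $|x_k-\bar{x}_k|^2=|x_k|^2-2\mathfrak{R}\{\bar{x}_k^{*}x_k\}+|\bar{x}_k|^2\ge 0$, rearranges to $|x_k|^2\ge |\bar{x}_k|^2+2\mathfrak{R}\{\bar{x}_k^{*}(x_k-\bar{x}_k)\}$ after using $2\mathfrak{R}\{\bar{x}_k^{*}x_k\}-2|\bar{x}_k|^2=2\mathfrak{R}\{\bar{x}_k^{*}(x_k-\bar{x}_k)\}$, and finally sums over $k$, since summation preserves the inequality. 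This is the complex, multivariate generalization of the scalar bounds \eqref{eq-43} and \eqref{eq-63} already used in the paper.

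There is essentially no technical obstacle: the statement is the convexity inequality for a separable sum of squared magnitudes. The only point needing a moment's care is the complex bookkeeping, namely confirming that the cross term is correctly written as $2\mathfrak{R}\{\bar{x}_k^{*}(x_k-\bar{x}_k)\}$ and that no spurious imaginary part remains; both follow at once from the identity $2\mathfrak{R}\{z\}=z+z^{*}$ applied to $z=\bar{x}_k^{*}x_k$. I would therefore present the per-term expansion as the proof and simply remark that it also follows from Lemma \ref{1}.
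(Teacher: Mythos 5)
Your proposal is correct. Note that the paper itself gives no proof of Lemma \ref{lem=5}; it is simply imported by citation from \cite{soleymani2022rate}, so there is no in-paper argument to compare against. Your elementary per-term expansion $|x_k-\bar{x}_k|^2=|x_k|^2-2\mathfrak{R}\{\bar{x}_k^{*}x_k\}+|\bar{x}_k|^2\ge 0$, rearranged and summed over $k$, is exactly the standard (and complete) justification, and your alternative reading of the bound as the first-order convexity inequality of Lemma \ref{1} applied to $\sum_k|x_k|^2$ is consistent with how the paper derives its other surrogate bounds such as \eqref{eq-43} and \eqref{eq-63}.
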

}
\section{Simulation Parameters}\label{app-sim}
\begin{figure}[t]
    \centering
      \includegraphics[width=.48\textwidth]{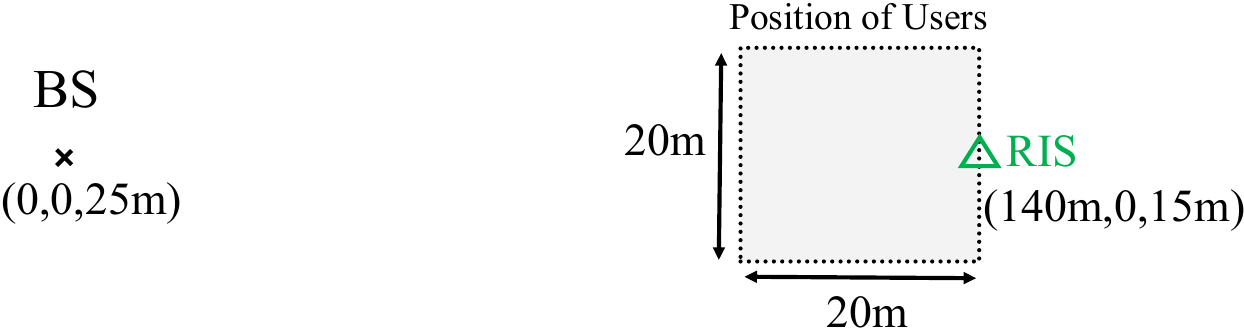}
    \caption{Simulation setup.} 
	\label{Fig-sim-set} 
\end{figure}

 \begin{table*}
\centering
\scriptsize
\caption{ {List of selected simulation parameters.}}\label{table-sim}
\begin{tabular}{ |c| c| c| c|c|c|c|c|c|c|c|c|c|c|c|c|c|}
\hline
 Parameter  & $\tau_{BS-RIS}$ & $\tau_{RIS-U}$ & $\tau_{BS-U}$ & PL$_{0,BS-RIS}$& PL$_{0,RIS-U}$& PL$_{0,BS-U}$&$\iota_{0,BS}$&$\iota_{0,RIS}$&Noise Power D. & Ch. Bw.& $N_{RIS} $ \\ 
\hline
Value & $2.2$ & $2.2$& $3.75$&$30$&$30$&$35.9$&$6$&$5.5$&$-174$ dB&$1.5$ MHz&20
 \\ 
\hline
\end{tabular}
\normalsize
\end{table*}

In this appendix, we present the simulation parameters that are chosen based on \cite{soleymani2022improper, soleymani2024optimization}. More specifically, we assume that the direct channels between the BS and the users, $\tilde{\mathbf{G}}_{k}$ for all $k$, have a non-line of sight (NLoS) nature, thus following the Rayleigh distribution. It means that each entry of $\tilde{\mathbf{G}}_{k}^\prime$ for all $k$ is derived from a zero-mean complex proper Gaussian distribution with unit variance, where we have $\tilde{\mathbf{G}}_{k}=\rho_{k}\tilde{\mathbf{G}}_{k}^\prime $, and $\rho_{k}$ is the channel attenuation coefficient due to the large-scale path loss of the direct link between the BS and user $k$. Additionally, the channels between the BS and the RIS, ${\bf G}$, as well as the channels between the RIS and users, ${\bf G}_k$ for all $k$, have a line of sight (LoS) component, hence following the Rician distribution. In other words, these channels are given by \cite{pan2020multicell}
\begin{align}
    {\mathbf{G}}&=\sqrt{\frac{\varrho}{1+\varrho}}\tilde{\mathbf{G}}^{\text{LoS}}+\sqrt{\frac{1}{1+\varrho}}\tilde{\mathbf{G}}^{\text{NLoS}},\\
    {\mathbf{G}}_k&=\sqrt{\frac{\varrho}{1+\varrho}}\tilde{\mathbf{G}}^{\text{LoS}}_k+\sqrt{\frac{1}{1+\varrho}}\tilde{\mathbf{G}}^{\text{NLoS}}_k,
\end{align}
where $\varrho$ is the Rician factor, which is assumed to be $3$, $\tilde{\mathbf{G}}^{\text{LoS}}$ (or $\tilde{\mathbf{G}}^{\text{LoS}}_k$) is the line-of-sight (LoS) component of  channel ${\bf G}$ (or ${\bf G}_k$), and $\tilde{\mathbf{G}}^{\text{NLoS}}$ (or $\tilde{\mathbf{G}}^{\text{NLoS}}_k$) is the non-LoS component of  channel ${\bf G}$ (or ${\bf G}_k$), which is assumed to follow  Rayleigh fading similar to $\tilde{\mathbf{G}}_{k}^\prime$. 
The LoS component $\tilde{\mathbf{G}}^{\text{LoS}}$ is deterministic and given by $\tilde{\mathbf{G}}^{\text{LoS}}=\mathbf{a}_{N_r}\left(\psi^{A}\right)\mathbf{a}^H_{N_t}\left(\psi^{D}\right)$, 
where $\psi^{A}\sim\text{Unif}[0,2\pi]$ is the angle of arrival, $\psi^{D}\sim\text{Unif}[0,2\pi]$ is the angle of departure, $N_t/N_r$ is the number of transmit/receive antennas in the corresponding link, and we have
\begin{align} 
\mathbf{a}_{N_r}\!\left(\psi^{A}\right)&\!=\!\!\left[\!1,e^{j\frac{2\pi d\sin(\psi^{A})}{\lambda}},
...,e^{j\frac{2(N_r-1)\pi d\sin(\psi^{A})}{\lambda}}\right],\!
\\\!\!
\mathbf{a}_{N_t}\!\left(\psi^{D}
\right)&\!=\!\!\left[\!1,e^{j\frac{2\pi d\sin(\psi^{D})}{\lambda}},
...,e^{j\frac{2(N_t-1)\pi d\sin(\psi^{D})}{\lambda}}\right]\!\!,
\end{align}
where $d/\lambda$ is chosen as 1/2 for simplicity \cite{pan2020multicell}. 
Furthermore, the large-scale path loss in dB is given by \cite[Eq. (59)]{soleymani2022improper}
\begin{equation}
\text{PL}=\text{PL}_0+\iota_0-10\tau\log_{10}\left(\frac{\varphi}{\varphi_0}\right),
\end{equation}
where $\text{PL}_0$ is the path loss at the reference distance $\varphi_0=1$m, $\varphi$ is the link distance, $\tau$ is the path-loss exponent, and $\iota_0$ is the antenna gain at the transmitter side. Note that the channel attenuation coefficient is $\rho=10^{\text{PL}/10}$. 

The system topology is depicted in Fig. \ref{Fig-sim-set}. We assume that the BS is located at $(0,0,25)$m, where $25$m is the height of the BS. We further assume that the users having a height of $1.5$m are uniformly distributed in the $20$m$\times 20$m gray square area of Fig. \ref{Fig-sim-set}, which is centered at $(130,0,1.5)$m. We also assume that there is only a single RIS with height $15$m, located at $(140,0,15)$m. Furthermore, the simulation parameters are chosen based on \cite{soleymani2022improper}.  More specifically, the path-loss exponent is equal to $2.2$ and $3.75$ for the LoS and NLoS channels, respectively. Moreover, $\iota_0=6$ dB for the BS and  $\iota_0=5.5$ dB for the RIS. The path loss at the reference distance of $1$m is set to $-35.9$ dB for the direct link between the BS and each user and  $-30$ dB for the links related to the RIS. The channel bandwidth is $1.5$ MHz, and the noise power density is $-174$ dBm/Hz. Additionally, we set $N_{{RIS}}=20$, $\epsilon=10^{-5}$ and $n=256$. The simulation parameters are summarized in Table \ref{table-sim}.

\bibliographystyle{IEEEtran}

\bibliography{ref}

\begin{IEEEbiography}[{\includegraphics[width=1in,height=1.25in,clip,keepaspectratio]{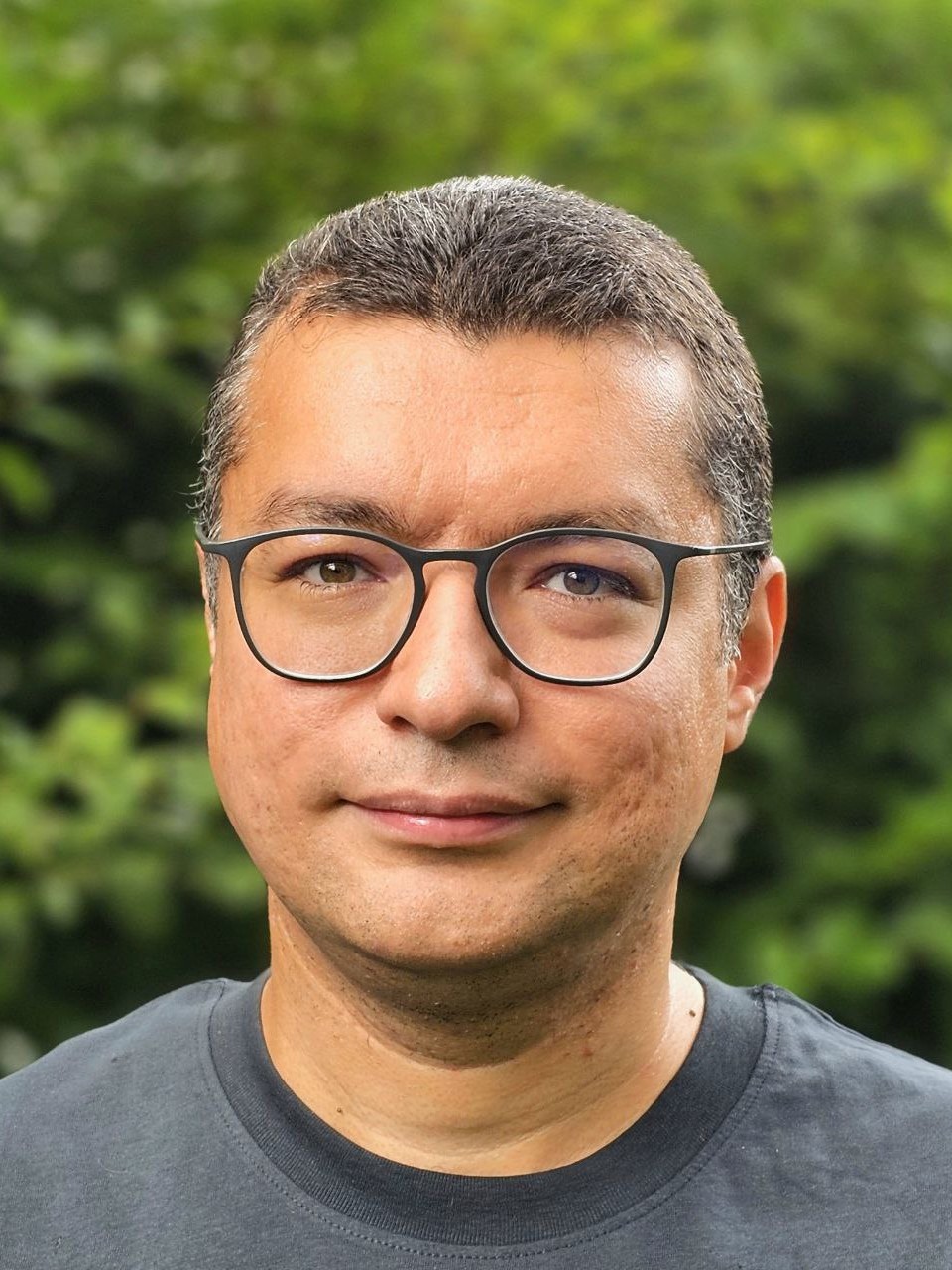}}]
{Mohammad Soleymani} was born in Arak, Iran. He received the B.Sc. degree from Amirkabir University of Technology (Tehran Polytechnic), the M.Sc. degree from Sharif University of Technology, Tehran, Iran, and the Ph.D. degree (with distinction) in electrical engineering from the University of Paderborn, Germany. He is currently an Akademischer Rat a. Z. with the Signal and System Theory Group at the University of Paderborn. He was a Visiting Researcher at the University of Cantabria, Spain. He serves on the editorial boards of ELSEVIER Signal Processing, EURASIP Journal on Wireless Communications and Networking, and Springer Journal of Wireless Personal Communications. His research interests include multi-user MIMO, wireless networking, numerical optimization, ultra-reliable low-latency communications (URLLC), and statistical signal processing.
\end{IEEEbiography}

\begin{IEEEbiography}[{\includegraphics[width=1in,height=1.25in,clip,keepaspectratio]{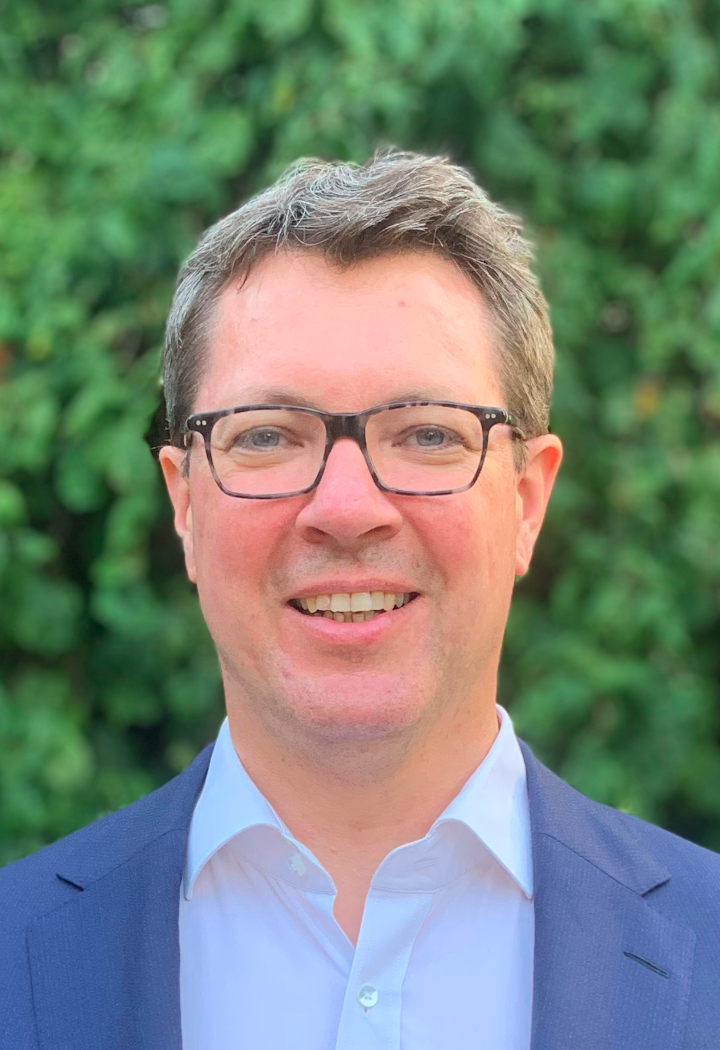}}]
{Eduard A. Jorswieck} (Fellow, IEEE) received the Ph.D. degree in electrical engineering and computer science from TU Berlin in 2004. From 2006 to 2008, he was with the Signal Processing Group, KTH Stockholm, as a Post-Doctoral Fellow and an Assistant Professor. From 2008 to 2019, he was the Chair for Communication Theory with TU Dresden. He is currently the Managing Director of the Institute of Communications Technology, the Head of the Chair for Communications Systems, and a Full Professor with Technische Universit\"at Braunschweig, Brunswick, Germany. He has published more than 180 journal articles, 15 book chapters, one book, three monographs, and some 300 conference papers. His main research interests are in the broad area of communications. He was a recipient of the IEEE Signal Processing Society Best Paper Award. He and his colleagues were also recipients of the Best Paper and Best Student Paper Awards at the IEEE CAMSAP 2011, IEEE WCSP 2012, IEEE SPAWC 2012, IEEE ICUFN 2018, PETS 2019, and ISWCS 2019. Since 2017, he has been the Editor-in-Chief of the EURASIP Journal on Wireless Communications and Networking. He is currently serving on the editorial boards of the IEEE TRANSACTIONS ON INFORMATION THEORY and IEEE TRANSACTIONS ON COMMUNICATIONS. He was on the editorial boards of the IEEE SIGNAL PROCESSING LETTERS, the IEEE TRANSACTIONS ON SIGNAL PROCESSING, the IEEE TRANSACTIONS ON WIRELESS COMMUNICATIONS, and the IEEE TRANSACTIONS ON INFORMATION FORENSICS AND SECURITY.
\end{IEEEbiography}

\begin{IEEEbiography}[{\includegraphics[width=1in,height=1.25in,clip,keepaspectratio]{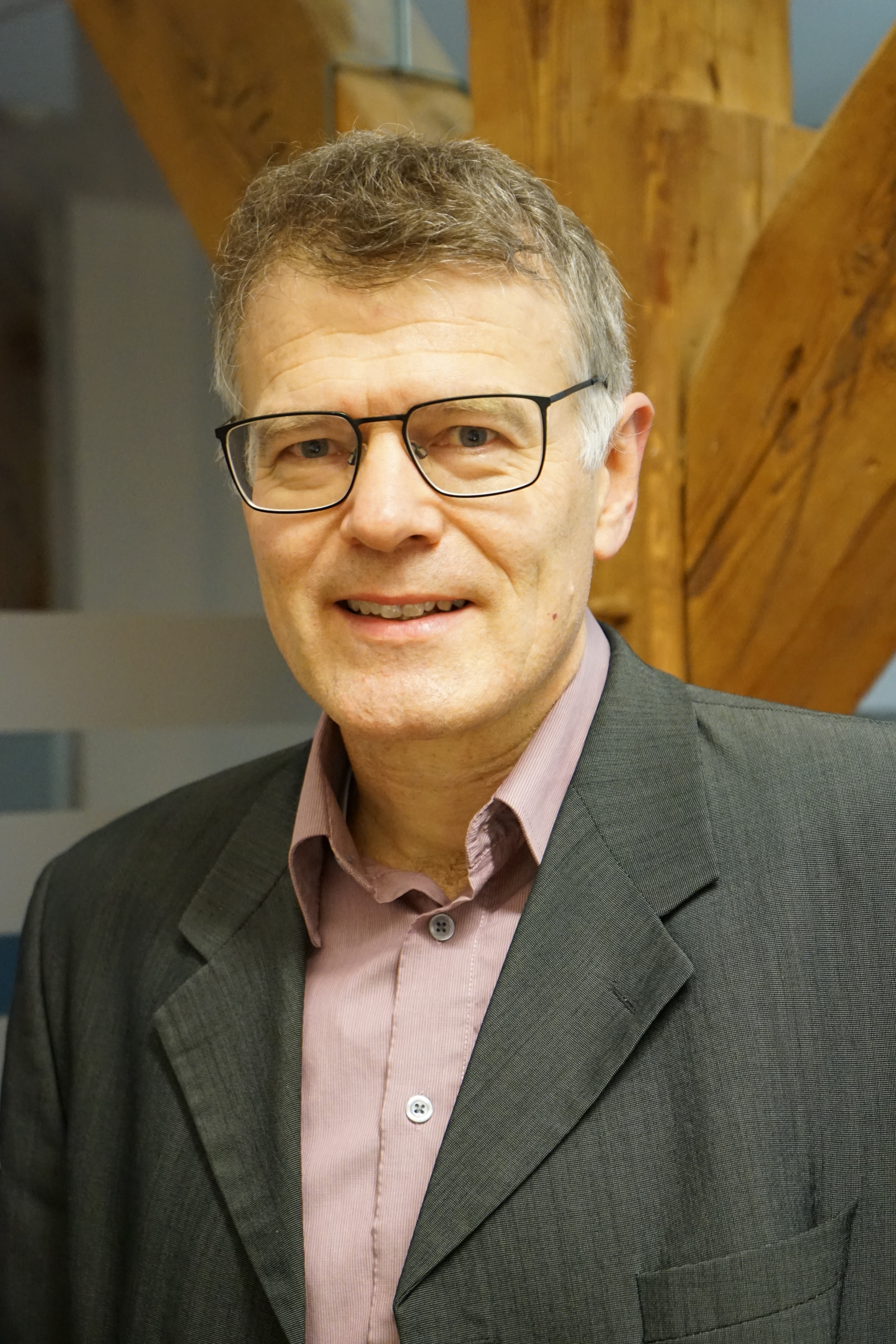}}]
{Robert Schober} (S'98, M'01, SM'08, F'10) received the Diplom (Univ.) and the Ph.D. degrees in electrical engineering from Friedrich-Alexander University of Erlangen-Nuremberg (FAU), Germany, in 1997 and 2000, respectively. From 2002 to 2011, he was a Professor and Canada Research Chair at the University of British Columbia (UBC), Vancouver, Canada. Since January 2012 he is an Alexander von Humboldt Professor and the Chair for Digital Communication at FAU. His research interests fall into the broad areas of Communication Theory, Wireless and Molecular Communications, and Statistical Signal Processing.

Robert received several awards for his work including the 2002 Heinz Maier Leibnitz Award of the German Science Foundation (DFG), the 2004 Innovations Award of the Vodafone Foundation for Research in Mobile Communications, a 2006 UBC Killam Research Prize, a 2007 Wilhelm Friedrich Bessel Research Award of the Alexander von Humboldt Foundation, the 2008 Charles McDowell Award for Excellence in Research from UBC, a 2011 Alexander von Humboldt Professorship, a 2012 NSERC E.W.R. Stacie Fellowship, a 2017 Wireless Communications Recognition Award by the IEEE Wireless Communications Technical Committee, and the 2022 IEEE Vehicular Technology Society Stuart F. Meyer Memorial Award. Furthermore, he received numerous Best Paper Awards for his work including the 2022 ComSoc Stephen O. Rice Prize and the 2023 ComSoc Leonard G. Abraham Prize. Since 2017, he has been listed as a Highly Cited Researcher by the Web of Science. Robert is a Fellow of the Canadian Academy of Engineering, a Fellow of the Engineering Institute of Canada, and a Member of the German National Academy of Science and Engineering.

He served as Editor-in-Chief of the IEEE Transactions on Communications, VP Publications of the IEEE Communication Society (ComSoc), ComSoc Member at Large, and ComSoc Treasurer. Currently, he serves as Senior Editor of the Proceedings of the IEEE and as ComSoc President.

\end{IEEEbiography} 

\begin{IEEEbiography}[{\includegraphics[width=1in,height=1.25in,clip,keepaspectratio]{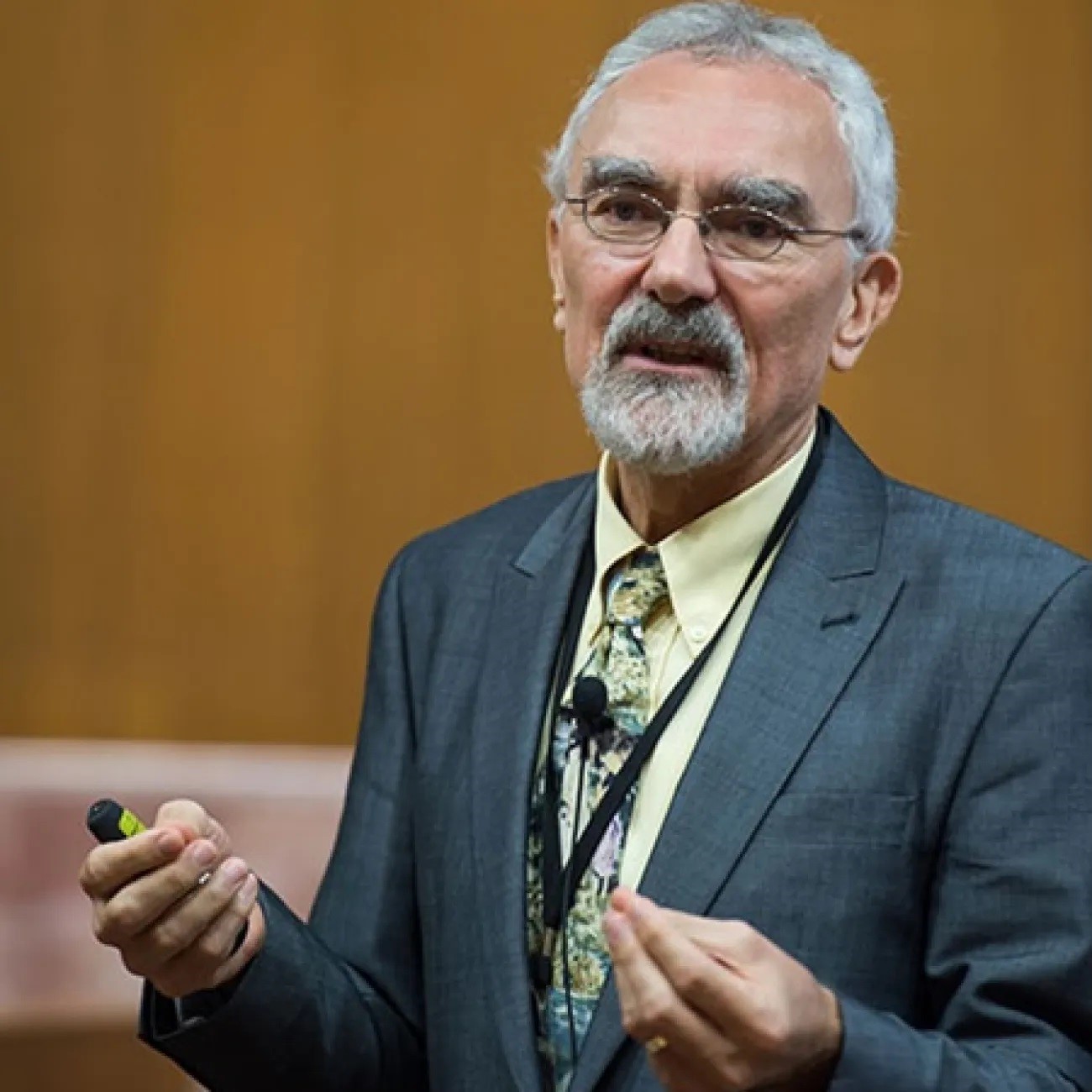}}]
{Lajos Hanzo}  is a Fellow of the Royal Academy of Engineering, FIEEE, FIET, Fellow of EURASIP and a Foreign Member of the Hungarian Academy of Sciences. He coauthored 2000+ contributions at IEEE Xplore and 19 Wiley-IEEE Press monographs. He was bestowed upon the IEEE Eric Sumner Technical Field Award.

\end{IEEEbiography}

\end{document}